\documentclass{article}
\usepackage{url}
\usepackage{color}
\usepackage{fix-cm}
\usepackage{subfigure}
\usepackage{multirow}
\usepackage[final]{graphicx}
\usepackage{pgf}
\usepackage{tikz}
\usepackage{amssymb,amsmath,amsthm}
\usepackage{authblk}
\usepackage{algorithm}
\usepackage{algorithmic}
\usepackage{appendix}
\usepackage{bm}
\usepackage{cite}
\usepackage{enumerate}

\setlength{\pdfpagewidth}{8.5in}
\setlength{\pdfpageheight}{11in}





\newenvironment{definition}[1][Definition]{\begin{trivlist}
\item[\hskip \labelsep {\bfseries #1}]}{\end{trivlist}}
\newenvironment{example}[1][Example]{\begin{trivlist}
\item[\hskip \labelsep {\bfseries #1}]}{\end{trivlist}}

\newtheorem{theorem}{Theorem}[section]

\newtheorem{lemma}[theorem]{Lemma}

\newtheorem{proposition}[theorem]{Proposition}
\newtheorem{corollary}[theorem]{Corollary}

\newenvironment{thm_app}[1]{\noindent\textbf{Theorem~\ref{#1}.}}{\par\addvspace{\baselineskip}}
\newenvironment{lem_app}[1]{\noindent\textbf{Lemma~\ref{#1}.}}{\par\addvspace{\baselineskip}}

\newenvironment{prop_app}[1]{\noindent\textbf{Proposition~\ref{#1}.}}{\par\addvspace{\baselineskip}}

\DeclareMathOperator*{\argmax}{arg\,max}

\newcommand{\onebr}{\emph{One-Shot $\alpha$-BR }}




\title{Approximate Equilibrium and Incentivizing Social Coordination}
\author{Elliot Anshelevich \qquad Shreyas Sekar  \\ Rensselaer Polytechnic Institute, Troy, NY 12180\\\texttt{eanshel@cs.rpi.edu, sekars@rpi.edu.}}

\begin{document}


\maketitle
\begin{abstract}
We study techniques to incentivize self-interested agents to form socially desirable solutions in scenarios where they benefit from mutual coordination. Towards this end, we consider coordination games where agents have different intrinsic preferences but they stand to gain if others choose the same strategy as them. For non-trivial versions of our game, stable solutions like Nash Equilibrium may not exist, or may be socially inefficient even when they do exist. This motivates us to focus on designing efficient algorithms to compute (almost) stable solutions like Approximate Equilibrium that can be realized if agents are provided some additional incentives. Our results apply in many settings like adoption of new products, project selection, and group formation, where a central authority can direct agents towards a strategy but agents may defect if they have better alternatives. We show that for any given instance, we can either compute a high quality approximate equilibrium or a near-optimal solution that can be stabilized by providing small payments to some players. We then generalize our model to encompass situations where player relationships may exhibit complementarities and present an algorithm to compute an Approximate Equilibrium whose stability factor is linear in the degree of complementarity. Our results imply that a little influence is necessary in order to ensure that selfish players coordinate and form socially efficient solutions.

\end{abstract}

\section{Introduction}
Historically, the term coordination game has been applied to social interactions with positive network externalities. Typically, they are used to represent scenarios like the \textit{Battle of the Sexes} wherein self-interested agents benefit if and only if they choose the same strategy. Such a model, however, does not fully capture real-life situations like the adoption of technologies or opinions and the selection of activities where agents may eschew coordination if their personal preference for an alternative is very strong. For instance, a company may not adhere to common standards if the benefit from using their own proprietary technology far outweighs the gains from coordinating. Bearing this in mind, we consider the following broader interpretation of coordination games as `a class of games where agents' utilities increase when more people choose the same strategy as them'. Notice that this does not preclude agents from having intrinsic preferences for strategies.

Many social and economic interactions fall within our framework (see~\cite{jackson2002formation, galeotti2010network} for specific applications of coordination games and different types of interactions) and it is not surprising that the kind of games we are interested in have appeared in various guises throughout literature. Researchers have studied similar kinds of games in several settings including opinion formation~\cite{chierichetti2013discrete}, information sharing~\cite{kleinberg2013information}, coalition formation~\cite{feldman2012hedonic} and party affiliation~\cite{balcan2009improved,hoefer2007cost}, largely focusing on the existence and quality of stable solutions.
 

Given the significance of social coordination games, a natural question that arises is: Do instances of such games result in stable outcomes that are comparable to the social optimum, the solution maximizing social welfare. The \textit{somewhat} negative answer to this question that we provide serves as the starting point for our work as it highlights the need for incentivizing agents to form solutions that are beneficial for society. We attempt to answer the above question by articulating two fundamental drawbacks of coordination games, which naturally lead to the issue of influencing players to form ``good" solutions.
\begin{enumerate}
\item \textbf{Coordination Failures.} Coordination games suffer from Coordination Failures~\cite{cooper1999coordination} that result in agents becoming trapped in inefficient equilibria despite the existence of high-welfare equilibria. These situations arise when agents settle for less risky alternatives if they anticipate that other agents may not coordinate with them on what are potentially ``high-risk, high-reward" solutions. As an example, consider $N$ independent but complementary firms, each with a distinct preferred location, deciding on where to locate. Suppose that each company receives unit utility for choosing their favorite location and one more unit for every additional firm that choose to locate in the same area. Clearly an optimum and stable solution is one where all companies choose the same location. However, the outcome where each chooses their preferred location is also stable as no company could unilaterally deviate and profit. There is a large body of theoretical and experimental evidence (see~\cite{kosfeld2004economic} for a survey) that supports the hypothesis that agents may coordinate on inefficient outcomes even when better alternatives exist. 

\item \textbf{Non-existence of Equilibrium.} In instances where player relationships are asymmetric (they receive different gains from coordinating), a Pure Nash Equilibrium\footnote{We shall henceforth refer to Pure Nash Equilibrium as just Nash Equilibrium} may not even exist. The following example illustrates one such simple instance with three players and three strategies.
\begin{example}
\label{ex:1}
There are 3 players $i_1,i_2,i_3$. Player $i_j$ receives a utility of $\sqrt{2}$ if she chooses strategy $j$ and unit utility from strategy $j+1$ (addition here is modulo 3). Also, player $i_j$ receives coordination gains of 1 when player $i_{j+1}$ chooses the same strategy as her. Note that relationships are asymmetric so player $i_{j+1}$ receives no benefit for choosing the same strategy as player $i_j$.\\
It is not hard to verify that no Nash equilibrium exists for this instance. The reader is asked to refer to the proof of Proposition~\ref{prop:nonexistence} for details.
\end{example}
\end{enumerate}
It is evident that even in fairly simple coordination games, it may be necessary to guide agents to form desirable solutions. From a central point of view, a high social welfare is the most important requirement, but at the same time it is necessary that selfish agents do not deviate from these centrally promoted solutions. A key algorithmic challenge is therefore, computing stable outcomes with good social welfare that can be formed by providing each agent a small incentive. It is towards this end that we identify approximate equilibria as our primary solution concept.

\textbf{Approximate Equilibrium and Forming Stable Solutions.} An $\alpha$-Approximate Equilibrium is an outcome in which no player can improve their utility by a factor more than $\alpha$ by unilaterally deviating. Observe from the definition that if each player is provided additional benefits equaling a fraction $(\alpha-1)$ times their original utility, then no player would wish to deviate and the Approximate Equilibrium becomes a Nash Equilibrium. Alternatively, approximate stability corresponds to the addition of a switching cost that captures the inertia players may have in changing strategies unless the added benefit is large enough. In addition to being a simple generalization of Nash Equilibrium, approximate equilibria are also easily implementable or enforceable in natural settings as opposed to non-deterministic generalizations like Mixed Nash Equilibria. 

We focus on computing approximate equilibria with high social welfare establishing that although coordination games may not admit Nash Equilibrium, the addition of a relatively small amount of inertia to the game causes stable solutions with high social welfare to exist. We also consider group deviations via approximate strong equilibrium~\cite{feldman2009approximate}, computing solutions which are resilient to deviations by sets of players. A solution $\bm{s}$ is an $\alpha$-Approximate Strong Equilibrium if in any deviation by a set of players from $\bm{s}$, at least one of the players' utilities cannot improve by a factor more than $\alpha$ as compared to her original utility. We may also refer to $\alpha$ as the stability factor for all approximately stable solutions.

\subsubsection*{Formalizing our Model of Coordination Games}
We begin by considering a non-transferable utility game with $N$ players and $m$ distinct strategies. We assume that players have access to all strategies, therefore in any outcome of the game, a player's strategy $s_i \in \{1,\ldots, m\}$. Generalizing our examples from the previous section, we not only permit preferred strategies, but allow each player to have asymmetric preferences over the strategy set. Formally, player $i$ derives a utility of $w_i^k$ if she chooses strategy $1\leq k \leq m$. 

With regards to the coordination aspect, we now propose a framework where the benefits of coordination between two players do not depend on externalities. Specifically, suppose that $w(i,j)$\footnote{We take $w(i,j)$ and $w(j,i)$ to refer to the same entity} is the total coordination benefit when players $i$ and $j$ choose the same strategy and that this benefit is divided among the two players. Formally, player $i$ derives a utility of $\gamma_{ij}w(i,j)$ for coordinating with $j$ and player $j$ receives $\gamma_{ji}w(i,j)$. For a given instance of our game, the values $(\gamma_{ij}, \gamma_{ji}) \forall (i,j)$ are fixed and add up to one. So given a strategy vector $\bm{s}=(s_1,\cdots,s_n)$, the utility of a player $i$ has two components, $$u_i(\bm{s})=w_i^{s_i} + \sum_{s_j=s_i}\gamma_{ij}w(i,j).$$ We parameterize any instance on a factor $\gamma$ that captures the maximum asymmetry that exists in relationships. Formally $\gamma=\max{\frac{\gamma_{ij}}{\gamma_{ji}}}$ over all pairs $(i,j)$. We term this parameter, the \textit{Maximum Relationship Imbalance} (MRI). Since we are concerned with the quality of stable solutions, we define a social welfare function $u(\bm{s})$ to be the sum of player utilities. Mathematically, $u(\bm{s})= \sum_i u_i(\bm{s}) = \sum_i  w_i^{s_i} + \sum_{s_i=s_j}w(i,j).$ We define the optimum to be the solution maximizing social welfare and compare the quality of our solutions to the optimum welfare $OPT$. 


\subsection{Our Contributions.~}
In this work, we consider the following well-motivated question: can we implement solutions with high social welfare by providing each player some incentive to not deviate? Our main results answer this question in the affirmative, and more importantly we show that this is possible for every instance using one of our two incentivizing schemes. First, we present an algorithm based on greedy dynamics to compute a good quality, almost-stable solution. 
\begin{itemize}
\item(Theorem~\ref{theorem_approx}) There is a polynomial-time algorithm to compute an $\alpha$-Approximate Equilibrium ($\alpha \in [1.618,2]$) with a social welfare that is comparable to the optimum.
\end{itemize}
An approximate equilibrium corresponds to an easily realizable solution in the presence of either incentives, switching costs or players with inertia. Our second main result considers a complementary notion of stability: the minimum total payment to be provided to players so that they do not deviate from a desired high quality solution. For any given instance, if the algorithm of Theorem~\ref{theorem_approx} returns an $\alpha$-Approximate Equilibrium with social welfare $\rho_\alpha\cdot OPT$, then we show
\begin{itemize}
\item(Theorem~\ref{theorem_payment}) The optimum solution can be stabilized with a total payment of $\frac{\rho_\alpha}{\alpha-1}OPT$. 
\end{itemize}

Informally, this result tells us that if we can provide certain players supplementary utility, then with a finite budget we can stabilize the optimum solution. Given any instance, we first run the algorithm of Theorem~\ref{theorem_approx} to compute an $\alpha$-Approximate Equilibrium with social welfare $\rho_\alpha \cdot$ OPT. If $\rho_\alpha$ is large, then we have an almost-stable solution with high social welfare; else, if $\rho_\alpha$ is small, then with a budget of $\approx \rho_\alpha \cdot $OPT, we can force even the optimum solution to be stable. Together our two theorems imply something much stronger: we can always \emph{either} compute almost-stable solutions with high welfare, \emph{or} directly stabilize a good quality solution with small payments. 

In the process, we show some additional computability results for equilibria in our game.
\begin{itemize}
\item For $m=2$, we can compute a Nash Equilibrium and Strong Equilibrium in polynomial time.

\item For $m=3$, we can compute a $1.414$-Approximate Equilibrium and this factor is tight. i.e., for $\alpha < 1.414$, $\alpha$-Approximate Equilibrium may not exist.

\item For $m \geq 3$, a $1.618$-Approximate Equilibrium and a $2$-Approximate Strong Equilibrium always exist and we give algorithms to compute both.

\item We obtain tight lower bounds for the social welfare of the solution computed by the algorithm of Theorem~\ref{theorem_approx} in terms of $\gamma$ (Maximum Relationship Imbalance). When $\gamma$ is not too large, we show that this social welfare is comparable to the optimum. For instance, if relationships are not too asymmetric and $\gamma=2$, we can compute a $2$-Approximate Equilibrium whose social welfare is always at least forty percent of OPT and a $1.618$-Approximate Equilibrium whose social welfare is one-third of OPT. If $\gamma < 2$, then we can do much better. Table~\ref{tab:socwel} captures the social welfare of the solution returned by our algorithm for different values of $m$ and $\gamma$.
\end{itemize}

\textbf{Existence.} In general, Nash Equilibrium need not exist when $m \geq 3$. However, we identify sufficient conditions that encapsulate a broad class of Social Coordination Games (even when reward sharing is asymmetric) that guarantee the existence of a Nash Equilibrium. We do this by exhibiting a novel ordinal potential function for games where the benefits of coordination among players are closely correlated across relationships. This allows the mechanism designer to identify or create settings where repeated best-response by players always converges to a stable solution. We also show that Strong Equilibrium need not exist when $m \geq 3$ even under simplifying assumptions like $\gamma_{ij} = 1 \; \forall (i,j)$.

\begin{table}
\begin{tabular} {c|l l | l l}
 & \multicolumn{2}{c|}{$\alpha=2$} & \multicolumn{2}{c}{$\alpha=1.618$} \\
 $\gamma$ & $m=4$ & $m \rightarrow \infty$ & $m=4$ & $m \rightarrow \infty$ \\
 \hline
 1 & $0.57$OPT & $0.5$OPT & $0.424$OPT & $0.35$OPT \\
 2 & 0.47 & 0.4 & 0.37 & 0.29\\
 10 & 0.25 & 0.15 & 0.18 & 0.12    \\
\end{tabular}
\caption{Performance of our algorithm: The social welfare of our computed solution as a fraction of the optimum welfare for different values of $m$ (Number of strategies) and $\gamma$ (Maximum Relationship Imbalance).}
\label{tab:socwel}
\end{table}

\subsection{Related Work. }

Our model of social coordination is closely linked to two well known classes of games: non-transferable utility coalition formation and party affiliation. We begin by surveying the substantial literature in both these fields and examine the ties between our model and the games in these frameworks.

Hedonic games model players forming coalitions such that a player's utility depends only on the members of her own coalition. Our model can be embedded in this setting by considering a fixed number of non-anonymous coalitions and a set of players who are anchored, i.e., constrained to join only one particular group. Much of the work in hedonic coalition formation has focused on identifying conditions for the existence of stable solutions~\cite{banerjee2001core,bogomolnaia2002stability}. It is known, for instance, that if relationships are symmetric then the existence of Nash Equilibrium can be guaranteed by means of a potential function. Augustine et al.~\cite{augustine2011dynamics} consider a model similar to ours with the coordination benefits being submodular and characterize settings where Nash Equilibrium always exist. Although our games do not admit Nash Equilibrium, our results imply the existence of a stability concept that is slightly weaker than Nash stability for a large class of hedonic games with asymmetry. 

Another line of work has focused on quantifying the inefficiency of stable solutions~\cite{branzei2009coalitional} and on the computation of stable solutions~\cite{aziz2012computing,darmann2012group,gairing2010computing}. Although there have been a number of positive algorithmic results, the focus on approximating both stability and optimality has been limited. We also remark here that many of the defection models considered in the hedonic games literature are relevant for settings with an unbounded number of anonymous coalitions and may not be suitable for our class of games. With regards to influencing, the recent work on stabilizing desired coalitions via supplementary payments (the Cost of Stability)~\cite{bachrach2009cost} is similar to our direct payments technique, albeit in a transferable utility setting. 

Party affiliation games are a generalization of pure coordination games where players wish to coordinate with friends and {\textit{anti-coordinate} with enemies. On the other hand, in our model the friction is provided by the interplay between a player's individual preference and coordination. Party affiliation games with only two strategies and symmetric relationships have received considerable attention as Nash Equilibrium always exists in these settings although computing it is PLS-Complete~\cite{christodoulou2006convergence}. 
As a positive algorithmic result, Bhalgat et al.~(\cite{bhalgat2010approximating}) gave a polynomial time algorithm to compute a $(3+\epsilon)$-Approximate Equilibrium for such games. Most similar to our work in terms of motivation is the paper by Balcan et al.~(\cite{balcan2009improved}) who consider a model where only a limited number of players follow the centrally advertised strategy. Even though we look at only one aspect of party affiliation, our model is quite general as we do not impose any restriction on the number of strategies or player relationships. 	

\textbf{Other models of coordination in strategic settings.}
There has been a renewed interest in studying coordination games in networks from a theoretical perspective with an emphasis on identifying the kind of equilibrium outcomes that emerge. In particular, work has focused on characterizing the effect of several parameters on the equilibrium outcome including the cost of forming links~\cite{goyal2005network,jackson2002formation}, network structure~\cite{chwe2000communication}, level of interaction~\cite{morris2000contagion} and incomplete information~\cite{galeotti2010network}. The literature on coordination games is too vast and the reader is asked to refer to the paper by~\cite{galeotti2010network} for a more detailed survey.

\section{Preliminaries and Warm-up Results}
\label{sec:prelim}
In this section, we address some fundamental questions regarding stability and optimality in our Social Coordination Game (SCG) in order to gain a better understanding of our model. We then move on to the most simple version of our game where players only have two different strategies to choose from. For this special case, we present an algorithm that always returns a Nash Equilibrium whose social welfare is half of the optimum. We begin by casting our game in graph theoretic framework to compare our problem to existing optimization problems.  

%

\textbf{Social Coordination as a Network Game.} 
We can view our model as a game played on a complete graph $G=(V,E)$ where the nodes include the players and $m$ additional anchored nodes (which are constrained to choose only one strategy). Each directed edge $(i,j)$ has a weight $\gamma_{ij}w(i,j)$, the utility player $i$ derives from coordinating with $j$. Recall that player $j$ derives a utility of $\gamma_{ji}w(i,j)$ from coordinating with player $i$ and a utility of $w_j^k$ for choosing strategy $1 \leq k \leq m$.

We now define some additional notation to understand the social welfare of solutions better. We divide the social welfare of any strategy vector $\bm{s}$ into two components. $u(\bm{s}) = A(\bm{s}) + P(\bm{s})$, where $A(\bm{s})$ refers to the utility players receive due to their intrinsic preference for the chosen strategy and $P(\bm{s})$, the contribution of players' relationships to social welfare, both under $\bm{s}$. Mathematically $A(\bm{s}) = \sum_i w_i^{s_i}$ and $P(s) = \sum_{s_i = s_j}w(i,j)$. We denote by $best(i)$, the maximum utility that player $i$ derives from any one strategy and use $A_T$ to express $\sum_{i} best(i)$. Similarly, we use $P_T$ to refer to $\sum_{(i,j)} w(i,j)$, the maximum possible contribution of players' relationships to social welfare. Finally, we can upper bound $OPT$ by $A_T + P_T$.

It is not hard to see that the problem of maximizing social welfare is equivalent to the problem of dividing the graph into $m$ clusters to maximize the weight inside the clusters. This in turn is equivalent to the problem of minimizing the weight of the edges going across different clusters, which is the popular \textit{Multiway Cut} problem. We infer therefore, that the problem of maximizing social welfare in any given instance of the Social Coordination Game is NP-Hard.


\begin{proposition}
For $m > 2$, computing the optimum solution for an instance of the Social Coordination Game is NP-Hard even for a fixed value of $\gamma$.
\end{proposition}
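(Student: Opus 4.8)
The plan is to make the reduction hinted at in the preceding discussion fully explicit, reducing from the \emph{Multiway Cut} problem. Recall the equivalence already observed: maximizing social welfare is the same as partitioning the node set of $G$ into $m$ clusters (one per strategy, with each anchored node pinned to its own cluster) so as to maximize the total edge weight retained inside clusters, which is equivalently the problem of minimizing the weight of the edges cut across clusters. Since Multiway Cut is well known to be NP-hard for every fixed number of terminals $m \geq 3$, it suffices to encode an arbitrary Multiway Cut instance as an instance of the Social Coordination Game with $m$ strategies.

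Concretely, given a weighted graph $H$ with terminals $t_1,\dots,t_m$, I would build an SCG instance by letting strategy $k$ correspond to terminal $t_k$ (so the $m$ anchored nodes are exactly the terminals) and letting each non-terminal vertex of $H$ become a player. For a player-player edge $\{i,j\}$ I set the coordination benefit $w(i,j)$ equal to the weight of $\{i,j\}$ in $H$ (and $w(i,j)=0$ when $H$ has no such edge), and for a player $i$ and terminal $t_k$ I set the intrinsic preference $w_i^k$ equal to the weight of the edge $\{i,t_k\}$ in $H$. Edges between two terminals are cut in every multiway cut and contribute only a fixed additive constant, so they may be dropped.

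The key verification is the objective correspondence: for every strategy profile $\bm{s}$ --- equivalently, every assignment of players to clusters with each terminal pinned to its own cluster --- the social welfare $u(\bm{s}) = \sum_i w_i^{s_i} + \sum_{s_i=s_j} w(i,j)$ equals the total edge weight of $H$ minus the weight of the edges cut by the induced partition. Hence a profile maximizes social welfare in the SCG if and only if the corresponding partition achieves a minimum multiway cut, which completes the reduction. This bookkeeping step is the only place that requires care, and it is the main (mild) obstacle; once the identity is confirmed the hardness is immediate.

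Finally, I would address the ``fixed $\gamma$'' clause by observing that the social welfare objective counts each coordinating pair's benefit $w(i,j)$ exactly once and is therefore completely independent of the sharing parameters $\gamma_{ij}$; the Maximum Relationship Imbalance $\gamma$ influences only individual utilities and the stability analysis, never the optimum value. Consequently one is free to choose the $\gamma_{ij}$ to realize any desired value of $\gamma$ (for instance symmetric sharing $\gamma_{ij}=\gamma_{ji}=\tfrac{1}{2}$ giving $\gamma=1$, or adjusting a single pair to hit a larger target) without altering the reduction, so the NP-hardness holds even when $\gamma$ is pinned to any fixed constant.
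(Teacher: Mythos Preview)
Your proposal is correct and follows exactly the approach the paper intends: the paper simply states the equivalence to Multiway Cut in the paragraph preceding the proposition and does not give a formal proof, so your write-up is a faithful and more explicit version of that sketch. Your observation that the welfare objective is independent of the sharing parameters $\gamma_{ij}$, and hence the reduction goes through for any fixed $\gamma$, is the right way to handle the ``fixed $\gamma$'' clause that the paper leaves implicit.
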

The optimization version of our problem was considered for undirected graphs in~\cite{langberg2006approximation} as the \textit{Multiway Uncut problem}. They exhibited a 0.8535 approximation algorithm for the same using an LP-rounding based approach. We extend their results to the directed version via a simple reduction. 

\begin{proposition}
There exists a polynomial-time algorithm to compute a solution of the SCG such that its social welfare is at least a fraction 0.8535 of $OPT$.
\end{proposition}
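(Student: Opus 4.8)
The plan is to reduce the welfare-maximization problem for the SCG to the (undirected) \emph{Multiway Uncut} problem of~\cite{langberg2006approximation} and then invoke their $0.8535$-approximation as a black box. Recall from the graph-theoretic view that an instance of the SCG corresponds to a complete directed graph on the $N$ players together with $m$ anchored ``terminal'' nodes, one per strategy, where the edge $(i,j)$ between two players carries weight $\gamma_{ij}w(i,j)$ and the edge from player $i$ to the terminal for strategy $k$ carries weight $w_i^k$. A strategy vector $\bm{s}$ is exactly a partition of the vertex set into $m$ parts in which each part contains precisely one terminal (player $i$ lands in the part of terminal $s_i$), and the social welfare $u(\bm{s}) = A(\bm{s}) + P(\bm{s})$ is precisely the total weight of the edges that are \emph{not} cut by this partition. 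Thus maximizing welfare is the \emph{directed} Multiway Uncut problem with the $m$ terminals forced to separate.

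First I would eliminate the directedness. For each unordered pair of players $\{i,j\}$, replace the two antiparallel directed edges of weights $\gamma_{ij}w(i,j)$ and $\gamma_{ji}w(i,j)$ by a single undirected edge whose weight is their sum, which by $\gamma_{ij}+\gamma_{ji}=1$ equals exactly $w(i,j)$; likewise I keep each player--terminal edge as an undirected edge of weight $w_i^k$. The point of this step is that under any partition an undirected edge is uncut if and only if both original directed edges were uncut (both endpoints in the same part), and in that case the undirected edge contributes $w(i,j)$, matching the combined directed contribution. Hence the feasible solutions and the value of the uncut objective are identical for the directed instance and the constructed undirected instance; in particular the two optima coincide and equal $OPT$.

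Next I would run the LP-rounding algorithm of~\cite{langberg2006approximation} on this undirected instance, treating the $m$ anchor nodes as the required terminals. Their guarantee yields a partition whose uncut weight is at least $0.8535$ times the optimum uncut weight, i.e.\ at least $0.8535\cdot OPT$. Finally I would read off the strategy vector by setting $s_i$ to be the strategy whose terminal lies in player $i$'s part, which is well defined because each part contains exactly one terminal; by the equivalence above this $\bm{s}$ satisfies $u(\bm{s}) \ge 0.8535\cdot OPT$, and the whole procedure runs in polynomial time since the reduction is linear in the input size and the algorithm of~\cite{langberg2006approximation} is polynomial.

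The only point requiring care---and it is the crux of the argument rather than a genuine obstacle---is verifying that the directed-to-undirected reduction preserves the objective \emph{exactly} (not merely approximately) and that the anchor nodes correctly play the role of the $m$ mandatory terminals, so that the optimum of the undirected instance is genuinely $OPT$ and the approximation ratio transfers with no loss.
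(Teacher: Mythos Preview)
Your proposal is correct and follows exactly the paper's approach: replace each pair of directed player--player edges by a single undirected edge of weight $w(i,j)$ (using $\gamma_{ij}+\gamma_{ji}=1$), keep the anchor edges, and invoke the Langberg et al.\ algorithm as a black box. The paper's own proof is a one-line version of precisely this reduction, so your write-up simply supplies the routine verification that the undirected objective coincides with $u(\bm{s})$ and hence the approximation ratio transfers without loss.
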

\begin{proof} Given an instance of the SCG, we create an undirected graph with weight $w(i,j)$ on edge $(i,j)$ and run the algorithm of Langberg et al.\end{proof}
\subsubsection*{Social Coordination Games with Two strategies}
The simplest form of Social Coordination is one where players get to choose between two distinct strategies, which has received considerable attention in literature. For this special case, we are able to guarantee the existence of a Nash Equilibrium by constructing one for any given instance. Our actual result is much stronger, we give a sequence of best-response moves from any given starting state that always results in a Nash Equilibrium.  By Proposition~\ref{prelim_poa} (shown in the next section), we are guaranteed that this solution is at least half as good as $OPT$ in terms of social welfare. 

\begin{algorithm}[htbp]
\caption{Computing Nash Equilibrium when $m=2$}
\label{alg_m2ne}
\algsetup{indent=2em}
\begin{algorithmic}[1]
\STATE Allow players to deviate from strategy $1$ to $2$ if it is their best response until no player wants to deviate from strategy $1$.
\STATE Repeat this for the other strategy until no player can improve her utility by deviating from strategy $2$.
\end{algorithmic}
\end{algorithm}
\begin{proposition}
The above algorithm returns a Nash Equilibrium from any given starting state $\bm{s_0}$ such that its social welfare is at least half of $OPT$.
\end{proposition}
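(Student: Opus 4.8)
The plan is to establish two things separately: that Algorithm~\ref{alg_m2ne} terminates with a Nash Equilibrium, and that this equilibrium has social welfare at least $OPT/2$. I would obtain the welfare bound for free by invoking Proposition~\ref{prelim_poa}, which bounds the Price of Anarchy and hence applies to \emph{any} Nash Equilibrium; so the entire burden of the argument is to show that the output is indeed a Nash Equilibrium, regardless of the starting state $\bm{s_0}$. Throughout I would use the fact that the coordination weights $w(i,j)$ are non-negative.

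The central observation is a monotonicity property within each phase. Consider Phase~1, in which players are permitted only to switch from strategy $1$ to strategy $2$. Whenever a player $i$ makes such a best-response move, her utility on strategy $2$ exceeds her utility on strategy $1$ at that moment. As the phase proceeds, the only further changes are additional players joining strategy $2$; since $w(i,j) \geq 0$, each such move can only increase $i$'s utility on strategy $2$ and decrease the utility she would obtain on strategy $1$. Hence once $i$ is on strategy $2$ she never again prefers strategy $1$ during Phase~1, so the set of players on strategy $2$ grows monotonically and the phase terminates after at most $N$ moves. By its stopping condition, at the end of Phase~1 no player currently on strategy $1$ wishes to switch to strategy $2$.

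The key step, and the part I expect to require the most care, is showing that Phase~2 does not destroy the guarantee secured at the end of Phase~1. In Phase~2 players move only from strategy $2$ to strategy $1$, so the set on strategy $1$ only grows while the set on strategy $2$ only shrinks. For a player $i$ who was content on strategy $1$ at the end of Phase~1, each subsequent $2 \to 1$ move by another player $j$ adds $\gamma_{ij}w(i,j) \geq 0$ to $i$'s utility on strategy $1$ and removes the same quantity from the utility $i$ would gain by deviating to strategy $2$; thus $i$ remains content throughout Phase~2. An argument identical to that of Phase~1 shows that any player who switches to strategy $1$ during Phase~2 never subsequently prefers strategy $2$, and that Phase~2 likewise terminates in at most $N$ moves.

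Combining these observations, at termination every player on strategy $2$ is content (the stopping condition of Phase~2) and every player on strategy $1$ is content (by the invariant preserved through Phase~2), so no player can improve via a unilateral deviation and the output is a Nash Equilibrium. Applying Proposition~\ref{prelim_poa} to this equilibrium then yields the claimed social welfare of at least half of $OPT$, completing the argument.
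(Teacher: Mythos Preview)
Your proposal is correct and follows essentially the same approach as the paper: establish that the output is a Nash Equilibrium via the monotonicity of the strategy sets within each phase, and then obtain the welfare bound directly from Proposition~\ref{prelim_poa}. Your write-up is in fact more careful than the paper's terse proof, since you explicitly treat termination and the case of players who migrate to strategy~$1$ during Phase~2, whereas the paper only spells out the invariant for players who were already on strategy~$1$ at the end of Phase~1.
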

\begin{proof} At the end of Step 1, no player would want to deviate from strategy $1$ to $2$ by definition. These players would not want to deviate after step 2 either as their utility only increases. By definition, players cannot improve their utility by deviating from strategy 2 and thereby we have a Nash Equilibrium.\end{proof}

A similar algorithm was used in~\cite{chierichetti2013discrete} to compute Nash equilibria in a cost minimization game with two preferences.

\section{Existence and Quality of Stable Solutions}
In the previous section, we showed that if every player has exactly two strategies to choose from, we can always compute an equilibrium in polynomial time. Unfortunately, we are not so lucky even if players have one additional strategy to choose from as Nash Equilibrium may not exist. However, in this section, we show sufficient conditions to guarantee the existence of equilibrium when $m \geq 3$. In the context of our central theme of incentivizing coordination, this is useful in settings where the central designer may not be able to provide any direct incentives but can exert some control over the parameters of the game so that natural game play (best-response dynamics) results in stable solutions. 

Although similar games like Party affiliation and Discrete opinion formation also admit potential functions~\cite{bhalgat2010approximating, chierichetti2013discrete}, our result does not follow from theirs as our potential function is not an exact potential function. Further, our conditions also capture scenarios where the benefits of coordination are not equally shared. First, we formalize the example given in the introduction regarding the non-existence of equilibrium.
\begin{proposition}
\label{prop:nonexistence}
There exist instances of the Social Coordination Game with three strategies where Nash Equilibria do not exist.
\end{proposition}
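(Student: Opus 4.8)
The plan is to take the concrete instance of Example~\ref{ex:1} and show that every strategy profile admits a profitable unilateral deviation, organized so as to avoid brute-forcing all $27$ profiles. First I would fix notation: index both players and strategies by $\{1,2,3\}$ and record player $j$'s payoff as $u_j(\bm{s}) = w_j^{s_j} + \mathbf{1}[s_j = s_{j+1}]$, where indices are taken mod $3$, $w_j^j = \sqrt{2}$, $w_j^{j+1} = 1$, and $w_j^{j+2} = 0$. The crucial structural feature I would emphasize is that player $j$'s utility depends only on her own choice $s_j$ and on the choice $s_{j+1}$ of the single ``downstream'' player $j+1$; the three directed coordination relationships $1 \to 2 \to 3 \to 1$ thus form a directed cycle.

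Next I would compute the best response of player $j$ as a function of $s_{j+1}$. Comparing the three options: playing $j$ yields $\sqrt{2} + \mathbf{1}[s_{j+1} = j]$, playing $j+1$ yields $1 + \mathbf{1}[s_{j+1} = j+1]$, and playing $j+2$ yields $\mathbf{1}[s_{j+1} = j+2]$. The only numeric facts needed are $1 < \sqrt{2} < 2$, from which I would extract two clean claims. (i) Strategy $j+2$ is never a best response, since it earns at most $1$ whereas strategy $j$ always earns at least $\sqrt{2} > 1$; hence in any Nash equilibrium every player uses either her favorite ($j$) or her second choice ($j+1$). (ii) Player $j$ strictly prefers $j+1$ to $j$ if and only if $s_{j+1} = j+1$ (the match then yields $1 + 1 = 2 > \sqrt{2}$); in every other configuration she strictly prefers $j$. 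All comparisons are strict, so best responses are unique and no tie can rescue an equilibrium.

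I would then convert (i) and (ii) into a parity obstruction. Encode each player's equilibrium action by $a_j \in \{0,1\}$, with $a_j = 0$ meaning $s_j = j$ and $a_j = 1$ meaning $s_j = j+1$ (legitimate by (i)). Since $s_{j+1} = j+1$ is equivalent to $a_{j+1} = 0$, claim (ii) becomes exactly $a_j = 1 - a_{j+1}$ for every $j$ cyclically. Composing around the $3$-cycle gives $a_1 = 1 - a_2 = a_3$ while simultaneously $a_3 = 1 - a_1$, forcing $a_3 = 1 - a_3$, which is impossible over $\{0,1\}$. This is precisely the non-$2$-colorability of an odd cycle, so no equilibrium exists.

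I do not expect a genuine obstacle, since the statement reduces to a finite verification; the only place to be careful is completeness of the best-response analysis --- explicitly ruling out the worst strategy $j+2$, and checking that the relevant inequalities ($\sqrt{2} > 1$ and $2 > \sqrt{2}$) are strict so that no boundary tie produces a spurious equilibrium. The role of $\sqrt{2}$ is merely to be some value strictly between $1$ and $2$, and I would remark that any such value yields the same frustration, which also explains why the construction requires three strategies arranged along an odd directed cycle rather than two.
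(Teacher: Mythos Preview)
Your proof is correct and, in fact, more carefully argued than the paper's. Both use the same three-player, three-strategy cyclic instance, but the arguments diverge after that. The paper's proof is a two-line utility bound: in any equilibrium every player must earn at least $\sqrt{2}$ (else deviate to her favorite strategy), and one checks that the only profile meeting this bound for all three players simultaneously is the fully separated one $(s_j = j)$, which is then seen not to be an equilibrium either since each player can follow her downstream neighbor for payoff $2 > \sqrt{2}$. Your route instead characterizes each player's unique best response as a function of the downstream player's action, encodes the resulting constraint as $a_j = 1 - a_{j+1}$, and obtains a contradiction from the non-2-colorability of the directed 3-cycle. The paper's argument is shorter but leaves more verification to the reader; yours is more structural and has the payoff of explaining \emph{why} the construction works --- it isolates the odd-cycle frustration and makes transparent that any value in $(1,2)$ would do in place of $\sqrt{2}$, and that the obstruction is inherently tied to having an odd number of strategies arranged cyclically.
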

\begin{proof}
\begin{figure}[!htbp]
\centering
\label{figure:asymmetricnoeq}
\includegraphics[width=.4\linewidth]{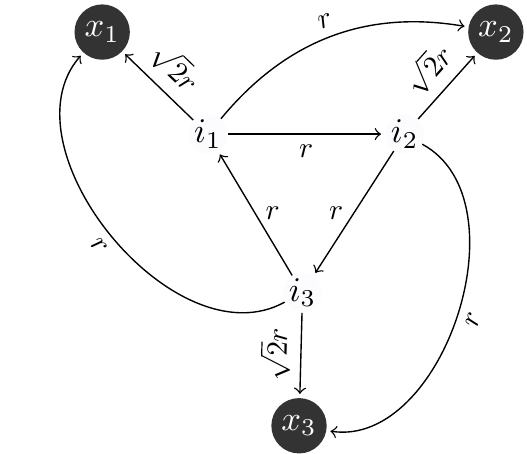}
\caption{Instance of the SCG with 3 strategies (represented by shaded anchor nodes) and three players ($i_1, i_2, i_3$). Player $i_j$ receives a utility of $\sqrt{2}r$ if she chooses strategy $x_j$ and $r$ utility from strategy $x_{j+1}$. Also, player $i_j$ receives coordination gains of $r$ when player $i_{j+1}$ chooses the same strategy as her.}
\end{figure}
In any equilibrium, each player receives a utility of at least $\sqrt{2}r$ or else they could deviate to their favorite strategy. It is not hard to see that in any strategy other than the one where all players are separated, at least one player's utility is strictly less than $\sqrt{2}r$ and hence she would deviate. Therefore, in every strategy at least one player can deviate and improve her utility by a factor of $\sqrt{2}$.
\end{proof}

Although existence cannot be guaranteed for the general case, the following natural assumption captures a broad class of SCGs for which Nash Equilibrium always exists.

\textbf{(Correlated Coordination Condition)} A given instance of the SCG is said to satisfy this condition if $\exists$ a vector $\bm{\gamma}=(\gamma_1, \cdots, \gamma_N)$ such that $\forall (i,j)$ with $w(i,j) >0$, we have $\gamma_{ij}=\frac{\gamma_i}{\gamma_i + \gamma_j}$ and $\gamma_{ji}=\frac{\gamma_j}{\gamma_i + \gamma_j}$.

Informally, each player has an intrinsic weight or an influence factor $\gamma_i$. When two players (say $i$ and $j$) coordinate, the benefit that any one player derives depends on her relative weight with respect to the other player. For any two players $i$ and $j$, if $\gamma_i \geq \gamma_j$, then player $i$ receives a greater benefit due to coordinating with player $j$. Notice that this definition does not impose any restriction on how asymmetric relationships can be. We now claim that games which obey this condition not only have Nash Equilibrium but in fact admit an ordinal (inexact) potential function~\cite{monderer1996potential} that ensures that best-response always converges to an equilibrium. 

\begin{theorem}
\label{theorem_ccexistence}
Social Coordination Games which obey the Correlated Coordination (CC) Condition admit a potential function. Therefore, best-response dynamics always converge to a Nash Equilibrium in such games. 
\end{theorem}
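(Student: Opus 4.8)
The plan is to exhibit an explicit potential function and then invoke finiteness of the strategy space. Since the theorem only asks for an ordinal potential, it suffices to find a $\Phi$ whose value strictly increases whenever some player makes a strictly improving unilateral move: best-response dynamics then cannot cycle, and because there are only finitely many profiles $\bm s$, play must terminate at a profile where no player can improve, i.e.\ a Nash Equilibrium. So the whole problem reduces to producing the right $\Phi$ and checking its defining property.

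The heart of the argument is guessing the correct $\Phi$. I would search within the natural family $\Phi(\bm s) = \sum_i c_i\, w_i^{s_i} + \sum_{s_i=s_j} c_{ij}\, w(i,j)$ and ask which coefficients make $\Delta\Phi$ share the sign of $\Delta u_i$ for a deviation by player $i$. Comparing the two differences term by term, the cleanest way to force agreement is to ask that $\Delta\Phi = \kappa_i\,\Delta u_i$ for a positive constant $\kappa_i$; matching the private term forces $c_i=\kappa_i$ and matching the edge terms forces $c_{ij}=c_i\gamma_{ij}$. But $c_{ij}$ is attached to the unordered edge and so must also satisfy $c_{ij}=c_j\gamma_{ji}$ when $j$ is the mover. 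Hence I need $c_i\gamma_{ij}=c_j\gamma_{ji}$ for every coordinating pair, which under the Correlated Coordination condition $\gamma_{ij}=\tfrac{\gamma_i}{\gamma_i+\gamma_j}$ collapses to $c_i\gamma_i=c_j\gamma_j$ and is solved uniformly by $c_i=1/\gamma_i$ (assuming, as is natural for influence factors, that every $\gamma_i>0$). This produces the candidate
\[ \Phi(\bm s) = \sum_i \frac{w_i^{s_i}}{\gamma_i} + \sum_{s_i=s_j}\frac{w(i,j)}{\gamma_i+\gamma_j}. \]

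With $\Phi$ fixed, the verification is a direct computation. When player $i$ switches from strategy $a$ to strategy $b$, only the terms touching $i$ change, and I would show that
\[ \Delta\Phi = \frac{w_i^{b}-w_i^{a}}{\gamma_i} + \sum_{j\neq i:\, s_j=b}\frac{w(i,j)}{\gamma_i+\gamma_j} - \sum_{j\neq i:\, s_j=a}\frac{w(i,j)}{\gamma_i+\gamma_j}, \]
where edges with $w(i,j)=0$ simply drop out (so the CC identity applies to every surviving term). Substituting $\tfrac{w(i,j)}{\gamma_i+\gamma_j}=\gamma_{ij}w(i,j)/\gamma_i$ yields exactly $\Delta\Phi=\tfrac{1}{\gamma_i}\,\Delta u_i$. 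Because $\gamma_i>0$, $\Delta\Phi$ and $\Delta u_i$ always have the same sign; in fact $\Phi$ is a weighted potential with weights $\gamma_i$, which is stronger than ordinal, so every improving deviation strictly raises $\Phi$ and the convergence conclusion follows.

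I expect the only genuine obstacle to be discovering this rescaling — in particular recognizing that player $i$'s private payoff should be divided by $\gamma_i$ while each edge is divided by $\gamma_i+\gamma_j$, and that the symmetry of the edge coefficient from both endpoints is precisely what the Correlated Coordination condition guarantees. This is also exactly the structural feature absent for general asymmetric shares, which is why equilibria can fail to exist there (Proposition~\ref{prop:nonexistence}). Once $\Phi$ is written down, the term-by-term difference and the finiteness/acyclicity argument are routine.
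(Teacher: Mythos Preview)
Your proposal is correct and follows essentially the same approach as the paper: you exhibit the identical potential $\Phi(\bm s)=\sum_i w_i^{s_i}/\gamma_i+\sum_{s_i=s_j} w(i,j)/(\gamma_i+\gamma_j)$ and verify $\Delta\Phi=\tfrac{1}{\gamma_i}\,\Delta u_i$, exactly as the paper does (the paper writes the equivalent $\Delta u_i=\gamma_i\,\Delta\Phi$). Your additional observation that this is in fact a \emph{weighted} potential, not merely ordinal, is a correct and slightly sharper statement than the paper's.
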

\begin{proof}
Given a social coordination game with weights $w(i,j), \forall (i,j)$ and $\bm{\gamma}=(\gamma_1, \cdots, \gamma_N)$, that obeys the CC condition, we claim that the following is an inexact potential function for this game. 
$$\Phi(\bm{s}) = \sum_{i}\frac{w_i^{s_i}}{\gamma_i} + \sum_{s_i=s_j}\frac{w(i,j)}{\gamma_i + \gamma_j}.$$

To prove this, we need to show that whenever a player makes an improving move, the potential function increases and vice-versa. Consider a strategy vector $\bm{s}$ and a player $i$ whose utility increases, when she deviates from $s_i$ to $s'_i$. The change in her utility is $u_i(s'_i, s_{-i}) - u_i(s_i, s_{-i}) = (w_i^{{s'_i}} + \sum_{s_j = s'_i} \gamma_{ij}w(i,j)) - (w_i^{{s_i}} + \sum_{s_j = s_i} \gamma_{ij}w(i,j))$, which is positive. As our games obey the CC condition, we can substitute $\gamma_{ij}$ with $\frac{\gamma_i}{\gamma_i+\gamma_j}$ and obtain,
$$ w_i^{{s'_i}} + \sum_{s_j = s'_i} \gamma_{i}\frac{w(i,j))}{\gamma_i + \gamma_j} - w_i^{{s_i}} - \sum_{s_j = s_i}  \gamma_i \frac{w(i,j)}{\gamma_i + \gamma_j} > 0.$$ This implies that, 
$$\displaystyle \gamma_i (\frac{w_i^{{s'_i}}}{\gamma_i} + \sum_{s_j = s'_i} \frac{w(i,j)}{\gamma_i + \gamma_j} - \frac{w_i^{{s_i}}}{\gamma_i} - \sum_{s_j = s_i}  \frac{w(i,j)}{\gamma_i + \gamma_j} ) > 0.$$ Since $\gamma_i$ is always positive, we conclude that the term inside the parenthesis must be positive as well. Therefore, the potential function strictly increases when players make a better-response move. The other direction is similar, if $(\frac{w_i^{{s'_i}}}{\gamma_i} + \sum_{s_j = s'_i} \frac{w(i,j)}{\gamma_i + \gamma_j} - \frac{w_i^{{s_i}}}{\gamma_i} - \sum_{s_j = s_i}  \frac{w(i,j)}{\gamma_i + \gamma_j}) > 0$, then we multiply both sides by $\gamma_i$ yielding the change in player $i$'s utility on the left hand side. This indicates that any deviation by a player which increases the potential function must be a strictly improving move. We conclude that these games admit a potential function. Therefore, best-response dynamics must always converge to a Nash Equilibrium. \end{proof}

Among other scenarios of interest, the CC conditions also captures games where player relationships are symmetric but players may derive any arbitrary utility from choosing a strategy($\gamma_i = \gamma_j \; \forall (i,j)$). Therefore, if the benefits of a relationship are split equally among the players, stable solutions always exist.

\subsubsection*{Quality of Stable Solutions}
Although Nash Equilibrium exists for special cases, it is not clear whether these stable solutions would be centrally desirable in terms of social welfare. We now formalize the notion of ``Coordination Failures" that we mentioned earlier and show that in general Coordination Games, both the best and worst Nash Equilibrium can be a factor $m$ away from the social optimum. In other words, even when Nash Equilibrium exists, all stable solutions may have a social welfare that is only a small fraction of $OPT$. This reinforces the need for some external influence in order to ensure solutions that are closer to $OPT$ in terms of social welfare. 

\begin{proposition}
\label{prelim_poa}
The Price of Anarchy (PoA) for Social Coordination Games over all instances is at most $m$.
\end{proposition}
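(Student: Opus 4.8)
The plan is to show that for every Nash equilibrium $\bm{s}$ and the welfare-maximizing solution $\bm{s}^*$ we have $OPT = u(\bm{s}^*) \leq m\cdot u(\bm{s})$, which is exactly the statement that the ratio of the optimum to any equilibrium is at most $m$. The engine of the argument is a simple averaging (pigeonhole) trick over the $m$ strategies, applied to the equilibrium deviation inequalities.

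First I would fix an arbitrary Nash equilibrium $\bm{s}$ and write down the stability condition one player at a time. Since player $i$ has no profitable unilateral deviation, for \emph{every} strategy $k \in \{1,\dots,m\}$ we have $u_i(\bm{s}) \geq u_i(k, s_{-i})$. Adding these $m$ inequalities gives $m\cdot u_i(\bm{s}) \geq \sum_{k=1}^m u_i(k, s_{-i})$. The next step is to evaluate that right-hand side explicitly. Expanding $u_i(k, s_{-i}) = w_i^{k} + \sum_{j:\, s_j = k} \gamma_{ij} w(i,j)$ and summing over $k$, the preference terms contribute $\sum_{k=1}^m w_i^{k}$, while in the coordination terms each other player $j$ falls into exactly one strategy class $k = s_j$, so the double sum telescopes into $\sum_j \gamma_{ij} w(i,j)$. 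This yields the clean bound $m\cdot u_i(\bm{s}) \geq \sum_{k=1}^m w_i^{k} + \sum_j \gamma_{ij} w(i,j)$.

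The remaining step is to lower-bound this right-hand side by $u_i(\bm{s}^*)$, player $i$'s utility in the optimum. Because all weights are non-negative, $\sum_{k=1}^m w_i^{k} \geq w_i^{s_i^*}$ (the optimal preference term is just one of the summands) and $\sum_j \gamma_{ij} w(i,j) \geq \sum_{j:\, s_j^* = s_i^*} \gamma_{ij} w(i,j)$ (restricting to the neighbors coordinating with $i$ under $\bm{s}^*$ only discards non-negative terms). Hence $m\cdot u_i(\bm{s}) \geq u_i(\bm{s}^*)$. Summing this over all players $i$ and using $u(\cdot)=\sum_i u_i(\cdot)$ gives $m\cdot u(\bm{s}) \geq u(\bm{s}^*) = OPT$, completing the bound $\text{PoA} \leq m$.

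I expect the only delicate point to be the telescoping in the coordination terms: the total incoming coordination benefit available to $i$ across all $m$ strategies equals $\sum_j \gamma_{ij} w(i,j)$ \emph{regardless} of how the remaining players are arranged, since each $j$ sits in exactly one strategy class under $\bm{s}$. This is what makes the $m$-fold sum of deviation inequalities collapse to something comparable with the optimum, and it is also where the non-negativity of the $w_i^{k}$ and $w(i,j)$ is used (with arbitrary negative preferences the bound need not hold). Everything else is routine rearrangement, so the proof is short once the averaging step is set up.
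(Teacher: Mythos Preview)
Your proof is correct and is essentially the same computation as the paper's: summing $u_i(k,s_{-i})$ over all $m$ strategies and observing that the coordination part telescopes to $\sum_j \gamma_{ij}w(i,j)$, which together with $\sum_k w_i^k$ dominates $u_i(\bm{s}^*)$. The only difference is packaging: the paper phrases this via \emph{semi-smoothness}, taking $\sigma_i$ to be the uniform mixed strategy over the $m$ actions, so that $\mathbf{E}[u_i(\sigma_i,s_{-i})]=\tfrac{1}{m}\sum_k u_i(k,s_{-i})$ and the game is $(\tfrac{1}{m},0)$-semi-smooth. This is your averaging step verbatim, but the smoothness framing buys an automatic extension of the PoA bound to mixed Nash, correlated, and coarse correlated equilibria, whereas your direct argument as written applies only to pure Nash equilibria.
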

The Price of Anarchy is the ratio of the social welfare of the optimum to that of the worst Nash Equilibrium (equilibrium with the lowest social welfare for a given instance) over all instances.

\emph{(Proof Sketch)} We use a refinement of the popular smoothness technique~\cite{roughgarden2009intrinsic} known as semi-smoothness (see~\cite{lucier2011gsp, anshelevich2013assignment} for more details) for proving the above bound. The full proof is present in the appendix. We show that our Social Coordination Game is $(\frac{1}{m},0)$-semi-smooth for all instances. This immediately gives us an upper bound of $m$ for the Price of Anarchy. $\blacksquare$

Note that by the generality of the smoothness technique, we are guaranteed that the Price of Anarchy is $m$ for solution concepts that are much more general than Nash Equilibrium including Mixed Nash Equilibrium and Coarse Correlated Equilibrium. We now show that the above factor is tight by giving a matching lower bound for the best Nash Equilibrium. 

\begin{proposition}
\label{prelim_pos}
The Price of Stability (PoS) for the SCG is at least $m$ and this is tight.
\end{proposition}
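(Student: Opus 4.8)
The plan is to split the statement into its two halves, treating the tightness (upper bound) and the claimed lower bound separately. The tightness is essentially free. For any fixed instance the best Nash equilibrium has social welfare at least that of the worst, so the instance-wise ratio of $OPT$ to the welfare of the best equilibrium is at most the ratio of $OPT$ to the welfare of the worst equilibrium; taking the supremum over all instances gives that the Price of Stability is at most the Price of Anarchy. Proposition~\ref{prelim_poa} already bounds the latter by $m$, so PoS $\le m$, and it remains only to exhibit a family of instances whose Price of Stability approaches $m$.

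For the lower bound I would build a ``hub-and-spoke'' instance on $m$ strategies with strongly asymmetric reward sharing; the whole point is that asymmetry is what lets us \emph{destabilize} the efficient coordinated outcome. (With symmetric sharing the grand coalition is always an equilibrium, so symmetric instances witness only a large Price of Anarchy, never a large Price of Stability.) Take $m$ players: one hub player $1$ and spokes $2,\dots,m$, where player $k$ intrinsically prefers strategy $k$. Give the hub a large preference value $a$ for strategy $1$ and give each spoke a negligible value $b_j \to 0$ for its own strategy $j$. Place a coordination edge only between the hub and each spoke, with weight $W_j$ split so that the hub collects almost all of it, $\gamma_{1j}W_j = a-\epsilon$, while spoke $j$ receives a share $\gamma_{j1}W_j$ strictly below its own preference $b_j$; thus $W_j = (a-\epsilon) + \gamma_{j1}W_j \to a$. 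The optimum places everyone on strategy $1$, realizing all $m-1$ edges for welfare $a + \sum_j W_j \to ma$, whereas I will argue the only stable outcome is full separation, with welfare $a + \sum_j b_j \to a$, giving the ratio $m$.

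The three things to verify are that all-on-strategy-$1$ is optimal, that full separation is a Nash equilibrium, and --- the crux --- that it is the \emph{best} equilibrium. Optimality is a short exchange argument: since $W_j > b_j$, moving any spoke onto the hub's strategy raises welfare by $W_j - b_j > 0$, so the fully coordinated state dominates. Full separation is stable because a spoke gains only its tiny share $\gamma_{j1}W_j < b_j$ by joining the hub, while the hub gains only $\gamma_{1j}W_j = a-\epsilon < a$ by chasing any single spoke, so no one moves. The main obstacle, and the step that forces the asymmetric design, is ruling out a higher-welfare equilibrium: I must show no player is ever willing to \emph{initiate} coordination. A spoke never joins the hub (its preference $b_j$ strictly beats its share) and never sits on another spoke's strategy (no edge there), so in any equilibrium the hub can be co-located with at most one spoke and only on that spoke's own strategy; but the hub strictly prefers sitting alone on strategy $1$ (value $a$) to the $a-\epsilon$ it would collect from any single spoke. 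Hence every equilibrium leaves all coordination edges unrealized and has welfare tending to $a$, so PoS $\ge ma/a = m$ in the limit, matching the upper bound and establishing tightness.
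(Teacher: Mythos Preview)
Your proof is correct and follows essentially the same hub-and-spoke construction as the paper: a central player who reaps almost all the coordination benefit from each of $m-1$ peripheral players, with the peripheral players' shares set just below their own tiny intrinsic preferences so that no one is willing to initiate coordination. The paper's version has the hub indifferent among all $m$ strategies (yielding $m$ symmetric equilibria, all of the same low welfare), whereas you anchor the hub to strategy $1$ and obtain a unique equilibrium; either way the best equilibrium has welfare $\to a$ while $OPT \to ma$. Your treatment is in fact a bit more careful than the paper's, since you explicitly derive the upper bound $\mathrm{PoS}\le\mathrm{PoA}\le m$ from Proposition~\ref{prelim_poa} and argue uniqueness of the equilibrium rather than just asserting it.

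One small inaccuracy worth flagging, though it does not affect the argument: your parenthetical claim that ``with symmetric sharing the grand coalition is always an equilibrium'' is false (a player with a sufficiently large intrinsic preference for another strategy would deviate). The correct reason asymmetry is needed is Proposition~\ref{pos_symmetric}, which caps the symmetric PoS at $2-\tfrac{1}{m}$.
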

The Price of Stability for a given game is the maximum ratio of the social welfare of the optimum and the best Nash Equilibrium over all instances. 

\begin{proof}
\textbf{(Example)} Consider an instance with $m$ strategies and $m$ players, $i_1 \cdots i_m$. Player $i_1$ derives a utility $r$ from strategy $i$ and an additional utility of $r$ for every other player who chooses the same strategy as her. Players $i_j$ ($j>1$) receive a utility of $2\epsilon$ from choosing strategy $j$ and an additional utility $\epsilon$ if player $i_1$ chooses the same strategy. In the social optimum, all players choose strategy $1$, and we have $u(OPT) =  mr + (m-1)\epsilon$. This instance has $m$ different Nash Equilibria based on which strategy player $i_1$ chooses and every Nash Equilibrium has a social welfare of $r + 2(m - 1)\epsilon$ . In the limiting case as $\epsilon \rightarrow 0$, the ratio between the social welfares approaches $m$.\end{proof}

This implies that in instances where Nash Equilibrium exist, all stable solutions can have a social welfare that is much worse than the optimum (if $m$ is large). Propositions~\ref{prelim_poa}, \ref{prelim_pos} also tell us that if we are able to limit the number of choices available to self-interested agents, then all stable solutions exhibit high social welfare.

\subsubsection*{Quality of Stable Solutions for the Symmetric Coordination Game}
We return to the special case when player relationships are symmetric (rewards are split-up equally), which is of considerable interest as it occurs in several scenarios. Notice that this case can be expressed as a SCG with $\gamma_{ij}=1 \; \forall(i,j)$ and therefore, every player receives a utility of $\frac{w(i,j)}{2}$ from each relationship. We show that symmetry improves the situation by giving a tight upper bound of $(2-\frac{1}{m})$ on the Price of Stability. Unfortunately, this best equilibrium may not be computable. In the next section, we present an algorithm to compute an approximate equilibrium that is almost as good as the best Nash Equilibrium. First observe that the following is an exact potential function for the SCG with symmetric relationships.
$$\Phi(\bm{s}) = \sum_i w_i^{s_i} + \frac{1}{2}\sum_{s_i=s_j}w(i,j).$$

\begin{proposition}
\label{pos_symmetric}
The price of stability is at most $(2-\frac{1}{m})$ in social coordination games with symmetric relationships, i.e. $\gamma=1$. Further, this bound is tight.
\end{proposition}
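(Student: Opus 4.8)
The plan is to prove the upper bound through the exact potential function $\Phi$ and then construct a matching instance for tightness. Since the global maximizer $\bm{s}^{*}$ of $\Phi$ is itself a pure Nash Equilibrium (any profitable deviation would, by exactness, strictly increase $\Phi$), and since the Price of Stability compares $OPT$ to the \emph{best} equilibrium, it suffices to show that $u(\bm{s}^{*}) \ge \frac{m}{2m-1}\,OPT$. Writing welfare and potential in the decomposed form $u(\bm{s}) = A(\bm{s}) + P(\bm{s})$ and $\Phi(\bm{s}) = A(\bm{s}) + \tfrac12 P(\bm{s})$, the maximality of $\bm{s}^{*}$ gives the single clean inequality $A(\bm{s}^{*}) + \tfrac12 P(\bm{s}^{*}) \ge A(OPT) + \tfrac12 P(OPT)$.

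Combining this with the trivial relations $\tfrac12 u(\bm{s}) \le \Phi(\bm{s}) \le u(\bm{s})$ immediately yields a Price of Stability of at most $2$, so the entire difficulty lies in improving the constant from $2$ to $2 - \tfrac1m$. The route I would take is to establish the refined structural inequality $A(\bm{s}^{*}) \le m\,A(OPT) + (m-1)\,P(\bm{s}^{*})$; substituting it into the potential inequality above and simplifying collapses exactly to $u(\bm{s}^{*}) \ge \frac{m}{2m-1}\,OPT = \frac{1}{\,2 - 1/m\,}\,OPT$. I expect this refined inequality to be the main obstacle. The natural way to attack it is to sum, over all players $i$, the equilibrium condition stating that $i$ does not gain by switching from $s^{*}_i$ to its $OPT$-strategy; after using that each intra-cluster edge of $\bm{s}^{*}$ is counted twice, this produces a bound involving a cross term between the $\bm{s}^{*}$-partition and the $OPT$-partition, and the crucial leverage — the source of the $-1/m$ — must come from the fact that $OPT$ groups the players into at most $m$ clusters, so the coordination $OPT$ extracts cannot be spread arbitrarily thin relative to what a stable partition already retains.

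For tightness I would exhibit an instance in which the relevant terms are simultaneously (nearly) equalized. Guided by the extremal calculation, the worst case has $A(OPT)\approx 0$ and $P(OPT) = (2m-1)\,P(\bm{s}^{*})$: the welfare optimum gathers all players onto a single strategy — one on which no player has any intrinsic preference — so as to harvest the full pairwise coordination, whereas the intrinsic preferences, set precisely at the threshold where a player's gain from abandoning a weakly-coordinated position for its own favorite strategy drops to zero, force every equilibrium to scatter the players across the $m$ available strategies and thereby retain only a $\tfrac{1}{2m-1}$ fraction of the coordination, together with $(m-1)$ times that in preference value. Computing $u(OPT)$ and the best $u(\bm{s}^{*})$ for this family and letting the threshold parameter tend to its boundary drives the ratio to $\frac{2m-1}{m} = 2 - \frac1m$, matching the upper bound. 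Verifying that the scattered profile is genuinely the best equilibrium, and not merely some local one, is the fiddly part of this direction.
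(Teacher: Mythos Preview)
Your overall framework coincides with the paper's: take a potential-maximizing profile (equivalently, the paper runs best-response from $OPT$), use $\Phi(\bm{s}^{*})\ge\Phi(OPT)$ to get
\[
u(\bm{s}^{*}) \;\ge\; u(OPT) + A(OPT) - A(\bm{s}^{*}),
\]
and then feed in a bound relating $A(\bm{s}^{*})$ and $A(OPT)$. The divergence is in \emph{which} bound, and how to obtain it.

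You aim for $A(\bm{s}^{*}) \le m\,A(OPT) + (m-1)P(\bm{s}^{*})$ and propose to extract it from the Nash inequalities at $\bm{s}^{*}$. This is the gap. Summing $u_i(\bm{s}^{*}) \ge u_i(o_i,s^{*}_{-i})$ gives a \emph{lower} bound on $u(\bm{s}^{*})$ in terms of $A(OPT)$ plus a nonnegative cross term; it does not yield an \emph{upper} bound on $A(\bm{s}^{*})$. No manipulation of the equilibrium inequalities produces the direction you need, because the inequality you are after is not a property of $\bm{s}^{*}$ at all --- it is a property of $OPT$. The paper's argument makes this explicit and is both stronger and one line long: compare $OPT$ with the profile in which all players sit at the single strategy $k^{*}$ maximizing $\sum_i w_i^{k}$. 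That profile has welfare at least $\tfrac{1}{m}A_T + P_T$, hence $A(OPT)+P(OPT)\ge \tfrac{1}{m}A_T + P_T$; since $P(OPT)\le P_T$, one gets $A(OPT)\ge \tfrac{1}{m}A_T \ge \tfrac{1}{m}A(\bm{s}^{*})$. This is strictly stronger than your refined inequality (it drops the $(m-1)P(\bm{s}^{*})$ slack) and plugs directly into the displayed inequality above to give $(2-\tfrac1m)\,u(\bm{s}^{*})\ge u(OPT)$.

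Your tightness sketch is also misaligned with the true extremal case. If $A(OPT)\approx 0$ then, by the inequality just proved, $A_T\approx 0$, hence all intrinsic preferences vanish and $\Phi$ is maximized by putting everyone together --- giving PoS $=1$, not $2-\tfrac1m$. The actual tight regime has $P(\bm{s}^{*})=0$ and $A(OPT)=\tfrac1m A(\bm{s}^{*})$. The paper realizes this with $m$ players, each preferring a distinct strategy with weight $r+\epsilon$, and one ``hub'' player $p$ with $w(i,p)=r$ to every other player; the unique Nash equilibrium scatters everyone (welfare $\approx mr$), while $OPT$ gathers all at one strategy (welfare $\approx (2m-1)r$), and the ratio tends to $2-\tfrac1m$ as $\epsilon\to 0$.
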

\emph{(Proof Sketch)} The complete proof can be found in the Appendix. We prove the first part of the proposition by proving a much stronger statement on best-response dynamics: that the loss in social welfare from certain `good starting states' after any best-response sequence is upper bounded by the factor $2 - \frac{1}{m}$. Then all we need to show is that $OPT$ belongs to these good states, by proving that $A(OPT) \geq \frac{A_T}{m}$. We also show a tight example where the PoS is exactly $(2-\frac{1}{m})$, which completes the proof. $\blacksquare$

The proposition also implies that in the special case with two strategies, the polynomial-time best-response sequence described previously with OPT as a starting state (can be computed using any algorithm for Min-Cut) leads to a Nash equilibrium whose social welfare is at least two-thirds of OPT. 

\begin{corollary}
\textbf{Algorithm 1} for two strategy SCGs with symmetric relationships returns a Nash Equilibrium that is at least two-thirds of the optimum when $OPT$ is used as the starting state.
\end{corollary}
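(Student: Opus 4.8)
The plan is to obtain the corollary as an immediate specialization of the stronger statement established within the proof of Proposition~\ref{pos_symmetric}, rather than re-running any dynamics analysis from scratch. First I would observe that every move executed by Algorithm~\ref{alg_m2ne} is an improving (best-response) move: in both Step~1 and Step~2 a player switches strategy only when doing so strictly increases her utility. Hence the trajectory of the algorithm, started from the state $OPT$, is a legitimate best-response sequence of exactly the kind analyzed in Proposition~\ref{pos_symmetric}. The proposition governing Algorithm~\ref{alg_m2ne} already guarantees that any such sequence terminates at a Nash Equilibrium, so the output is stable.

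The crux is that the proof of Proposition~\ref{pos_symmetric} does more than bound the Price of Stability: it shows that from any \emph{good starting state}---one satisfying $A(\bm{s}) \ge A_T/m$---every best-response sequence terminates at a solution whose social welfare is at least a factor $1/(2 - \tfrac{1}{m})$ of the welfare of that starting state. That same argument establishes $A(OPT) \ge A_T/m$, so $OPT$ qualifies as a good starting state, and trivially its welfare equals $OPT$. Applying the stronger guarantee to the best-response sequence produced by Algorithm~\ref{alg_m2ne} from $OPT$ therefore yields a terminal Nash Equilibrium of social welfare at least $OPT/(2 - \tfrac{1}{m})$.

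Finally, substituting $m = 2$ gives $2 - \tfrac{1}{m} = \tfrac{3}{2}$, and hence the equilibrium returned has social welfare at least $\tfrac{2}{3}\,OPT$, as claimed. The only point needing care is confirming that the specific move schedule of Algorithm~\ref{alg_m2ne} lies within the class of best-response sequences to which the stronger guarantee of Proposition~\ref{pos_symmetric} applies; since that guarantee is stated for \emph{every} such sequence and $OPT$ is verified there to be a good state, no additional work is required and the corollary follows.
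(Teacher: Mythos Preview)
Your proposal is correct and follows essentially the same approach as the paper. The paper derives the corollary in one sentence by noting that Proposition~\ref{pos_symmetric}'s stronger internal statement (Proposition~\ref{macbr} in the appendix) applies to the best-response sequence produced by Algorithm~\ref{alg_m2ne} from the good starting state $OPT$, and then specializes to $m=2$; you have spelled out exactly these steps, including the observation that every move of Algorithm~\ref{alg_m2ne} is an improving best-response move and hence falls under the scope of Proposition~\ref{macbr}.
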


\section{Approximate Equilibrium}
We now come to our main results. In the previous sections, we showed that Nash Equilibrium may not exist for Social Coordination Games and even when it does, its quality may not be comparable to the optimum. This motivates us to relax our notion of stability and consider approximately stable solutions in order to obtain better guarantees on the social welfare. Our first main result in this section is a polynomial-time algorithm to compute a $1.414$-Approximate Equilibrium for SCGs with three strategies. When $m>3$, we give an easily implementable algorithm to compute an $\alpha$-Approximate Equilibrium for $\alpha \in [\phi, 2]$ ($\phi \approx 1.618$). As we showed that $\alpha$-Approximate Equilibria for $\alpha < 1.414$ do not exist for general SCGs, our stability results are nearly tight (up to a factor of $\approx 1.14$). Recall that in an $\alpha$-Approximate Equilibrium, no player can deviate from one strategy to another and increase her utility by more than $\alpha$ times her previous utility.

\textbf{Three Strategies:} For $m=3$, recall that in the example of Figure~\ref{figure:asymmetricnoeq}, in every state there is at least one node that can deviate and improve its utility by a factor of $\sqrt{2}$. We show now that this bound of $\sqrt{2}$ is tight by exhibiting an algorithm for three groups that always returns a $\sqrt{2}$-Approximate Nash equilibrium. 
\begin{proposition}
\label{prop:1414approx}
The following algorithm returns a $\sqrt{2}$-Approximate Equilibrium for all instances with three strategies and no algorithm can give a better approximation factor over all instances as this ratio of $1.414$ is tight.
\end{proposition}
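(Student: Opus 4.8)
The statement has two independent halves, and the plan is to dispatch them separately: the tightness (lower bound) and the algorithmic guarantee (upper bound).

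The tightness half is essentially already in hand, so I would only need a sentence or two. I would reuse the three-player, three-strategy instance of Figure~\ref{figure:asymmetricnoeq} from Proposition~\ref{prop:nonexistence}. There it is shown that in the all-separated state every player is exactly a factor $\sqrt2$ away from a profitable deviation, while in every other state at least one player can strictly improve her utility by a factor $\sqrt2$. Consequently, for every $\alpha<\sqrt2$ that single instance admits no $\alpha$-approximate equilibrium, and hence no algorithm can guarantee a stability factor below $\sqrt2$ over all instances. This is the easy direction.

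The real work is the algorithmic guarantee. Since plain best response can cycle when $m=3$ (that is precisely why Nash equilibrium fails here), I would not treat the algorithm as "iterate best response to stability" but as a structured procedure, and I would prove two things: that it terminates in polynomial time, and that its output $\bm{s}$ is a $\sqrt2$-approximate equilibrium, i.e. that for every player $i$ and every alternative $k$ the anchored deviation utility $v_i^k := w_i^k + \sum_{s_j=k}\gamma_{ij}w(i,j)$ satisfies $v_i^k \le \sqrt2\,v_i^{s_i}$, where $v_i^{s_i}=u_i(\bm{s})$. The natural route for the second part is a per-player charging argument: I would fix $i$ and her most tempting target $k$ and bound the coordination mass $\sum_{s_j=k}\gamma_{ij}w(i,j)$ that $k$ offers against the utility $i$ already holds at $s_i$.

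The lever that makes the charging work is whatever ordering or invariant the algorithm maintains — most plausibly that each player is placed as a best response to the players committed "before" her, so that only "later" players can create a profitable deviation. I would then split $v_i^k$ into the part $i$ had already optimized against (which is at most $v_i^{s_i}$) and a residual coordination term coming from the remaining players, and bound that residual by $v_i^{s_i}$ as well. Here I would exploit two facts special to this setting: that $\gamma_{ij}+\gamma_{ji}=1$, so the share a would-be partner can hand to $i$ is bounded relative to the full edge weight $w(i,j)$, and that with only three strategies a player who abandons $s_i$ for $k$ simultaneously forfeits her coordination at $s_i$. Optimizing the resulting two-term inequality is what I expect to yield the constant $\sqrt2$ and should drive the worst case to coincide exactly with the lower-bound instance of Figure~\ref{figure:asymmetricnoeq}.

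I expect the main obstacle to be pinning the constant to exactly $\sqrt2$ rather than a weaker bound, because the asymmetric sharing makes a deviating player both gain and lose coordination, so the accounting is genuinely two-sided; the analysis must use the $m=3$ structure specifically, since the general-$m$ algorithm only attains $\phi\approx1.618$. The secondary obstacle is termination: lacking an exact potential function, I would have to exhibit a monotone quantity (a suitably weighted welfare, or a phase/commitment counter of the procedure) that strictly progresses on every allowed move, and verify that no sequence of these moves cycles — which is exactly the failure mode that prevents Nash equilibrium from existing in the first place.
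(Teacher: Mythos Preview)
Your lower-bound half is fine and matches the paper: the instance of Figure~\ref{figure:asymmetricnoeq} shows that for every $\alpha<\sqrt2$ no $\alpha$-approximate equilibrium exists.

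For the upper bound, however, you are missing the one structural observation that does all the work, and the ingredients you propose in its place are not the right ones. The algorithm treats strategy $3$ asymmetrically: players only ever move \emph{into} strategy $3$, never out of it, while strategies $1$ and $2$ are kept in mutual Nash equilibrium by the $m=2$ subroutine. This single fact immediately gives polynomial termination --- $|X_3|$ increases by one each round --- so your worry about cycling and about manufacturing a monotone potential is misplaced.

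The same monotonicity is also the entire stability argument. If player $i$ entered strategy $3$ at round $t$, then $X_1(\bm s)\cup X_2(\bm s)\subseteq X_1^t\cup X_2^t$, since no one leaves $3$ afterward. Hence for either target $k'\in\{1,2\}$,
\[
u_i(k',\bm s_{-i}) \;\le\; u_i(1,s_{-i}^t)+u_i(2,s_{-i}^t) \;\le\; \tfrac{1}{\sqrt2}\,u_i(3,s_{-i}^t)+\tfrac{1}{\sqrt2}\,u_i(3,s_{-i}^t) \;\le\; \sqrt2\,u_i(\bm s),
\]
using only that $i$'s move to $3$ was a $\sqrt2$-improvement over \emph{each} of strategies $1$ and $2$ at time $t$, and that $X_3$ only grows. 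Players still in $1$ or $2$ are stable with respect to each other by the subroutine and with respect to $3$ by the termination condition.

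Your proposed decomposition into an ``already optimized'' part and a ``residual from later players'', driven by the constraint $\gamma_{ij}+\gamma_{ji}=1$, is not the right lever here: the sharing identity is never invoked, and the proof goes through for arbitrary $\gamma$. The constant $\sqrt2$ does not arise from balancing two-sided gains and losses across an edge; it arises from the arithmetic $\tfrac{1}{\sqrt2}+\tfrac{1}{\sqrt2}=\sqrt2$, i.e.\ two strategies' worth of former population collapsing into at most one deviation target.
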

\begin{enumerate}
\item Run \emph{Algorithm 1} for $m=2$ (for say strategies $1$ and $2$) and all players so that no player wants to deviate from one of those two strategies to the other. 
\item While there exists a player, who can deviate from either strategy $1$ or strategy $2$ and improve her utility by a factor $\sqrt{2}$ or more, allow her to deviate to strategy $3$. Once that player deviates, run the algorithm for $m=2$ (for strategies $1$ and $2$) so that players choosing these two strategies are stable with respect to each other.
\item Repeat step 2 until no player can to deviate to 3 and increase her utility by a factor $\sqrt{2}$.
\end{enumerate}
In other words, at every stage of the algorithm, players in strategies $1$ and $2$ are stable with respect to each other. Players deviate to $3$ only when it is their best-response and it improves their utility by the given factor. The algorithm terminates in polynomial-time as in each round the size of group 3 increases by one. We prove that its stability factor is $\sqrt{2}$ in the appendix.

Moving on to the general case, we give an algorithm to compute a $\alpha$-Approximate Equilibrium and characterize the trade-off between stability and optimality by showing how the social welfare increases when we decrease the stability factor from $\alpha=2$ to $\alpha=\phi$. Figure~\ref{figure:tradeoff_symmetric} illustrates this near-linear trade-off between social welfare and $\alpha$. As mentioned, our bounds for social welfare depend on $\gamma$, the Maximum Relationship Imbalance (MRI), that measures the maximum asymmetry over all relationships. We explicitly use this factor $\gamma$ in our welfare as often real-life relationships are bounded in their asymmetry; if one player receives large gains from coordinating with another, then the second player derives at least some fraction of that benefit. Recall that $\gamma=1$ denotes the case with symmetric relationships, as $\gamma$ increases, reward sharing becomes more and more asymmetric. 

We begin by presenting a simple algorithm called \emph{one-shot $\alpha$-BR}, which allows each player to play her best-response at most once as long as the improvement in utility is by a multiplicative factor $\alpha \geq 1$. This natural algorithm has desirable properties in terms of both stability and welfare and acts as a subroutine for many of our results in this section and the next. First, we show the following lemma.

\begin{algorithm}[htbp]
\caption{One-shot $\alpha$-BR}
\label{alg_1shotbr}
\algsetup{indent=2em}
\begin{algorithmic}[1]
\REQUIRE A starting strategy $k_0$.
\STATE Set all players' initial strategy to be $k_0$.
\STATE While there exists a player whose current strategy is $k_0$, who can improve her utility by at least a factor $\alpha$ by deviating to another strategy, allow her to perform best-response. \label{alg1shotbr_loop}
\STATE Return the final solution $\bm{s}$.
\end{algorithmic}
\end{algorithm}

\begin{lemma}
\label{1shotbr_stabilityfactor}
The \emph{One-shot $\alpha$-BR} algorithm returns a $\left(\max\left(\alpha, \frac{1}{\alpha} + 1\right)\right)$-approximate equilibrium, for any starting strategy $k_0$.
\end{lemma}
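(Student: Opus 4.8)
The plan is to analyze the final solution $\bm{s}$ returned by \emph{One-shot $\alpha$-BR} by partitioning the players into two groups according to whether or not they ever moved during the algorithm, and to bound each group's potential gain from deviation separately. Call a player a \emph{stayer} if she never left $k_0$, and a \emph{mover} if she deviated (exactly once, by construction) from $k_0$ to some strategy. I would first observe that for any stayer $i$, her utility in $\bm{s}$ is at least a $\frac{1}{\alpha}$ fraction of the best utility she could obtain by deviating, since the algorithm's loop condition guarantees that no player on strategy $k_0$ can improve by a factor of $\alpha$ or more --- and crucially, a stayer's playing $k_0$ means her coordination partners on $k_0$ only ever grew or stayed the same as other stayers remained, so her deviation gain at termination is no larger than it was when she was last checked. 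Thus every stayer is already $\alpha$-stable, contributing the $\alpha$ term.

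The more delicate case is the movers. When a mover $i$ deviates from $k_0$ to $s_i$, she does so because this improved her utility by a factor of at least $\alpha$; in particular, at the moment of her move, $u_i(\bm{s}) \geq \alpha \cdot u_i^{(k_0)}$, where $u_i^{(k_0)}$ is the utility she had on $k_0$ just before moving. The key structural fact I would exploit is that a mover's current utility $u_i(\bm{s})$ decomposes as her intrinsic preference $w_i^{s_i}$ plus coordination benefits on her new strategy $s_i$. To bound how much she could gain by deviating to some \emph{other} strategy $t$, I would split into subcases: if $t = k_0$, her utility there is at most $u_i^{(k_0)}$ augmented by any coordination partners who arrived on $k_0$ after she left --- but no mover ever arrives on $k_0$, so the only way $k_0$ grew is via stayers, and I need to argue her gain from returning is controlled. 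The cleanest route is to observe that the total coordination weight available to $i$ across \emph{all} strategies is fixed, so her best alternative utility is at most $w_i^{\mathrm{best}} + \big(\sum_j \gamma_{ij} w(i,j)\big)$, and I would compare this against $u_i(\bm{s}) \geq \alpha\, u_i^{(k_0)} \geq \alpha\, w_i^{s_i \text{ or } k_0}$ to extract the factor $\frac{1}{\alpha} + 1$.

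The main obstacle, which I would concentrate the technical work on, is precisely bounding a mover's deviation gain by $\frac{1}{\alpha}+1$ rather than something larger. The intuition is that when $i$ last performed best-response, she chose $s_i$ as the strategy maximizing her utility at that time, so any alternative strategy $t$ was no better \emph{then}; the only reason $t$ could look attractive \emph{now} is that new coordination partners accumulated on $t$ after $i$'s move. I would bound this accumulated benefit: the extra coordination weight $i$ could collect on $t$ is at most her total possible coordination weight, and since $i$'s move guaranteed $u_i(\bm{s}) \geq \alpha \cdot (\text{her utility if she had gone to } t \text{ at move time})$, combining the ``at most her old utility'' bound with the ``at most her full coordination share'' bound yields a deviation ratio of at most $1 + \frac{1}{\alpha}$. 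Taking the worse of the two group bounds gives $\max\!\left(\alpha,\ \frac{1}{\alpha}+1\right)$.

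I expect the subtlety to lie in carefully tracking \emph{which} coordination partners are counted at the move time versus at termination, and in verifying that a mover never benefits from another mover arriving on her target strategy in a way that breaks the bound; handling the monotonicity of group membership (strategy $k_0$ only loses players, other strategies only gain them) is what makes the accounting go through, so I would state that monotonicity explicitly as the first step before doing the two-case utility comparison.
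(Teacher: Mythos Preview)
Your high-level structure (split into stayers and movers, use the monotonicity that $k_0$ only shrinks while every other strategy only grows) is exactly the paper's approach, and your treatment of stayers is correct. The gap is in the mover case, where two of your stated inequalities are wrong and the argument as written does not produce the factor $\tfrac{1}{\alpha}+1$.

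First, the claim ``$u_i(\bm{s}) \geq \alpha \cdot (\text{her utility if she had gone to } t \text{ at move time})$'' is false. When mover $i$ leaves $k_0$ at time $\tau$, she performs a \emph{best response}, so $u_i(s_i,s_{-i}^{\tau}) \geq u_i(t,s_{-i}^{\tau})$ for every $t$; the multiplicative factor $\alpha$ is guaranteed \emph{only} against her old strategy $k_0$, i.e.\ $u_i(s_i,s_{-i}^{\tau}) \geq \alpha\, u_i(k_0,s_{-i}^{\tau})$. Second, bounding the ``extra coordination weight $i$ could collect on $t$'' by her \emph{total} coordination weight $\sum_j \gamma_{ij} w(i,j)$ is far too loose to yield $1+\tfrac{1}{\alpha}$; there is no reason that total should be comparable to $u_i(\bm{s})$.

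The missing observation is sharper than anything you wrote: every player who is on strategy $t$ in the final solution $\bm{s}$ but was not on $t$ at time $\tau$ must have been on $k_0$ at time $\tau$ (since players move at most once, and only away from $k_0$). Hence $X_t(\bm{s}) \subseteq X_t^{\tau} \cup X_{k_0}^{\tau}$, which gives the clean decomposition
\[
u_i(t,\bm{s}_{-i}) \;\leq\; u_i(t,s_{-i}^{\tau}) + u_i(k_0,s_{-i}^{\tau}) \;\leq\; u_i(s_i,s_{-i}^{\tau}) + \tfrac{1}{\alpha}\,u_i(s_i,s_{-i}^{\tau}) \;\leq\; \bigl(1+\tfrac{1}{\alpha}\bigr)\,u_i(\bm{s}),
\]
using best-response for the first summand, the $\alpha$-improvement for the second, and $X_{s_i}^{\tau} \subseteq X_{s_i}(\bm{s})$ for the last inequality. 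Once you replace your two faulty bounds with this set-containment argument, the proof goes through exactly as the paper's does.
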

That is, for $\alpha \in [\phi,2]$, the \emph{One-Shot $\alpha$-BR} algorithm returns an $\alpha$-Approximate Equilibrium.
\begin{proof}
We introduce some notation here in order to analyze our \emph{one-shot $\alpha$-BR algorithm}. First, assume that the algorithm proceeds in steps, such that at each time step only one player deviates from $k_0$. Clearly, the algorithm terminates after at most $N$ time steps. Let $X^t_k$ denote the set of players whose strategy is $k$ at time step $t$ and $s_{-i}^t$, the strategy vectors of all players other than $i$ at that time. Let $X_k(\bm{s})$ refer to the players choosing strategy $k$ in the final solution. Our first observation is that the number of players choosing a particular strategy other than $k_0$ cannot decrease with time. 

Now, consider any player $i$ such that $i$ deviated to some strategy $k = s_i$ at time $t$. Then, we have
\begin{equation}
\label{eqn_appinvariant}
  \begin{aligned}
  u_{i}(s_i, s_{-i}^t) & \geq & u_i(k', s_{-i}^t)  & \quad \forall k' \neq s_i\\
  u_{i}(s_i, s_{-i}^t) & \geq & \alpha u_i(k_0, s_{-i}^t)
  \end{aligned}
\end{equation}

Since $i$ performs her best-response, she derives more utility from her strategy than she would by deviating to any other strategy. Each player who deviates improves her utility by a factor $\alpha$ by definition, which gives us the second inequality above. Now suppose that $i$'s best-response under $\bm{s}$ is to move to some strategy $k'$. Observe that $X_{k'}(\bm{s}) \subseteq X^t_{k'} \cup X^t_{k_0}$ as the new players choosing $k'$ after time $t$ can only come from the $k_0$. The utility that $i$ gets from this best-response strategy $k'$ can be upper bounded as $u_{i}(k',\bm{s_{-i}}) \leq u_i(k', s_{-i}^t) + u_i(k_0, s_{-i}^t) \leq  u_{i}(s_i, s_{-i}^t) + \frac{ u_{i}(s_i, s_{-i}^t)}{\alpha} \leq (1+\frac{1}{\alpha})u_i(\bm{s})$. The last two inequalities come from invariant~\ref{eqn_appinvariant} and the fact that $X^t_k \subseteq X_k(\bm{s})$ respectively, giving us the stability factor for players who deviated from $k_0$.

By definition, the algorithm terminates when no player whose strategy is $k_0$ can deviate and improve her utility by a factor $\alpha$. Therefore, for any player $i \in X_{k_0}(\bm{s})$ and other strategy $k' \neq k_0$, we have $\alpha u_i(\bm{s}) \geq  u_{i}(k',s_{-i})$. This means that any arbitrary player in $N$ can perform her best-response and improve her utility by a factor no greater than $\max(\alpha, \frac{1}{\alpha} + 1)$, which completes the proof. 
\end{proof}

The following lemmas give us an idea about the social welfare of the solution returned by the \emph{One-Shot $\alpha$-BR} algorithm. In particular, we see that players get at least as much utility as they would by choosing their favorite strategy (in the absence of coordination). Further, the final utility is quite close to the utility of the starting state where all players choose the same strategy $k_0$. Recall that $A_T$ is the total utility received by players from their preferred strategy, and $A(\bm{s})$ and $P(\bm{s})$ are the utilities due to intrinsic preference and coordination in any given strategy $\bm{s}$.

\begin{lemma}
\label{1shotbr_swanchor}
The \emph{1-shot $\alpha$-BR} algorithm, beginning with any starting strategy $k_0$, returns a solution whose social welfare is at least $\displaystyle\frac{A_T}{\alpha}$
\end{lemma}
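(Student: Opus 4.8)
The plan is to bound the final utility of each player individually and then sum, using the fact that social welfare is exactly the sum of individual utilities (the coordination weight $w(i,j)$ on each intra-strategy edge is split as $\gamma_{ij}w(i,j) + \gamma_{ji}w(i,j) = w(i,j)$, so $u(\bm{s}) = \sum_i u_i(\bm{s})$). Since $A_T = \sum_i best(i)$, it suffices to show that every player $i$ ends with utility $u_i(\bm{s}) \geq \frac{best(i)}{\alpha}$, and then sum over $i$.

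I would split the players of the final solution into two groups. First, consider a player $i$ who deviated from $k_0$ at some time step $t$, moving to $s_i$. Because this move is a best response, $u_i(s_i, s_{-i}^t) \geq u_i(k', s_{-i}^t)$ for every strategy $k'$; taking $k'$ to be $i$'s favorite strategy, the right-hand side is at least $w_i^{k'} = best(i)$ after dropping the nonnegative coordination terms. Hence $u_i(s_i, s_{-i}^t) \geq best(i)$ already at time $t$. I then want to argue this utility does not decrease afterward: once $i$ leaves $k_0$ she never moves again, and by the monotonicity observation the set $X_{s_i}(\bm{s})$ of players sharing her strategy only grows over time, so her coordination gains $\sum_{s_j = s_i}\gamma_{ij}w(i,j)$ can only increase. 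Therefore $u_i(\bm{s}) \geq best(i) \geq \frac{best(i)}{\alpha}$, using $\alpha \geq 1$.

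Second, consider a player $i$ who remains at $k_0$ in the final solution. Here I would invoke the termination condition directly: the algorithm halts only when no $k_0$-player can improve by a factor $\alpha$, so $\alpha u_i(\bm{s}) \geq u_i(k', s_{-i})$ for all $k'$. Choosing $k'$ to be $i$'s favorite strategy again lower-bounds the right-hand side by $best(i)$, yielding $u_i(\bm{s}) \geq \frac{best(i)}{\alpha}$. Combining the two cases and summing over all players gives $u(\bm{s}) = \sum_i u_i(\bm{s}) \geq \sum_i \frac{best(i)}{\alpha} = \frac{A_T}{\alpha}$, as claimed.

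The routine calculations are all immediate; the one step that requires genuine care is the monotonicity argument for deviating players, namely that a player's utility never drops after she leaves $k_0$. This relies on the earlier observation that the population of each non-$k_0$ strategy is non-decreasing in time, which guarantees that a deviating player's coordination partners only accumulate and never vanish — precisely the property that lets me transfer the best-response bound measured at the deviation time to the final state.
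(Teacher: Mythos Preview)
Your proposal is correct and follows essentially the same approach as the paper: split into deviators versus stayers, use the best-response property plus monotonicity of non-$k_0$ groups to get $u_i(\bm{s})\geq best(i)$ for deviators, use the termination condition to get $u_i(\bm{s})\geq best(i)/\alpha$ for stayers, and sum. The paper's proof is more terse but the logic is identical; your explicit handling of the monotonicity step (that $X_{s_i}$ only grows after $i$ deviates) is exactly what the paper means by ``player utilities do not decrease when more people choose the same strategy.''
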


\begin{lemma}
\label{lemma_asymmetricsw}
The \emph{one-shot $\alpha$-BR} algorithm with $\bm{s_0}$ as the starting state (where all players choose a strategy $k_0$), results in a social welfare of at least $A(\bm{s_0}) + \frac{\alpha}{\gamma+1}P(s_0)$, where $\gamma$ is the Maximum Relationship imbalance.
\end{lemma}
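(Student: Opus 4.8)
The plan is to lower-bound the final welfare $u(\bm{s}) = A(\bm{s}) + P(\bm{s})$ by summing the individual utilities in the final state and charging each edge of the complete graph to exactly one player whose (non-)deviation guarantees a constant fraction of that edge. I would first record two preliminary facts. Since all players start at $k_0$, every pair coordinates in $\bm{s_0}$, so $P(s_0) = \sum_{(i,j)} w(i,j) = P_T$ and the target inequality becomes $u(\bm{s}) \geq A(\bm{s_0}) + \tfrac{\alpha}{\gamma+1}P_T$. Second, the only place the MRI enters: since $\gamma_{ij}+\gamma_{ji}=1$ and both $\gamma_{ij}/\gamma_{ji}\le\gamma$ and $\gamma_{ji}/\gamma_{ij}\le\gamma$, a one-line manipulation gives $\gamma_{ij}\ge\tfrac{1}{1+\gamma}$ for every ordered pair. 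I would then partition the players into $R=X_{k_0}(\bm{s})$ (those who never left $k_0$) and $D$ (those who deviated), ordering $D=(i_1,\dots,i_d)$ by deviation time.

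Next I would handle the deviating players. For $i_\ell$ deviating to $k=s_{i_\ell}$ at time $t_\ell$, the improvement condition (invariant~\ref{eqn_appinvariant}) gives $u_{i_\ell}(s_{i_\ell},s_{-i_\ell}^{t_\ell})\ge \alpha\, u_{i_\ell}(k_0,s_{-i_\ell}^{t_\ell})$, and since groups other than $k_0$ only grow (the monotonicity observation already made in the stability proof), we have $u_{i_\ell}(\bm{s})\ge u_{i_\ell}(s_{i_\ell},s_{-i_\ell}^{t_\ell})$. At time $t_\ell$ the players besides $i_\ell$ still at $k_0$ are exactly $\{i_{\ell+1},\dots,i_d\}\cup R$, so $u_{i_\ell}(\bm{s})\ge \alpha w_{i_\ell}^{k_0} + \alpha\sum_{j\in\{i_{\ell+1},\dots,i_d\}\cup R}\gamma_{i_\ell j}w(i_\ell,j)$. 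For $i\in R$ the final utility is exactly $u_i(\bm{s}) = w_i^{k_0} + \sum_{j\in R}\gamma_{ij}w(i,j)$.

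I would then sum over all players and charge. The preference terms give $\sum_{i\in D}\alpha w_i^{k_0} + \sum_{i\in R}w_i^{k_0}\ge\sum_i w_i^{k_0}=A(\bm{s_0})$ because $\alpha\ge1$. For the edges I would argue each pair is counted exactly once: an edge $\{i_\ell,i_{\ell'}\}$ inside $D$ with $\ell<\ell'$ appears only in $i_\ell$'s bound (since $i_\ell$ has already left $k_0$ when $i_{\ell'}$ deviates), contributing $\alpha\gamma_{i_\ell i_{\ell'}}w\ge\tfrac{\alpha}{\gamma+1}w$; an edge from $D$ to $R$ appears only in the deviating endpoint's bound, contributing at least $\tfrac{\alpha}{\gamma+1}w$; and an edge inside $R$ appears with full weight $w$ (both shares add to one), which still clears the bar because $w\ge\tfrac{\alpha}{\gamma+1}w$ whenever $\alpha\le 1+\gamma$, and indeed $\alpha\le2\le1+\gamma$. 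Hence the edge contribution is at least $\tfrac{\alpha}{\gamma+1}\sum_{(i,j)}w(i,j)=\tfrac{\alpha}{\gamma+1}P_T$, completing the bound.

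The main obstacle is the time-ordered charging: I must verify that summing the per-deviation lower bounds counts every intra-$D$ and $D$-to-$R$ edge exactly once (never doubly, never zero) rather than leaking weight, which hinges on fixing the deviation order together with the fact that non-$k_0$ groups only grow so that $u_{i_\ell}(\bm{s})\ge\alpha u_{i_\ell}(k_0,s^{t_\ell}_{-i_\ell})$. The secondary care point is confirming $\alpha\le1+\gamma$, which is exactly what lets the full-weight edges internal to $R$ stay above the $\tfrac{\alpha}{\gamma+1}$ threshold and makes the uniform per-edge coefficient $\tfrac{\alpha}{\gamma+1}$ valid across all three edge types.
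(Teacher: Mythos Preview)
Your proof is correct. The paper's own argument is organized differently: it tracks the welfare change at each deviation step, bounding the welfare \emph{lost} when player $i$ leaves $k_0$ by $w_i^{k_0}+(\gamma+1)\sum_{j}\gamma_{ij}w(i,j)$ (using $\gamma_{ji}\le\gamma\,\gamma_{ij}$) and the welfare \emph{gained} by $\alpha(w_i^{k_0}+\sum_j\gamma_{ij}w(i,j))$, then arguing that in the worst case all coordination welfare $P(\bm{s_0})$ is destroyed and an $\frac{\alpha}{\gamma+1}$ fraction is recovered. Your approach is instead a static charging scheme: you lower-bound each player's \emph{final} utility and assign every edge of the complete graph to exactly one player's bound via the deviation order. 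The two arguments use the same ingredients (the $\alpha$-improvement condition and the MRI inequality, which you phrase as $\gamma_{ij}\ge\frac{1}{1+\gamma}$), but your bookkeeping is tighter and avoids the somewhat informal ``worst case'' reasoning in the paper. You also make explicit the side condition $\alpha\le 1+\gamma$ needed for edges inside $R$; the paper's proof needs this too (indeed the claimed bound would exceed $u(\bm{s_0})$ otherwise, so the lemma cannot hold for $\alpha>1+\gamma$), and your justification via $\alpha\le 2\le 1+\gamma$ is exactly the regime in which the lemma is applied in Theorem~\ref{theorem_approx}.
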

\emph{(Proof Sketch)} The full proofs can be found in the appendix. Notice that the algorithm allows each player to deviate exactly once, so each player gets as much as she would (discounted by $\alpha$) by choosing her preferred strategy. Summing up over all players gives us Lemma~\ref{1shotbr_swanchor}. For the second lemma, observe that if a player $i$ deviates, and the utility of other players drops by some amount $y$, then player $i$'s original utility was at least $\frac{1}{\gamma} y$. Since she performs a best response move, her new utility is at least the same amount. Therefore, for every $(\gamma+1)$ units of social welfare lost due to coordination, at least one unit of social welfare is gained, giving us the desired bound. $\blacksquare$

We are now in a position to show our main result. Since the above algorithm returns a solution whose social welfare is close to that of the starting state, it seems to natural to select a starting state with a high social welfare. We choose the strategy $k^*$ that players have the maximum intrinsic preference for, i.e., $k^* = \argmax_{1 \leq k \leq m}(\sum_i w_i^{k})$. 

\begin{theorem}
\label{theorem_approx}
The following algorithm returns an $\alpha$-Approximate Equilibrium for $\alpha \in [\phi, 2]$ whose social welfare is approximately at least a fraction $\displaystyle \approx \max(\frac{\alpha}{\gamma+3}, \frac{1}{m})$ of the optimum social welfare. \end{theorem}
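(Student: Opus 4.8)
The plan is to split the statement into its two independent assertions—approximate stability and the welfare fraction—since the first is immediate from the lemmas already in hand while the second carries all the difficulty. The algorithm in question is One-Shot $\alpha$-BR launched from the state $\bm{s_0}$ in which every player uses $k^*=\argmax_k\sum_i w_i^k$. For stability, Lemma~\ref{1shotbr_stabilityfactor} guarantees a $\max(\alpha,\frac{1}{\alpha}+1)$-approximate equilibrium for any start, so I only need to simplify this factor on the range $\alpha\in[\phi,2]$. I would observe that $x\mapsto\frac1x+1$ is decreasing with fixed point $\phi$ (the positive root of $x^2=x+1$, since $\frac1\phi=\phi-1$), so $\alpha\ge\phi$ forces $\frac1\alpha+1\le\phi\le\alpha$ and hence $\max(\alpha,\frac1\alpha+1)=\alpha$. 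That settles the stability factor over the entire interval.

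For welfare I would begin from $OPT\le A_T+P_T$ and assemble lower bounds on the output welfare $u(\bm{s})$. Since every player starts at $k^*$, all pairs coordinate in $\bm{s_0}$, so $P(\bm{s_0})=P_T$ and $A(\bm{s_0})=\sum_i w_i^{k^*}$; because $k^*$ maximizes $\sum_i w_i^k$ and $\sum_k\sum_i w_i^k\ge\sum_i best(i)=A_T$, averaging over the $m$ strategies yields $A(\bm{s_0})\ge A_T/m$. Two bounds then apply directly: Lemma~\ref{1shotbr_swanchor} gives $u(\bm{s})\ge A_T/\alpha$, and Lemma~\ref{lemma_asymmetricsw} with $P(\bm{s_0})=P_T$ gives $u(\bm{s})\ge A(\bm{s_0})+\frac{\alpha}{\gamma+1}P_T$. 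The key free parameter is the split of $OPT$ between its preference mass $A_T$ and its coordination mass $P_T$.

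The $\frac{\alpha}{\gamma+3}$ fraction comes from playing these two bounds against that split. Reading $u(\bm{s})\ge A_T/\alpha$ as a guarantee scaling with $A_T$ and $u(\bm{s})\ge\frac{\alpha}{\gamma+1}P_T$ as one scaling with $P_T$, the worst case for $u(\bm{s})/OPT$ occurs where the two balance, and I would solve that one-variable optimization to read off the fraction. A crude convex combination yields a bound of the shape $\frac{\alpha}{\gamma+1+\alpha^2}$; sharpening this to the advertised $\frac{\alpha}{\gamma+3}$ requires not discarding coordination welfare so wastefully. The right move is to lower-bound the final preference welfare $A(\bm{s})$ and coordination welfare $P(\bm{s})$ \emph{additively} — rather than taking the maximum of two global bounds — by re-tracking the One-Shot $\alpha$-BR trajectory and crediting each deviating player's retained preference utility separately from the surviving coordination.

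The $1/m$ fraction handles the complementary regime (small $m$ or large $\gamma$, where $\frac{\alpha}{\gamma+3}$ is weak), and I would obtain it from the semi-smoothness behind Proposition~\ref{prelim_poa}: the game is $(1/m,0)$-semi-smooth, so any approximate equilibrium — in particular our output, where in addition every player's utility is at least $best(i)/\alpha$ — retains an $\Omega(1/m)$ share of $OPT$. The theorem's guarantee is then the larger of the two fractions. I expect the genuine obstacle to be precisely the constant in the previous paragraph: the black-box combination of Lemmas~\ref{1shotbr_swanchor} and~\ref{lemma_asymmetricsw} loses too much at the balanced point, so the sharp $\gamma+3$ denominator needs the finer additive accounting of preference versus coordination welfare along the run of the algorithm, together with a routine check that the $\max$ of the two regimes tiles the whole $(\gamma,m,\alpha)$ range.
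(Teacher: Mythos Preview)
You have misidentified the algorithm. The theorem is about the hybrid procedure stated immediately after it: run \emph{both} One-Shot $\alpha$-BR \emph{and} One-Shot $\frac{1}{\alpha-1}$-BR from $k^*$, and return whichever output has larger welfare. Your entire proposal analyzes only the single $\alpha$-BR run. The stability argument survives (both runs are $\alpha$-approximate equilibria by Lemma~\ref{1shotbr_stabilityfactor}, and your reasoning for the $\alpha$-BR run is fine), but the welfare argument does not. The paper applies Lemma~\ref{lemma_asymmetricsw} to the $\alpha$-BR run to get $u(\bm{s})\ge \frac{A_T}{m}+\frac{\alpha}{\gamma+1}P_T$, and applies Lemma~\ref{1shotbr_swanchor} to the \emph{other} run with parameter $\frac{1}{\alpha-1}$ to get $u(\bm{s})\ge (\alpha-1)A_T$. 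Balancing these two is what produces the exact fraction $\frac{\alpha-1}{1+\frac{\gamma+1}{\alpha}(\alpha-1-\frac{1}{m})}$, which at $\alpha=2$ and large $m$ is $\frac{2}{\gamma+3}$. Your single-run analysis only delivers $u(\bm{s})\ge A_T/\alpha$ on the $A_T$ side, and the resulting balance gives $\frac{\alpha}{\gamma+1+\alpha^2}$ (e.g.\ $\frac{2}{\gamma+5}$ at $\alpha=2$), which is the gap you noticed. Your proposed fix---``finer additive accounting'' along the trajectory of the $\alpha$-BR run---is not how the paper closes this gap and will not reach $(\alpha-1)A_T$ in general: a player who stays in $k_0$ under $\alpha$-BR is only guaranteed $best(i)/\alpha$, and that is tight.

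The $\approx\frac{1}{m}$ bound has the same issue. Semi-smoothness applied to an $\alpha$-approximate equilibrium yields only $u(\bm{s})\ge \frac{OPT}{m\alpha}$, since the deviation inequality is relaxed by the factor $\alpha$. The paper instead proves a direct lemma on the $\frac{1}{\alpha-1}$-BR output (Lemma~\ref{asymmetric_swlowerbound} in the appendix): every player either best-responded (so captures at least her OPT share across the $m$ strategies) or stayed in $k_0$ with the tighter factor $\alpha-1$ against each alternative, yielding $u(\bm{s})\ge\frac{\alpha-1}{(m-1)+(\alpha-1)}\,OPT$, which is $\frac{1}{m}$ at $\alpha=2$. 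This again relies on the second run that your plan omits.
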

\begin{quote}
\textbf{Algorithm:} ``For a given $\alpha$, run \emph{One-shot $\alpha$-BR} and \emph{One-shot $\frac{1}{\alpha-1}$-BR} with $k^*$ as the starting strategy. Let the returned solutions be $\bm{s_1}$ and $\bm{s_2}$. Return the solution among these two with greater social welfare."
\end{quote}
\textbf{Discussion.} The exact social welfare is: $\displaystyle\frac{\alpha - 1}{1 + \frac{(\gamma+1)}{\alpha}(\alpha - (1 + \frac{1}{m}))}$ times $OPT$ when $(\gamma+1) \leq \alpha m$, and $\max(\frac{\alpha}{\gamma+1}, \frac{\alpha-1}{(m-1) + (\alpha-1)})$ times OPT otherwise. We attempt to break-down the dependence of social welfare on different parameters. First, notice from the approximate formula $\frac{\alpha}{\gamma+1}$ that the social welfare increases almost linearly with $\alpha$. In other words, as is usually the case, sacrificing a little individual stability results in greater overall well being. At the same time, we see that social welfare decreases when $\gamma$ becomes larger, i.e., when relationships become 
more asymmetric. Therefore, for a designer, there are two possible measures to ensure socially efficient outcomes: \textbf{(i)} imposing a higher switching cost on agents, \textbf{(ii)} splitting the rewards of coordination almost equally. 

Our algorithm provides good guarantees when $\gamma$ is not too large. For instance, when the benefits from coordination are off by at most a factor of 2 ($\gamma \leq 2$), our algorithm returns a $1.618$-Approximate Equilibrium that is almost one third of $OPT$ even if the number of available alternatives is arbitrarily large. Contrast this with the result in Proposition~\ref{prelim_poa} that states as $m \to \infty$, the social welfare of equilibria becomes infinitely worse off than the optimum. For the same $\gamma$, if we sacrifice some stability and move to a $2$-Approximate Equilibrium, we get more than forty percent of the optimum in terms of welfare. Finally, the social welfare is bounded by $\approx \frac{1}{m}$, which indicates that even when $\gamma$ is large, our solutions are at least as good as that of the Nash Equilibrium (when they exist). In other words, running our algorithm is always preferable to letting agents form their own strategies. Figure~\ref{figure:tradeoff_asymmetric} illustrates how the social welfare drops when relationships become more asymmetric.

\emph{(Proof Sketch)} Both the procedures in the algorithm result in an $\alpha$-Approximate Equilibrium. Applying Lemma~\ref{1shotbr_stabilityfactor} to the solution returned by the \emph{1-shot $\frac{1}{\alpha-1}$-BR} procedure, we get that its stability factor is $\max(\frac{1}{\alpha-1}, \frac{1}{\frac{1}{\alpha-1} }+1)$, that is $\max(\frac{1}{\alpha-1}, \alpha)$. $\bm{s_1}$ is, therefore, an $\alpha$-approximate Nash equilibrium as $\alpha \geq \frac{1}{\alpha-1}$ for $\alpha \in [\phi,2]$. Similarly, applying the lemma to $\bm{s_2}$, we get that its stability factor is $\max(\alpha, \frac{1}{\alpha}+1)$, which is also $\alpha$ for the given range. This establishes the stability factor. 

The social welfare of the state where all players have chosen strategy $k^*$ is at least $\frac{A_T}{m} + P_T$ as players have greater overall intrinsic preference for this strategy than any other strategy. Applying Lemma~\ref{lemma_asymmetricsw} to our algorithm's $\alpha$-BR procedure, we get that the final social welfare of the solution, $u(\bm{s})$, is at least $\frac{A_T}{m} + \frac{\alpha}{\gamma+1}P_T$. Applying Lemma~\ref{1shotbr_swanchor} to our \emph{One-shot $\frac{1}{\alpha-1}$-BR} algorithm, we get that our computed solution has a social welfare of at least $(\alpha-1)A_T$. Taking the best solution among the two and finding the point where it is minimum gives us the desired bound. $\blacksquare$.

We return to the case where $\gamma=1$. We previously proved that the best Nash Equilibrium has a social welfare which is a factor $2 - \frac{1}{m}$ less than OPT. Substituting, $\gamma=1$ in the above equation, we see our $2$-Approximate Equilibrium has the same social welfare as the best Nash Equilibrium.

\begin{corollary}
For the special case when coordination benefits are split equally ($\gamma=1$), our algorithm returns a $2$-Approximate Equilibrium that is more half as good as the optimum and a $1.618$-Approximate Equilibrium that is at least a fraction $0.35$ of OPT.
\end{corollary}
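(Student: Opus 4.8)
The plan is to derive the corollary directly from the exact social welfare expression stated in the Discussion of Theorem~\ref{theorem_approx}, specializing to $\gamma = 1$ and evaluating at the two stability factors $\alpha = 2$ and $\alpha = \phi$. Since Theorem~\ref{theorem_approx} already guarantees that the algorithm returns an $\alpha$-Approximate Equilibrium for every $\alpha \in [\phi, 2]$, the stability portion of the claim needs no further work; only the two social welfare bounds must be verified.

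First I would determine which branch of the piecewise welfare formula applies. For $\gamma = 1$ the first branch holds whenever $(\gamma + 1) = 2 \leq \alpha m$. At $\alpha = 2$ this reads $2 \leq 2m$, valid for all $m \geq 1$; at $\alpha = \phi$ it reads $2 \leq \phi m$, valid for all $m \geq 2$. Thus throughout the range of interest the relevant expression is the first-branch formula with $\gamma = 1$, namely
\[ \frac{\alpha - 1}{1 + \frac{2}{\alpha}\left(\alpha - 1 - \frac{1}{m}\right)} \cdot OPT. \]
Next I would substitute $\alpha = 2$: the numerator becomes $1$ and the denominator collapses to $1 + \left(1 - \frac{1}{m}\right) = 2 - \frac{1}{m}$, giving welfare $\frac{1}{2 - 1/m}\,OPT$. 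Because $2 - \frac{1}{m} < 2$ for every finite $m$, this is strictly more than $\frac{1}{2}OPT$, establishing the first half of the corollary. As a consistency check I would note that this fraction is exactly the reciprocal of the Price of Stability bound $2 - \frac{1}{m}$ from Proposition~\ref{pos_symmetric}, confirming that the computed $2$-Approximate Equilibrium matches the best Nash Equilibrium in welfare.

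For the second half I would substitute $\alpha = \phi$, simplifying with the identities $\phi - 1 = 1/\phi$ and $\phi^2 = \phi + 1$. Observing that $m$ enters the expression only through the $-\frac{1}{m}$ term in the denominator, the welfare fraction is monotonically decreasing in $m$; hence its infimum is attained in the limit $m \to \infty$, where it tends to
\[ \frac{\phi - 1}{1 + \frac{2}{\phi}(\phi - 1)} = \frac{\phi - 1}{1 + \frac{2}{\phi + 1}} \approx 0.35. \]
Since the fraction approaches $0.35$ from above, it is at least $0.35$ for all $m$, which gives the stated bound.

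The computation is essentially a substitution, so the only genuine subtlety — and the step I would treat most carefully — is justifying that $m \to \infty$ is the worst case rather than some finite $m$; this reduces to the monotonicity observation above, which follows immediately once one notes that increasing $m$ strictly increases the denominator. A minor accompanying check is confirming that the first branch of the piecewise formula indeed applies across the relevant range of $m$ for both $\alpha = 2$ and $\alpha = \phi$, which I verified above. No other obstacle is anticipated, since the stability factor is inherited directly from Theorem~\ref{theorem_approx}.
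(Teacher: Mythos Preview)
Your proposal is correct and follows the same approach as the paper, which simply obtains the corollary by substituting $\gamma=1$ into the welfare formula of Theorem~\ref{theorem_approx}. Your treatment is in fact more careful than the paper's, since you explicitly verify which branch of the piecewise bound applies and justify the worst-case limit $m\to\infty$ for the $\alpha=\phi$ case.
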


\begin{figure}
\hfill
\subfigure[Symmetric Group Formation($\gamma=1$ case): Trade-off between welfare and stability.]{\label{figure:tradeoff_symmetric} \includegraphics[width=.45\linewidth]{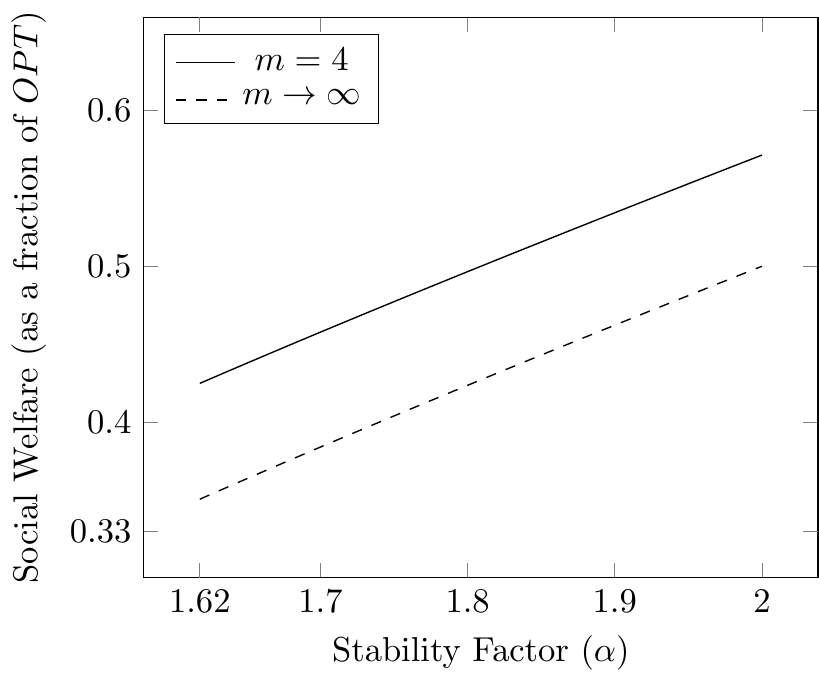}}
\hfill
\subfigure[Asymmetric Group Formation Game: How the asymmetry of the graph affects welfare when $\alpha=1.618$.]{\label{figure:tradeoff_asymmetric} \includegraphics[width=.45\linewidth]{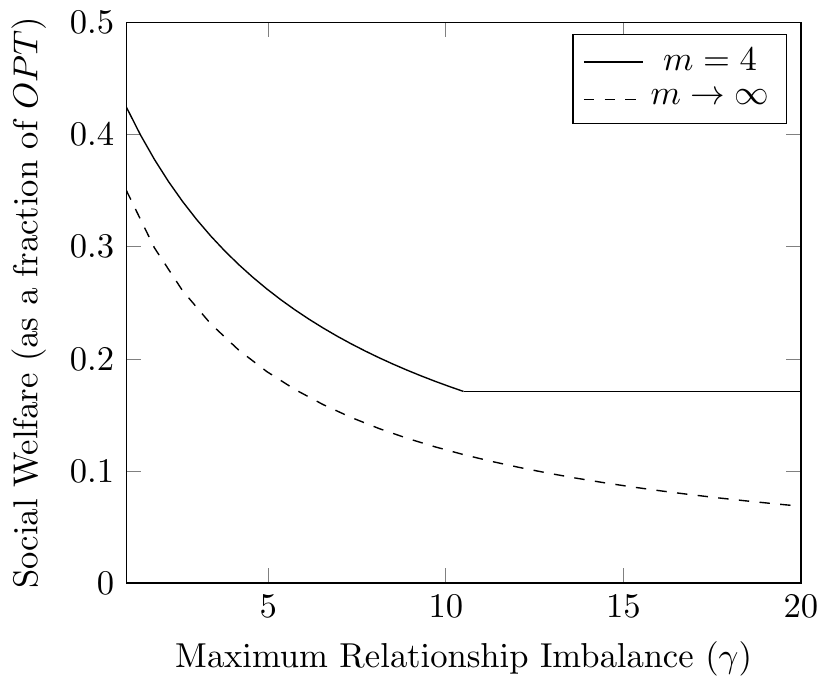}}
\hfill
\caption{Social Welfare as a function of $\alpha$ and $\gamma$.}
\end{figure}

\subsection{Strong Equilibrium}
In any game involving coordination, it becomes important to consider Strong Nash stability as coalitions of players may be able to cooperate when switching to different strategies. Given our focus on computing good, approximately stable solutions, it becomes imperative to design solutions where for any group of players who can deviate and improve their utility, at least one player's utility increases by a factor non more tha $\alpha$. We term this $\alpha$-approximate Strong equilibrium. In this section, we show both (non-)existence and computability results for Strong and approximately Strong equilibria.

We begin by looking at Social Coordination games with only two strategies. Our first result is that any SCG with two strategies admits a Strong Nash Equilibrium. We give a constructive proof for this by extending Algorithm 1 from Section~\ref{sec:prelim} to also compute SNE.

\begin{proposition}
\label{claim_se2groups}
Any instance with $m=2$ strategies admits a Strong Nash equilibrium. The following algorithm computes a Strong Nash Equilibrium.
\end{proposition}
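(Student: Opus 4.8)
The plan is to extend Algorithm~1 so that, instead of single players, entire \emph{coalitions} are allowed to migrate between the two strategies, run in the same two-phase manner: first let groups move from strategy $1$ to strategy $2$ until no group in $1$ wants to leave, then let groups move from strategy $2$ to strategy $1$ until no group in $2$ wants to leave. The conceptual backbone is a decomposition lemma that reduces strong stability to \emph{unidirectional} group deviations. Concretely, take any coalition $S$ whose members all strictly gain and split it into the set $S_{12}$ that switches $1\to 2$ and the set $S_{21}$ that switches $2\to 1$. For a player $i\in S_{12}$ moving into strategy $2$, the members of $S_{21}$ are precisely coordination partners who are \emph{leaving} strategy $2$; since all $w(i,j)\ge 0$, keeping them in place (i.e.\ letting only $S_{12}$ deviate) can only raise $i$'s utility. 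Hence $S_{12}$ alone is already a profitable deviation, and symmetrically for $S_{21}$. Thus a state is a Strong Nash Equilibrium if and only if no single-direction coalition can profitably move, which is the only thing the algorithm needs to rule out.

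Next I would make the per-phase step constructive. Fix a phase that moves groups $1\to 2$ from the current partition $(A,B)$ (with $A,B$ the strategy-$1$ and strategy-$2$ sets). For a candidate group $T\subseteq A$, the gain of a member $i$ can be written so that it depends on $T$ only through the partners $i$ \emph{leaves behind} in $A\setminus T$; consequently this gain is monotone increasing in $T$. This monotonicity licenses a standard peeling subroutine: start with $T=A$, repeatedly delete every player whose gain is nonpositive, and stop at a fixed point $T^*$. The fixed point is either empty---certifying that no profitable $1\to 2$ coalition exists---or a genuine profitable coalition, and the usual ``first-deleted-player'' argument shows every profitable coalition is contained in $T^*$. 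Each actual group move strictly enlarges the destination strategy, so each phase performs at most $N$ moves and every move is found in polynomial time.

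Correctness then mirrors the proof of Algorithm~1. After phase~1 no coalition in strategy $1$ wishes to move to $2$; the substantive step is to show phase~2 preserves this while also eliminating profitable $2\to 1$ coalitions. I would prove the preservation statement by contradiction: suppose that, after moving a profitable group $R$ from $2$ to $1$, some group $T$ now profits by moving $1\to 2$. Composing the two moves yields a single \emph{bidirectional} deviation from the pre-$R$ state whose $1\to 2$ part is $T\cap A$ and whose $2\to 1$ part is $R\setminus T$; one checks that each member of $T\cap A$ ends strictly better off than in the pre-$R$ state (it only gained partners when $R$ arrived, then strictly gained again). Applying the decomposition lemma to this composite move shows $T\cap A$ is a profitable $1\to 2$ coalition already at the pre-$R$ state---contradicting the inductively maintained phase-$1$ invariant, provided $T\cap A\neq\emptyset$. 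When $T\subseteq R$ the composite move is purely $2\to 1$, and the contradiction instead comes from $R$ itself being profitable: a member of $T$ would have to be simultaneously better and worse off than before $R$ moved. Induction over the phase-$2$ moves gives a final state with no profitable coalition in either direction, which by the decomposition lemma is a Strong Nash Equilibrium.

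The main obstacle, and the step I would write most carefully, is this preservation lemma---in particular the ``move-back'' case $T\subseteq R$, which is exactly where the intuition ``making strategy $1$ heavier should never tempt anyone toward strategy $2$'' could fail; handling it is what forces phase~2 to move only coalitions that are themselves profitable rather than arbitrary groups. Everything else (the decomposition lemma, the monotone peeling, and the per-phase termination counts) is routine once the asymmetric gains $\gamma_{ij}w(i,j)$ are bookkept correctly, and notably none of it requires an exact potential, so the argument should go through even when relationships are arbitrarily asymmetric.
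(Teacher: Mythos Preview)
Your proposal is correct, but it takes a substantially more elaborate route than the paper. The paper's algorithm is \emph{one} phase, not two: it starts from the all-in-strategy-$1$ state and only lets coalitions move $1\to 2$ until none want to. Correctness is then a two-line ``earliest-mover'' argument: no $1\to 2$ coalition remains by termination, and for any $2\to 1$ coalition $S$, the member $i\in S$ who left strategy~$1$ earliest cannot gain, since at that moment all of $S$ was still in strategy~$1$, strategy~$1$ has only shrunk since, and $i$'s utility in strategy~$2$ has only grown. So $A_{\mathrm{final}}\cup S\subseteq A_{t_i-1}$, giving $u_i(\text{after }S\text{ moves back})\le u_i(s^{t_i-1})<u_i(s^{t_i})\le u_i(\bm{s})$. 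Your two-phase version from an arbitrary start forces you into the preservation lemma (phase~$2$ does not resurrect a $1\to 2$ coalition), which is the hardest part of your write-up and is entirely avoided by the paper's choice of starting state. On the other hand, your proposal supplies two things the paper's proof leaves implicit: the decomposition lemma reducing mixed coalitions to unidirectional ones (the paper's proof only checks pure $1\to 2$ and $2\to 1$ deviations, so this lemma is genuinely needed to complete it), and the monotone peeling subroutine that makes the ``find a profitable coalition'' step polynomial-time. Both are easy but worth stating; the peeling in particular is not mentioned in the paper.
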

\begin{quote}
\begin{enumerate}
\item Let $\bm{s_0}$ be the starting state with all players present in $1$.
\item While there exist any group of players who can deviate from strategy $1$ to strategy $2$ such that each player's utility increases strictly from their previous utility, allow this group to deviate.
\end{enumerate}
\end{quote}
\begin{proof} The proof is similar to that of Algorithm 1. Observe that by definition, no set of players can deviate from 1 and improve their utility. If a set of players from 2 deviate to 1, then it is not hard to see that the player in this set who deviated at the earliest time step in the algorithm cannot be improving her utility.\end{proof}

For three or more strategies, strong equilibrium may not exist even for the simplest special case of Symmetric SCGs where $\gamma=1$ and Nash Equilibrium is guaranteed to exist. The non-existence of strong equilibrium motivates us to look at approximate Strong equilibria and how to compute them efficiently. Our main result in this section is the efficient computation of a $2$-approximate Strong equilibrium with good social welfare. First, we exhibit a non-trivial example to prove the non-existence of Strong Nash Equilibrium.


\begin{proposition}
\label{claim_seexistence}
There are instances of the symmetric Social Coordination Game with $m=3$ groups where Strong equilibrium does not exist. However, whenever Strong equilibrium exists, its social welfare is no less than half of OPT when player relationships are symmetric.
\end{proposition}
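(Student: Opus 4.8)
The plan is to treat the two assertions separately: exhibit an explicit small symmetric instance for the non-existence claim, and prove the welfare guarantee by a coalitional ``peeling'' (charging) argument that mirrors the optimum's cluster structure.

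I will start with the welfare bound, since it is the clean part. Let $\bm{s}$ be any exact Strong Equilibrium and let $\bm{o}$ denote the optimum, with optimum groups $O_k = \{i : o_i = k\}$. Write $A_k(\bm{o}) = \sum_{i \in O_k} w_i^{k}$ and $P_k(\bm{o}) = \sum_{\{i,j\} \subseteq O_k} w(i,j)$, so that $OPT = \sum_k \big( A_k(\bm{o}) + P_k(\bm{o}) \big)$. The goal is the single inequality
$$ u(\bm{s}) \;\geq\; A(\bm{o}) + \tfrac{1}{2} P(\bm{o}), $$
which immediately yields $u(\bm{s}) \geq \tfrac{1}{2} OPT$ because all terms are nonnegative and $OPT = A(\bm{o}) + P(\bm{o})$. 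To prove it, I fix an optimum group $k$ and peel $O_k$ one player at a time. In round $t$ I let the coalition consisting of all not-yet-peeled members of $O_k$ deviate together to strategy $k$ (all non-members, including already-peeled players, stay put). In the symmetric game each such deviating member $i$ would then obtain at least $w_i^{k} + \tfrac{1}{2}\sum_{j} w(i,j)$ summed over the still-present coalition partners $j$; any coordination with frozen players already at $k$ only adds to this, so dropping it is safe and keeps the move a legal coalitional deviation. Since $\bm{s}$ is a strong equilibrium, at least one member of this coalition cannot strictly improve, so its equilibrium utility $u_{i}(\bm{s})$ is at least that lower bound; I designate this player the representative of round $t$, remove it, and repeat.

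Over the $|O_k|$ rounds each member of $O_k$ is the representative exactly once, so the left sides add up to exactly $\sum_{i \in O_k} u_i(\bm{s})$. On the right, the intrinsic terms $w_i^{k}$ sum to $A_k(\bm{o})$, and each intra-group pair weight $w(i,j)$ is counted (with coefficient $\tfrac12$) exactly once, at whichever endpoint is removed first. Hence $\sum_{i \in O_k} u_i(\bm{s}) \geq A_k(\bm{o}) + \tfrac12 P_k(\bm{o})$, and summing over $k$ gives $u(\bm{s}) \geq A(\bm{o}) + \tfrac12 P(\bm{o})$, as required. The only points that need care in the write-up are that the deviating set is always nonempty (so the strong-equilibrium condition applies) and that each pair weight is charged once rather than twice; both follow from the ordering in which representatives are peeled.

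For the non-existence claim I would construct a fixed symmetric instance ($\gamma = 1$) with $m = 3$ strategies and a constant number of players in which no strategy profile survives all coalitional deviations. The construction must genuinely exploit coalitions, since Theorem~\ref{theorem_ccexistence} guarantees that a Nash Equilibrium exists in the symmetric case; the design idea is a cyclic arrangement of intrinsic preferences together with coordination weights tuned so that in every partition into at most three groups some pair (or triple) of players can jointly relocate and strictly improve. Verification is a finite case analysis over the possible group structures, and I would use the instance's symmetry to collapse these cases. I expect this example-plus-case-analysis to be the main obstacle of the proposition: the welfare argument above is essentially forced once one looks at per-group coalition deviations, whereas pinning down the right weights and preferences so that \emph{every} configuration is coalitionally unstable requires the delicate part of the work.
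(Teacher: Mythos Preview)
Your welfare argument is exactly the paper's: peel each optimum class $O_k$ one player at a time, using the strong-equilibrium condition to find a non-improving member at each stage, and observe that each intra-class edge is charged once with coefficient $\tfrac12$; summing gives $u(\bm{s})\ge A(\bm{o})+\tfrac12 P(\bm{o})\ge \tfrac12\,OPT$. For the non-existence half you have only outlined a plan; the paper carries it out with a concrete symmetric instance on five players and three strategies (its Figure~\ref{figure:symmetricnosne}) and verifies that from every Nash equilibrium a pair of players can jointly deviate to the remaining strategy and both strictly gain, so what remains for you is precisely to produce and check such an example.
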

\begin{proof}
\begin{figure}
\centering
\includegraphics[width=.8\linewidth]{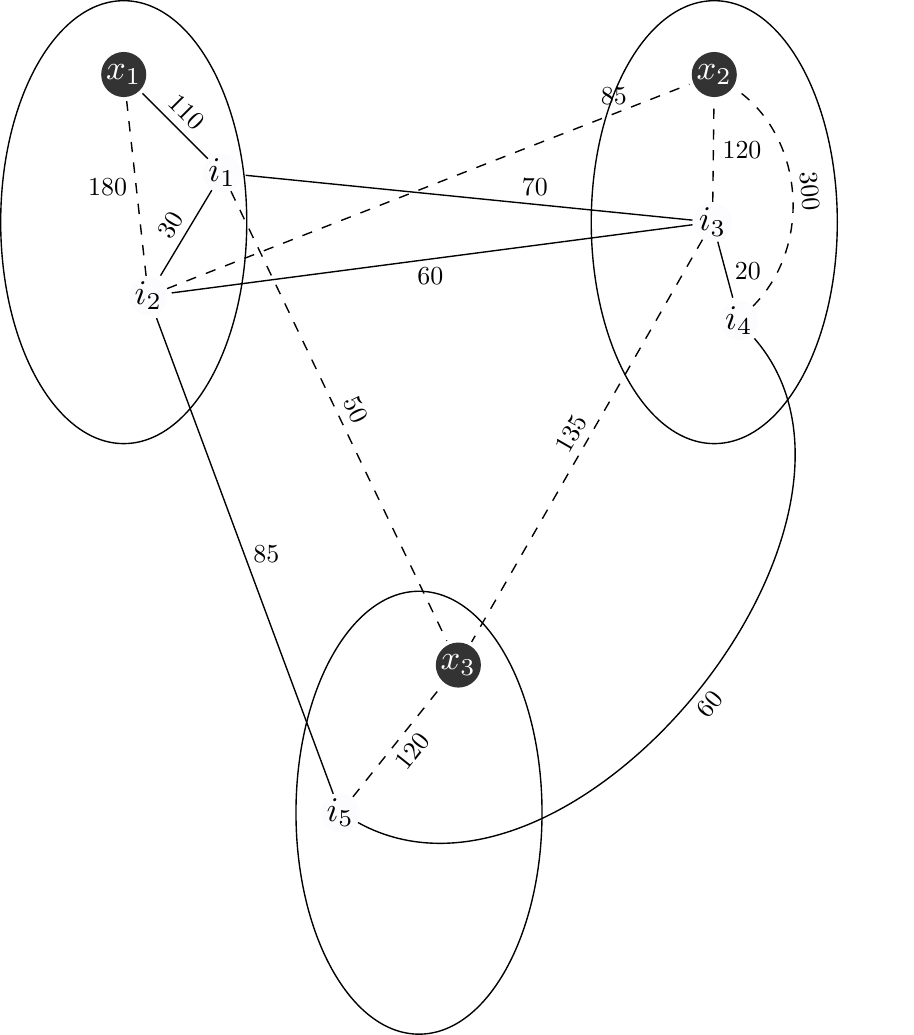}
\caption{Example: Symmetric Social Coordination Game with three strategies marked by ellipses (anchored players are shaded) and five players. For the purpose of clarity, edges between players are solid, whereas the edges between players and anchors are dashed. In the current state, all players within an ellipse have their strategies set to be that corresponding strategy.} \label{figure:symmetricnosne}
\end{figure}
Consider the instance shown in Figure~\ref{figure:symmetricnosne} with $3$ strategies and $5$ players. The figure makes use of the graph theoretic framework defined in~\ref{sec:prelim}. As the instance is symmetric, we assume that utility recieved due to coordination by both the players equals the weight on the edge ($\frac{w(i,j)}{2}$). In any Nash equilibrium for this instance, two players from two different strategies can always deviate to the third strategy and improve their utilities. For instance, consider the state shown in the figure, $i_2$ and $i_5$ can deviate to strategy $2$ and both increase their utilities. In any state where $i_2$ is not present in strategy $1$, $i_1$ and $i_3$ can jointly deviate to 3. In all other states, at least one player can perform best-response by herself. We prove the second part of the claim in the appendix. \end{proof}

\begin{proposition}
\label{prop_se2app}
For $\alpha \geq 1$, the \emph{One-shot $\alpha$-BR} returns a $(1+\alpha)$-Approximate Strong Equilibrium.
\end{proposition}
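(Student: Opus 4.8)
The plan is to reuse the machinery of Lemma~\ref{1shotbr_stabilityfactor}. Recall that the solution $\bm{s}$ returned by \onebr is a $\max(\alpha, \frac{1}{\alpha}+1)$-approximate equilibrium, and more precisely that a single deviator (a player who left $k_0$) can improve by at most $1+\frac{1}{\alpha}$, while a player still at $k_0$ can improve by at most $\alpha$. Now fix a deviating coalition $S$ moving from $\bm{s}$ to a profile $\bm{s'}$ (only members of $S$ change strategy). For each $i\in S$ let $P_i=\sum_{j\in S\setminus\{i\},\, s_j'=s_i'}\gamma_{ij}w(i,j)$ denote the coordination $i$ receives \emph{from fellow coalition members} sharing her new strategy. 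The first step is the decomposition $u_i(\bm{s'})\le u_i(s_i',s_{-i})+P_i$: the partners of $i$ at $s_i'$ under $\bm{s'}$ split into non-$S$ players (identical in $\bm{s}$ and $\bm{s'}$, hence already counted in the single-deviation value $u_i(s_i',s_{-i})$) and $S$-members, whose extra contribution is at most $P_i$.

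Next I would argue by contradiction: suppose every member of $S$ improves by a factor strictly greater than $1+\alpha$. Combining $u_i(\bm{s'})>(1+\alpha)u_i(\bm{s})$ with the decomposition and the single-deviation bounds of Lemma~\ref{1shotbr_stabilityfactor} gives, for each $i\in S$, a lower bound on the purely intra-coalition term: $P_i>u_i(\bm{s})$ when $i$ stayed at $k_0$ (using $u_i(s_i',s_{-i})\le\alpha u_i(\bm{s})$), and $P_i>(\alpha-\tfrac{1}{\alpha})u_i(\bm{s})$ when $i$ deviated (using $u_i(s_i',s_{-i})\le(1+\tfrac1\alpha)u_i(\bm{s})$). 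Summing over $i\in S$ and using $\gamma_{ij}+\gamma_{ji}=1$ turns $\sum_{i\in S}P_i$ into the total realized intra-coalition coordination weight $P_{SS}(\bm{s'})=\sum_{\{i,j\}\subseteq S,\, s_i'=s_j'}w(i,j)$, yielding a strict lower bound on $P_{SS}(\bm{s'})$ in terms of $\sum_{i\in S}u_i(\bm{s})$. The clean case is when the coalition was already colocated in $\bm{s}$ (for instance all members sitting at $k_0$): then every pair of $S$ coordinates under $\bm{s}$, so $\sum_{i\in S}u_i(\bm{s})\ge\sum_{\{i,j\}\subseteq S}w(i,j)\ge P_{SS}(\bm{s'})$, which directly contradicts the summed lower bound and pins the factor at exactly $1+\alpha$.

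The hard part is the general case, where members of $S$ were spread across different strategies in $\bm{s}$ and only regroup under $\bm{s'}$: there the coordination $P_{SS}(\bm{s'})$ was not realized in $\bm{s}$, so $\sum_i u_i(\bm{s})$ need not dominate it, and the clean contradiction above breaks. The plan to close this is a charging argument that exploits the one-shot, monotone structure of the dynamics: order the members of $S$ by the time they left $k_0$ (with $k_0$-stayers treated as latest), and for each coordinating pair $\{i,j\}$ realized at a common strategy $k$ in $\bm{s'}$, charge $w(i,j)$ to the later-leaving member, using the fact that at her decision step the earlier member was already frozen at its final strategy, so her best-response value already accounted for that potential coordination. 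This converts $P_{SS}(\bm{s'})$ into a sum of single-deviation quantities bounded by the $u_i(\bm{s})$'s, recovering the contradiction. I expect the genuine obstacle to be the subcase of a pair that migrates together onto a strategy that \emph{neither} occupied in $\bm{s}$, which the charging must absorb by appealing to the latest mover within each new group and the monotonic growth of non-$k_0$ strategy sets established in the proof of Lemma~\ref{1shotbr_stabilityfactor}. Once the aggregate bound on $P_{SS}(\bm{s'})$ is in place, the contradiction shows that at least one member of $S$ cannot improve by more than $1+\alpha$, which is precisely the definition of a $(1+\alpha)$-approximate strong equilibrium.
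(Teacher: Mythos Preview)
Your clean case (coalition entirely inside $k_0$) is correct and matches the paper's second case. The gap is in the hard case, and it is real: the charging scheme you sketch does not close.

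Concretely, you propose to charge each realized pair $\{i,j\}\subseteq S$ with $s_i'=s_j'=k$ to the later-leaving member, ``using the fact that at her decision step the earlier member was already frozen at its final strategy.'' But the earlier member is frozen at her strategy in $\bm{s}$, not at the coalition's target $k$; so the later member's best-response value at her decision time need not see that pair at all. This is exactly the subcase you flag as the ``genuine obstacle,'' and nothing in the proposal resolves it. Worse, your decomposition $u_i(\bm{s'})\le u_i(s_i',s_{-i})+P_i$ is taken with respect to the \emph{final} profile $\bm{s}$, so the single-deviation bound you invoke for a $V_1$-member is already $u_i(s_i',s_{-i})\le(1+\tfrac{1}{\alpha})u_i(\bm{s})$; adding any nontrivial bound on $P_i$ on top of that overshoots $1+\alpha$ for small $\alpha$, and the summed contradiction does not materialize because $P_{SS}(\bm{s'})$ can consist entirely of coordination that was \emph{absent} in $\bm{s}$.

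The paper sidesteps all of this by abandoning the aggregate argument and producing a single witness directly. If $S\cap V_1\neq\emptyset$, take the player $i\in S\cap V_1$ who left $k_0$ \emph{earliest}, at time $t$. The point is that at time $t$ every other member of $S$ is still sitting at $k_0$ (the $V_2$-members never leave, and the other $V_1\cap S$ members leave later). Hence every partner of $i$ at her coalition strategy $k=s_i'$ lies in $X_k^t\cup X_{k_0}^t$, which gives the time-$t$ decomposition
\[
u_i(\bm{s'}) \;\le\; u_i(k,s_{-i}^t) + u_i(k_0,s_{-i}^t) \;\le\; u_i(s_i,s_{-i}^t) + \tfrac{1}{\alpha}u_i(s_i,s_{-i}^t) \;\le\; \bigl(1+\tfrac{1}{\alpha}\bigr)u_i(\bm{s}),
\]
using that $s_i$ was $i$'s best response at time $t$ and that the move improved her by at least $\alpha$. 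Since $1+\tfrac{1}{\alpha}\le 1+\alpha$ for $\alpha\ge 1$, this one player already blocks the coalition. The two ingredients you were missing are (i) pick the \emph{earliest} (not latest) mover, so the rest of $S$ is still pooled at $k_0$, and (ii) decompose with respect to the time-$t$ configuration rather than the final $\bm{s}$, so there is no double use of the $k_0$ mass.
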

We remark here that for $\alpha=1$, we get a $2$-Approximate Strong Equilibrium whose social welfare is at least half of that mentioned in Theorem~\ref{theorem_approx}. Therefore, our \emph{One-shot $\alpha$-BR} algorithm also offers resilience against group deviations. The full proof of the above proposition is present in the Appendix.

\emph{(Proof Sketch)} Let $\bm{s}$ be the solution returned. For any player $i$ whose strategy is $k_0$, for any other strategy $k$, we have $\alpha u_i(\bm{s}) \geq u_i(k,\bm{s_{-i}})$. Therefore, after any group deviation to strategy k (if the final strategy is $\bm{s'}$), we have $u_i(\bm{s'}) \leq u_i(k,\bm{s_{-i}}) + u_i(\bm{s}) \leq (\alpha+1) u_i(\bm{s})$, which gives us the bound. For players whose strategy is other than $k_0$, their utility cannot increase a factor more than $1+\frac{1}{\alpha} \leq (1+\alpha)$. $\blacksquare$

\section{Incentivizing Players} 

So far, we have looked at approximate equilibria for our game, which become fully stable when there are some incentives provided to players or when there is limited inertia in switching. We now consider a more explicit incentivizing technique where a central authority can provide arbitrary payments to any subset of players in order to enforce a desired solution. Ideally, this desired solution is the social optimum although we later consider other high welfare solutions which are easily computable. In such situations, there are often budgets constraints which determine the total such payments that can be made and we wish to obtain bounds on the minimum budget required for any given instance. Our main result here is the complement of Theorem~\ref{theorem_approx}. Informally our result implies that when the solution returned by the Algorithm in Theorem~\ref{theorem_approx} does not have a high social welfare, then we can `stabilize' a solution that is close to $OPT$ with a very small budget. Together, the results indicate that for any instance, we can either compute a good solution of provide small payments to enforce a solution close to OPT.

We now define our incentive scheme: we are interested in `stabilizing' a given solution $\bm{s}$ by providing each player $i$ a payment of $\nu_i OPT$, where $0\leq\nu_i\leq1$, such that her total utility is now $u_i(\bm{s}) + \nu_i \cdot OPT$. More precisely, players are provided this additional utility if and only if they stick to their prescribed strategy under $\bm{s}$. A solution is said to be successfully stabilized if after the additional payments, no player wishes to deviate from strategy $s_i$. Our goal is to bound $\nu$, the total payment that is required as a fraction of the optimum welfare, i.e., $\nu \cdot OPT = \sum_{i}\nu_i \cdot OPT$. Now suppose for that for a given instance of the coordination game, the $\alpha$-Approximate Equilibrium returned by our Algorithm of Theorem~\ref{theorem_approx} has a social welfare of $\rho_\alpha \cdot$OPT for $\rho_\alpha \leq 1$. Note that $\rho_\alpha$ is actually a function of the given instance which we omit as its meaning here is clear. Our second main result is the following. 

\begin{theorem}
\label{theorem_payment}
For any given instance, the optimum solution can be stabilized by providing direct payments to players such that the total payment is at most a fraction $\nu \leq \frac{\rho_\alpha}{(\alpha-1)}$ times the social welfare of OPT.
\end{theorem}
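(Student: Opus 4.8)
The plan is to first reduce the theorem to a clean combinatorial statement about the optimum, and then connect that statement to the welfare guarantee $\rho_\alpha$.

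First I would pin down the minimum payment. To stabilize $OPT$, player $i$ keeps the payment $\nu_i\cdot OPT$ only if she does not deviate, so the no-deviation constraint is $u_i(OPT)+\nu_i\cdot OPT\ge u_i(k',OPT_{-i})$ for every alternative strategy $k'$ (the other players stay put, hence $i$'s deviation payoff is evaluated against $OPT_{-i}$). Thus the cheapest stabilizing payment for $i$ is $p_i=\max\big(0,\ \max_{k'}u_i(k',OPT_{-i})-u_i(OPT)\big)$, and the total budget is $\nu\cdot OPT=\sum_i p_i$. Writing $\delta_i$ for $i$'s best deviation gain at $OPT$, the goal becomes $\sum_i \delta_i^+\le\tfrac{\rho_\alpha}{\alpha-1}OPT$.

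The second and main structural step is to bound $\sum_i \delta_i^+$ using the global optimality of $OPT$. When player $i$ switches from her $OPT$-strategy $s_i$ to her best response $k'$, the change in \emph{total} welfare equals $\delta_i$ plus the net change in her neighbors' coordination utility (the $\gamma_{ji}$-shares of the edges she leaves and joins). Because $OPT$ is a global welfare maximum this total change is nonpositive, and after cancelling the intrinsic terms it forces $\delta_i\le\sum_{j:\,s_j=s_i}\gamma_{ji}w(i,j)$; that is, a player can never gain more than the coordination benefit her own group-mates extract from her. Summing over $i$ and pairing the two endpoints of each $OPT$-coordinated edge (which contribute $\gamma_{ji}w(i,j)$ and $\gamma_{ij}w(i,j)$, totalling $w(i,j)$) collapses the bound to the coordination component of the optimum, giving $\sum_i \delta_i^+\le P(OPT)$.

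It remains to relate $P(OPT)$ to the welfare $u(\bm{t})=\rho_\alpha\cdot OPT$ of the solution $\bm{t}$ returned by the algorithm of Theorem~\ref{theorem_approx}, which is where the factor $\tfrac{1}{\alpha-1}$ must enter. Concretely I would try to show $u(\bm{t})\ge(\alpha-1)P(OPT)$; combined with the previous step this yields $\nu\cdot OPT\le P(OPT)\le\tfrac{\rho_\alpha}{\alpha-1}OPT$ exactly. The natural source of the $(\alpha-1)$ factor is the \emph{One-shot $\tfrac{1}{\alpha-1}$-BR} run inside that algorithm, whose welfare guarantee (Lemma~\ref{1shotbr_swanchor} with parameter $\tfrac{1}{\alpha-1}$) scales precisely like $(\alpha-1)$, together with the fact that the algorithm returns whichever of its two runs is better.

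I expect this last step to be the main obstacle. The worst-case welfare bounds (Lemmas~\ref{1shotbr_swanchor} and~\ref{lemma_asymmetricsw}) are too lossy to use as a black box here: on coordination-heavy instances $P(OPT)$ can be close to the full coordination mass $P_T$, while the guaranteed welfare only retains a $\tfrac{\alpha}{\gamma+1}$-fraction of $P_T$, so a naive substitution produces a $\gamma$-dependent bound rather than $\tfrac{\rho_\alpha}{\alpha-1}$. The resolution I would pursue is an instance-specific argument instead of the blanket lemma: exactly when $P(OPT)$ is large, $OPT$ holds players together for coordination, and a player enjoying large coordination at the common start $k^*$ never triggers a factor-$\alpha$ best response, so the algorithm itself retains that coordination and $u(\bm{t})$ is correspondingly large. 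Making precise the statement ``the algorithm keeps exactly the coordination that makes $OPT$ expensive to stabilize,'' so that $u(\bm{t})\ge(\alpha-1)P(OPT)$ holds on every instance and not merely in the worst case, is the crux of the argument.
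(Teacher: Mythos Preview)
Your first two steps are sound: the bound $\sum_i \delta_i^+ \le P(OPT)$ does follow from single-player optimality of $OPT$ exactly as you argue. But this is \emph{not} the intermediate bound the paper uses, and your third step---showing $u(\bm{t})\ge(\alpha-1)P(OPT)$---remains genuinely open. The ``instance-specific'' resolution you sketch is too vague to be a plan, and the heuristic that heavy coordination at $k^*$ suppresses factor-$\alpha$ deviations need not hold when intrinsic preferences elsewhere are large; you have correctly identified the obstacle but not removed it.

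The paper sidesteps this obstacle by routing through $A_T$ instead of $P(OPT)$: it shows $\nu\cdot OPT \le A_T$, and then the \emph{One-shot $\tfrac{1}{\alpha-1}$-BR} run inside the algorithm already guarantees $u(\bm{t})\ge(\alpha-1)A_T$ (Lemma~\ref{1shotbr_swanchor} with parameter $\tfrac{1}{\alpha-1}$), giving $A_T\le\tfrac{\rho_\alpha}{\alpha-1}OPT$ immediately and $\gamma$-free. The key idea you are missing is how to reach $A_T$. Write the best-response payoffs as $\sum_{i\in X}u_i(k_i,s^*_{-i})=A(X)+P(X)$ (intrinsic plus coordination). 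The crucial lemma is $P(X)\le A(s^*)$: comparing $OPT$ to the state where everyone sits in one strategy gives $A(s^*)+P(s^*)\ge P_T$; and since every coordination edge contributing to $P(X)$ is an edge \emph{cut} by $OPT$ (each $k_i\neq s^*_i$), we have $P(X)+P(s^*)\le P_T$, hence $A(s^*)\ge P(X)$. Then
\[
\nu\cdot OPT \;\le\; A(X)+P(X)-\sum_{i\in X}w_i^{s^*_i}\;\le\; A(X)+A(s^*)-\sum_{i\in X}w_i^{s^*_i}\;=\;\sum_{i\in X}w_i^{k_i}+\sum_{i\notin X}w_i^{s^*_i}\;\le\;A_T.
\]
So the paper trades the coordination part of the deviation payoffs for intrinsic-preference mass via global optimality of $OPT$, and then the link to $\rho_\alpha/(\alpha-1)$ is a one-line application of the existing welfare lemma---no instance-specific analysis of the dynamics needed.
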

The above result indicates that if we run our algorithm from Theorem~\ref{theorem_approx} and get a solution with low social welfare ($\rho_\alpha$ is small), then we can stabilize the optimum solution with a total payment that is approximately equal to $\rho_\alpha \cdot$OPT. We previously observed that when both $\gamma$ and $m$ are large, the solution returned by our algorithm may not be too efficient. In such cases, we can do much better by implementing the optimal solution and providing small incentives to players.
\begin{proof}
To be concrete, let us assume that we have run the Algorithm of Theorem~\ref{theorem_approx} and obtained a solution whose social welfare is a fraction $\rho_\alpha$ times OPT for that given instance. Let $s^*$ be the social optimum and by definition, $u(s^*) = A(s^*) + P(s^*)$, i.e., the components of the total welfare derived from players' intrinsic preferences and that obtained from player relationships respectively. Let $X \subseteq N$ denote the players to whom payments have to be provided in $s^*$ in order to prevent them from deviating and suppose that $\forall i \in X, k_i$, denotes their best-response strategy. Now how much additional utility $\nu_i$OPT should we provide player $i$? The minimum such incentive required must equal the additional utility that she would gain by performing best-response from OPT, i.e., $v_i OPT = u_i(k_i,s^*_{-i}) - u_i(s^*)$. The second term in the right hand side is bounded by the utility that players derive from the anchors alone. Therefore, we have,
$$\sum_{i \in X} (u_i(k_i,s^*_{-i}) - w_i^{s^*_i}) \geq \sum_{i \in X}(u_i(k_i,s^*_{-i}) - u_i(s^*)).$$
It suffices therefore to say that the term in the LHS of the above inequality serves as an upper bound for the total payments required to stabilize the optimum. Let us overload notation and represent $\sum_{i \in X}u_i(k_i,s^*_{-i}) = A(X) + P(X)$, where as usual the first term represents the utility coming from the intrinsic preferences and the second term the coordination utility.  Now, we have $\nu OPT \leq A(X) + P(X) - \sum_{i \in X}w_i^{s^*_i}$. 

Consider a strategy vector $s'$, where all players choose the same strategy, say $k_0$, for some $1 \leq k_0 \leq m$. We know, $u(OPT) \geq u(s')$. We therefore, have $A(s^*) + P(s^*) \geq A(s') + P(s') \geq P(s') = P_T \geq P(s^*) + P(X)$. The final inequality comes from the fact that $P(X)$ and $P(OPT)$ are disjoint, i.e., the coordination utility that players would derive by deviating from $OPT$ is not a part of $OPT$ itself. But together $P(X) + P(OPT) \leq P_T$, which represents the total coordination utility derived when all players are choosing the same strategy. Removing $P(s^*)$ from both sides, we get, $A(s^*) \geq  P(X)$. Finally we have,
\begin{eqnarray}
\nu OPT &\leq& A(X) + P(X) - \sum_{i \in X}w_i^{s^*_i}\\
& \leq & A(X) + A(s^*) - \sum_{i \in X}w_i^{s^*_i}\\
& = & \sum_{i \in X}w_i^{k_i} + \sum_{i \in N}w_i^{s^*_i} - \sum_{i \in X}w_i^{s^*_i}\\
& = & \sum_{i \in X}w_i^{k_i} + \sum_{i \notin X}w_i^{s^*_i} \\
& \leq & A_T.
\end{eqnarray}
So we now have $A_T$, the sum of the utilities that each selfish player derives from choosing her favorite strategy as an upper bound for $\nu OPT$. We know from the proof of Theorem~\ref{theorem_approx}, that our $\alpha$-Approximate Equilibria have a utility of at least $A_T(\alpha-1)$. So we conclude $\nu OPT \leq \frac{\rho_\alpha}{\alpha-1}$. \end{proof}
Putting this in perspective, if our $1.618$-Approximate Equilibrium has a Social Welfare that is one-tenth of OPT, then we can stabilize the optimum by providing total payments no greater than sixteen percent of OPT. The bottom line for a central enforcer is that a little incentivizing goes a long way in ensuring socially efficient outcomes in coordination games. For every given instance, we can either apply the algorithm of Theorem~\ref{theorem_approx} to compute a good quality approximate equilibrium or enforce even the optimum solution with a small total budget, thereby always ensuring high welfare, stable outcomes. Unfortunately, computing the optimum is NP-hard and therefore, we consider stabilizing other high quality solutions that are a good approximation to $OPT$. In particular, we claim that the $0.8535$ approximation to OPT provided by the polynomial-time algorithm in~\cite{langberg2006approximation} can be stabilized with the same total payments as required for the optimum.
\begin{corollary}
There exists a solution computable in polynomial time whose social welfare is at least $0.8535$ times OPT, which can be stabilized by providing total payments $\nu$ no greater than $\frac{\rho_\alpha}{\alpha-1}$ times OPT.
\end{corollary}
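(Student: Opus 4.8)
The plan is to re-run the argument of Theorem~\ref{theorem_payment} essentially verbatim, but with the $0.8535$-approximate solution (call it $\bar{\bm{s}}$) substituted for the optimum $s^*$, after isolating the one place where optimality of $s^*$ was actually invoked. Reading that proof, optimality is used exactly once: to assert $u(s^*)\geq u(s')$ for the all-in-$k_0$ vector $s'$, which gives $A(s^*)+P(s^*)\geq P_T$ and, combined with the disjointness bound $P(s^*)+P(X)\leq P_T$, yields the crucial inequality $A(s^*)\geq P(X)$. Every other step --- the bound $\nu\,OPT \leq A(X)+P(X)-\sum_{i\in X}w_i^{s^*_i}$, the disjointness bound itself, and the telescoping to $A_T$ --- is purely combinatorial and holds for an arbitrary strategy vector. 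So the only structural property I really need from $\bar{\bm{s}}$ is the single inequality $u(\bar{\bm{s}})\geq P_T$.

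The obstacle is that the algorithm of~\cite{langberg2006approximation} guarantees only $u(\bar{\bm{s}})\geq 0.8535\cdot OPT \geq 0.8535\,P_T$, which can fall short of $P_T$ (for instance when intrinsic preferences are negligible, so that a monochromatic solution is essentially optimal). I would resolve this by stabilizing not the raw Langberg output but the welfare-maximizing member of the set consisting of Langberg's solution together with the $m$ all-in-$k_0$ vectors. This $\bar{\bm{s}}$ is still poly-time computable; its welfare is at least that of Langberg's solution, hence at least $0.8535\,OPT$, so it meets the welfare requirement of the corollary; and because every all-in-$k_0$ vector has coordination welfare exactly $P_T$ and therefore total welfare at least $P_T$, the maximum satisfies $u(\bar{\bm{s}})\geq P_T$ as well. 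Thus $\bar{\bm{s}}$ simultaneously satisfies the welfare bound and the one inequality the payment argument needs.

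With $u(\bar{\bm{s}})\geq P_T$ in hand, the chain of Theorem~\ref{theorem_payment} reproduces unchanged: disjointness gives $P(X)\leq P_T-P(\bar{\bm{s}})\leq A(\bar{\bm{s}})$, whence $\nu\,OPT \leq A(X)+P(X)-\sum_{i\in X}w_i^{\bar{s}_i} \leq A(X)+A(\bar{\bm{s}})-\sum_{i\in X}w_i^{\bar{s}_i} = \sum_{i\in X}w_i^{k_i}+\sum_{i\notin X}w_i^{\bar{s}_i}\leq A_T$. Invoking the bound from the proof of Theorem~\ref{theorem_approx} that the computed $\alpha$-approximate equilibrium has welfare $\rho_\alpha\,OPT \geq (\alpha-1)A_T$, we get $A_T\leq \frac{\rho_\alpha}{\alpha-1}OPT$ and hence $\nu\leq \frac{\rho_\alpha}{\alpha-1}$, as claimed. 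The main thing to be careful about is re-verifying the disjointness bound $P(X)+P(\bar{\bm{s}})\leq P_T$ for a general, non-optimal $\bar{\bm{s}}$: each edge $(i,j)$ is either coordinating under $\bar{\bm{s}}$ (contributing $w(i,j)$ to $P(\bar{\bm{s}})$ and, since a deviator always moves to a strategy different from its current one, nothing to $P(X)$) or cut (contributing nothing to $P(\bar{\bm{s}})$ and at most $\gamma_{ij}w(i,j)+\gamma_{ji}w(i,j)=w(i,j)$ to $P(X)$), so it contributes at most $w(i,j)$ to the sum in either case, and summing over all edges gives the bound.
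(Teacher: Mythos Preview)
Your proposal is correct and matches the paper's own argument almost exactly. The paper likewise takes the maximum of Langberg's solution and a single monochromatic vector (the all-in-$k^*$ one, which is of course the best of your $m$ candidates since they all share $P=P_T$), observes that this forces $u(\bar{\bm s})\geq P_T$, and then replays the proof of Theorem~\ref{theorem_payment} verbatim; your explicit re-verification of the disjointness bound $P(X)+P(\bar{\bm s})\leq P_T$ for non-optimal $\bar{\bm s}$ is a detail the paper simply asserts.
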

\emph{(Proof Sketch)} We first take the $0.8535$ approximation to the optimum returned by the Algorithm in ~\cite{langberg2006approximation}. Then we consider a solution where all players are present in $k^*$, where $k^*$ is the strategy that maximizes the total intrinsic utility derived by players as we defined previously. Now, let $\bm{s}$ be the solution among these two that has higher social welfare. We claim that $\bm{s}$ can be stabilized with a total payment of $\frac{\rho_\alpha}{\alpha-1}$ times OPT. We do not go over the proof as it is exactly similar to the proof of Theorem~\ref{theorem_payment}. $\blacksquare$

\subsection{Influencing Strong Equilibrium by strengthening relationships}
Previously, we showed that although strong equilibrium may not exist in general, a 2-Approximate strong equilibrium always exists and can be computed. In this section, we look at $\alpha$-Approximate Strong Equilibrium for $1 \leq \alpha \leq 2$ and show that these always exist in the presence of a little additional influence. Our influencing technique is quite different from our previous strategy of providing players direct incentives. Specifically, player $i$ receives full utility from all the other players choosing the same strategy such that $w(i,j)>0$ and a small fraction $\omega$ of the utility from players with whom they originally had $w(i,j) = 0$. From the perspective of viewing our game as a network game, we are essentially creating links between players. We only consider SCGs that obey the correlated coordination condition to eliminate uncertainty over the reward sharing.

\textbf{Model:} We first consider an equivalent formulation of the SCGs that obey the CC condition. Namely, let us associate two parameters with each player $a_i$ and $b_i$ such that when players $i$ and $j$ coordinate, player $i$ receives a utility of $a_ib_jw(i,j)$ and player $j$ receives $a_j b_i w(i,j)$. It is not hard to see that this is at least as general as the CC condition\footnote{given $\bm{\gamma}=(\gamma_1, \cdots, \gamma_N)$ and $w(i,j)$ for all $(i,j)$, we simply replace $w(i,j)$ with ($\gamma_i + \gamma_j$)$w(i,j)$, set $a_i = \gamma_i \; \forall i$ and $b_i=1 \; \forall i$.}. We add an additional constraint over the weights, namely $w(i,j) = 0$ or $w(i,j)=1$ or $w(i,j) = -\infty$. The minus infinity weights can be viewed as conflict edges. In other words, players $i$ and $j$ can never choose the same strategy. Such a model with conflict edges was first considered in~\cite{kleinberg2013information} to study the formation of gossip networks. Given such an SCG, we now define the $\omega$ extension of the game for $\omega \in [\frac{1}{2}, 1]$ to be the social coordination game with the same parameters except $w(i,j) = \omega$ where it was previously zero. 

The incentivizing technique that we use here is quite intuitive. We let players derive full benefit from their neighbors present in the same group, but also give them to a fraction $\omega$ utility from edges where $w(i,j)=0$ originally. Formally, given a SCG, let $N(i)$ be all the $j$ such that $w(i,j)=1$ and $F(i)$ be all the j such that $w(i,j)=0$. After incentivizing, in any given strategy $\bm{s}$, a player's utility is 
$$u_i(\bm{s}) = \sum_{\substack{s_j=s_i \\ j \in N(i)}} a_i b_j + \omega \sum_{\substack{s_j = s_i \\ j \in F(i)}} a_i b_j $$

Our following result can be viewed as an extension of a result in~\cite{kleinberg2013information}, showing the existence of Approximate Strong Equilibria.

\begin{proposition}
The $\omega$-extension of any given SCG obeying the correlated coordination condition with weights as defined above admits a $\frac{1}{\omega}$-Approximate Strong Equilibrium.
\label{theorem_sealphaapproximate}
\end{proposition}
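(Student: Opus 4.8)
The plan is to transfer an exact Strong Equilibrium of a related ``full'' game to an approximate one of the $\omega$-extension by a pointwise sandwich on utilities. Let $G_1$ denote the $\omega$-extension evaluated at $\omega=1$; that is, the Social Coordination Game in which every non-conflict pair is a full weight-$1$ friend, with the same parameters $a_i,b_i$ and the same conflict ($-\infty$) edges. Write $\tilde u_i(\bm{s})$ for the utility of player $i$ in $G_1$ and $u_i(\bm{s})$ for her utility in the $\omega$-extension. Comparing the two utilities term by term, each contribution of $u_i$ carries coefficient $1$ on an $N(i)$-edge and $\omega$ on an $F(i)$-edge, whereas the corresponding contribution of $\tilde u_i$ always carries coefficient $1$; since $\omega\in[\tfrac12,1]$ and the terms $a_i b_j$ are nonnegative, this gives the pointwise bounds $\omega\,\tilde u_i(\bm{s})\le u_i(\bm{s})\le \tilde u_i(\bm{s})$ for every strategy vector $\bm{s}$ and every player $i$. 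This first step is purely mechanical.

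Granting for the moment that $G_1$ admits an exact Strong Equilibrium $\bm{s^*}$ (discussed below), I would then argue that $\bm{s^*}$ is a $\tfrac1\omega$-Approximate Strong Equilibrium of the $\omega$-extension. Suppose not: then there is a coalition $S$ and a feasible deviation $\bm{s'}$ (feasible in $G_1$ as well, since conflict edges are $-\infty$ in both games) with $u_i(\bm{s'})>\tfrac1\omega\,u_i(\bm{s^*})$ for every $i\in S$. Chaining the sandwich gives, for each such $i$, $\tilde u_i(\bm{s'})\ge u_i(\bm{s'})>\tfrac1\omega\,u_i(\bm{s^*})\ge \tfrac1\omega\,\omega\,\tilde u_i(\bm{s^*})=\tilde u_i(\bm{s^*})$. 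Thus every member of $S$ strictly improves her utility in $G_1$ when the coalition deviates from $\bm{s^*}$ to $\bm{s'}$, contradicting that $\bm{s^*}$ is a Strong Equilibrium of $G_1$. The degenerate case $\tilde u_i(\bm{s^*})=0$ needs no separate treatment, since the same chain forces $\tilde u_i(\bm{s'})>0$. This reduction is the conceptual core and is short.

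The real work, and the main obstacle, is establishing that $G_1$ itself possesses an exact Strong Equilibrium. Here $G_1$ is a coordination game on the complete graph minus the conflict edges, with the correlated (asymmetric) sharing rule, so it is an instance of the gossip-network model of~\cite{kleinberg2013information}, and the desired existence statement is precisely the result we are extending. I would take $\bm{s^*}$ to be a feasible state maximizing the ordinal potential $\Phi(\bm{s})=\sum_{s_i=s_j}b_i b_j\,w(i,j)$, which is a genuine ordinal potential for correlated-coordination games by the same computation as in Theorem~\ref{theorem_ccexistence} (note the weighting $b_i b_j$, rather than a raw edge count, is what makes the asymmetric sharing work). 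Global maximality of $\Phi$ immediately rules out single-player deviations, i.e.\ Nash stability; the delicate part is coalition-proofness, where a coalition that clusters together creates new intra-group same-strategy edges and a naive potential comparison loses a factor tied to these doubly-counted internal edges. The key is to exploit the complete-minus-conflict structure: because every non-conflicting pair is already a friend, the coordination a deviating coalition can manufacture by forming new internal links is already saturated by the global optimum of $\Phi$, so no coalition can make all its members strictly improve. Carrying out this combinatorial argument for general weights $a_i,b_i$ (rather than the symmetric unit-weight case) is where the bulk of the effort lies; the $\omega$-sandwich of the first two paragraphs then upgrades this exact guarantee for $G_1$ into the claimed $\tfrac1\omega$-Approximate Strong Equilibrium for the $\omega$-extension.
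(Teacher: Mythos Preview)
Your sandwich reduction is correct and pleasant: the inequalities $\omega\,\tilde u_i(\bm{s})\le u_i(\bm{s})\le\tilde u_i(\bm{s})$ do hold termwise, and they do transport an exact Strong Equilibrium of $G_1$ to a $\tfrac1\omega$-Approximate Strong Equilibrium of the $\omega$-extension. This is a genuinely different route from the paper's.

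The gap is entirely in your base case. You propose to obtain the Strong Equilibrium of $G_1$ as the maximizer of the ordinal potential $\Phi(\bm{s})=\sum_{s_i=s_j}b_ib_j\,w(i,j)$. That potential is indeed ordinal for this class (your scaling by $b_ib_j$ is the right one), so its maximum is Nash-stable; but ordinal potentials control only unilateral moves, and your ``saturation'' heuristic for coalition deviations is not a proof. Making it precise requires exactly the kind of argument the paper supplies, and the Kleinberg et al.\ result you gesture at does not cover the general $a_i,b_i$ weights, as you yourself note.

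The paper bypasses the reduction and argues directly for every $\omega$ via a lexicographic vector potential: set $\pi_k(\bm{s})=\sum_{i:s_i=k}b_i$ and take $\bm{s^*}$ to be sorted-lexicographically maximal. For any coalition deviation $\bm{s^*}\to\bm{s}$, pick a deviator $i$ whose original strategy $k$ has the largest $\pi$-value among the deviators' original strategies; lex-maximality forces $\pi_k(\bm{s^*})\ge\pi_{k'}(\bm{s})$ for $i$'s new strategy $k'$. Because at $\omega$ every same-strategy neighbor contributes at least $\omega\,a_ib_j$ and at most $a_ib_j$, one gets $u_i(\bm{s^*})\ge\omega\,a_i(\pi_k(\bm{s^*})-b_i)$ and $u_i(\bm{s})\le a_i(\pi_{k'}(\bm{s})-b_i)$, hence $u_i(\bm{s})\le\tfrac1\omega\,u_i(\bm{s^*})$. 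The factor $\tfrac1\omega$ appears naturally as the ratio of the two coefficients; at $\omega=1$ this gives the exact Strong Equilibrium you need. So the lex argument both fills your missing base case and, once you have it, makes the sandwich detour unnecessary. If you want to keep your architecture, replace the potential-max candidate for $G_1$ by the lex-max candidate and cite the coalition step above; otherwise, run the lex argument directly on the $\omega$-extension as the paper does.
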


We use the technique of lexicographic ordering that has been used for showing  the existence of strong equilibrium in various guises in literature~\cite{harks2009strong}. The appropriate definition for lexicographic ordering is present in the appendix.
\begin{proof} Consider a $m$-dimensional vector $\pi: A \mapsto \mathbb{R}^m_{+} $ defined on each state $\bm{s}$ such that the $k^{th}$ component of the vector $\pi_k(\bm{s})$ is equal to $\sum_{i: s_i=k}b_i$. Recall that the set of allowed states $A$ only includes strategy vectors where no two conflicted users are present in the same group.

Now consider the solution $\bm{s^*}$ such that all the other solutions are sorted lexicographically smaller than $\bm{s^*}$, i.e. $\pi(\bm{s}) \preceq \pi(\bm{s^*})$ for all $\bm{s}$. We claim that $\bm{s^*}$ is a $\frac{1}{\omega}$-approximate strong equilibrium. Consider any deviation by a group of players from $\bm{s^*}$ to a new solution $\bm{s}$ and let $i$ be the player whose original strategy had the largest weight(smallest index) under the lexicographic ordering. Suppose $i$ deviates from strategy $k$ to $k'$, we claim that $\pi_k(\bm{s^*}) \geq \pi_{k'}(\bm{s})$. If this were not true, then the strategy vector $\bm{s}$ would be lexicographically larger than $\bm{s^*}$. Now the utility that $i$ received originally is
$$u_i(\bm{s^*}) = a_i(\sum_{\substack{s^*_j=s^*_i \\ j \in N(i)}} b_j + \omega \sum_{\substack{s^*_j = s^*_i \\ j \in F(i)}} b_j) \geq \omega a_i\sum_{\substack{j \neq i \\ s^*_j=k}} b_j = a_i\omega (\pi_k(\bm{s^*}) - b_i).$$
$i$'s utility after deviating is
$u_i(\bm{s}) \leq a_i\sum_{s_j = k'}b_j =  a_i(\pi_{k'}(\bm{s}) - b_i) \leq a_i(\pi_{k}(\bm{s^*}) - b_i) \leq \frac{1}{\omega}u_i(\bm{s^*})$. This completes the proof of the theorem.
\end{proof}

\begin{corollary}
When $\omega=1$, every instance of the $\omega$-extension of SCGs obeying with the correlated coordination condition admits a Strong Equilibrium.
\end{corollary}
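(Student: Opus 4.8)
The plan is to obtain this directly from Proposition~\ref{theorem_sealphaapproximate}, of which it is simply the endpoint $\omega=1$. First I would note that the admissible range $\omega \in [\tfrac{1}{2},1]$ in that proposition includes $\omega=1$, so the proposition applies verbatim to the $\omega=1$ extension and produces a state $\bm{s^*}$ that is a $\tfrac{1}{\omega}$-Approximate Strong Equilibrium, i.e. a $1$-Approximate Strong Equilibrium.

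The key conceptual step is then to observe that a $1$-Approximate Strong Equilibrium is exactly an (exact) Strong Equilibrium. By the definition of $\alpha$-Approximate Strong Equilibrium, in any deviation by a coalition at least one deviating player fails to improve her utility by a factor exceeding $\alpha$; specializing to $\alpha=1$, at least one deviating player cannot increase her utility by a factor more than $1$, i.e. she does not strictly improve. Hence no coalition can make all of its members strictly better off, which is precisely the definition of Strong Equilibrium. So the corollary is immediate once the identification $\alpha=1 \Leftrightarrow$ exact strong stability is made.

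To make this airtight I would revisit the final inequality chain in the proof of Proposition~\ref{theorem_sealphaapproximate} with $\omega=1$ substituted. For the distinguished player $i$ (the deviator whose original strategy carries the smallest lexicographic index), the chain collapses to $u_i(\bm{s}) \le a_i(\pi_{k'}(\bm{s}) - b_i) \le a_i(\pi_k(\bm{s^*}) - b_i) = u_i(\bm{s^*})$, because the $\omega$-discount on the $F(i)$ edges vanishes and the lower bound on $u_i(\bm{s^*})$ coincides with $a_i(\pi_k(\bm{s^*}) - b_i)$. Thus player $i$'s utility weakly decreases under any coalitional deviation, ruling out a strict gain for her. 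I would also record that the lexicographically maximal state $\bm{s^*}$ exists since the vector $\pi$ takes values in a finite set (there are finitely many strategy assignments respecting the conflict edges), so the lex-maximum in the proof is well defined.

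The only point requiring care — and it is minor — is the strict-versus-weak distinction at the boundary $\alpha=1$: I must confirm that the weak inequality $u_i(\bm{s}) \le u_i(\bm{s^*})$ genuinely forbids a \emph{strictly} improving group deviation. It does, since a Strong Equilibrium only forbids deviations in which every member strictly improves, and here player $i$ does not. No new estimation beyond setting $\omega=1$ in the already-established chain is needed, so I expect no substantive obstacle.
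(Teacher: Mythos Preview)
Your proposal is correct and takes essentially the same approach as the paper, which treats the corollary as immediate from Proposition~\ref{theorem_sealphaapproximate} by specializing $\omega=1$ and using that a $1$-Approximate Strong Equilibrium is by definition an exact Strong Equilibrium. Your additional remarks on the inequality chain at $\omega=1$ and on the existence of the lexicographic maximum are sound but go beyond what the paper spells out.
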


\section{Social Coordination with Complementarities}
In this section, we generalize our model of social coordination to include scenarios where the benefits due to coordination may exhibit complementarities or supermodularities. So far, we have looked at incentivizing players and computing approximate equilibrium for a very specific model of social coordination where the benefits are additively separable. However, it is not hard to see that most of our results extend to the case when utilities and welfare may be submodular in the set of players. We now generalize this further and consider SCGs wherein the benefits gained by coordinating with a set of players can be much larger than the utility gained by individually coordinating with each of them. 

\textbf{Model for Generalized Social Coordination Game.} We consider a game with players belonging to a set $N$, a strategy set $M$ and functions $w:2^{N \cup M} \mapsto \mathbb{R^+}$ and $u_i:2^{N \cup M} \mapsto \mathbb{R^+}$ for all players $i \in N$. Players still choose a single strategy and in any given state of the game $\bm{s}$, welfare due to all players $X_k$ choosing strategy $k$ is given by $w(k+X_k)$. This welfare is then distributed among the players, so we have
$$\displaystyle \sum_{i \in X_k}u_i(\bm{s}) = \sum_{i \in X_k}u_i(k+X_k) = w(k+X_k).$$ The total social welfare of the solution is given by $u(\bm{s}) = \sum_i u_i(\bm{s}) = \sum_{k \in M}w(k+X_k)$. Observe that we have made no assumptions on the functions $w$ and $u_i$. Therefore, we may have that $w(S \cup T) \geq w(S) + w(T)$, which is not possible in our previous model. We only require that the total welfare generated is divided among the players.

There has always been considerable interest in utility-maximization games where either the resources or players exhibit complementarities. Situations of interest include combinatorial auctions wherein valuation functions can be supermodular in the set of items received, or group formation where people may possess specific sets of skills. However, there has not been much progress in this realm of algorithmic game theory as even the fundamental algorithmic problem of maximizing welfare in the presence of complements is considered \textit{notoriously difficult}. Recent papers~\cite{abraham2012combinatorial, feige2013welfare, lehmann2001combinatorial} have sought to overcome this obstacle by limiting the amount of supermodularity (or complementarity) present in the valuations in combinatorial auctions. With regard to full-information games such as ours, most similar to our model is the welfare-sharing game proposed by Bachrach et al.~\cite{bachrach2013strong} who showed bounds on the Strong Price of Anarchy. Instead, here we focus on questions related to existence and computation of stable solutions.

We begin by showing that there exist instances of the Generalized SCG (with complementarities) that exhibit no stable solutions when the criterion considered is (Approximate) Nash Stability. Previously, we proved that all instances of SCGs without complementarity admit a $1.618$-Approximate Equilibrium. However, when the player relationships become complementary, there exist instances which do not admit an $\alpha$-Approximate Equilibrium for any $\alpha \geq 1$.

\begin{proposition}
For any $c \geq 1$, there exists an instance of the Generalized Social Coordination Game which does not admit an $\alpha$-Approximate Equilibrium for all $1 \leq \alpha \leq c$.
\end{proposition}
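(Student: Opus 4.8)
The plan is to construct an explicit instance of the Generalized SCG whose welfare functions exhibit strong complementarities, and show that any configuration has a player who can gain a multiplicative factor of at least $c$ by deviating. The key idea is to mimic the cyclic structure of Example~\ref{ex:1} (which ruled out exact Nash Equilibrium for $m=3$) but to amplify the payoff gaps using supermodular welfare so that the improvement factor can be made arbitrarily large. Concretely, I would take a small number of players (three will likely suffice) arranged in a cycle, and design welfare functions $w(\cdot)$ so that a player receives a tiny utility when alone or in a ``wrong'' coalition, but a huge utility (at least $c$ times larger) precisely when she successfully joins the player she wants to coordinate with. By making the coordination structure cyclic and asymmetric---player $i_j$ wants to coordinate with $i_{j+1}$ but not vice versa---I ensure that in every state at least one player is ``chasing'' another and can improve by the large factor.

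The main steps I would carry out are as follows. First, fix the number of players and strategies and specify the welfare function $w(k + X_k)$ for every possible coalition $X_k$ on every strategy $k$, choosing the numerical values so that the ratio between a player's ``satisfied'' utility and her ``unsatisfied'' utility exceeds $c$. Second, specify how $w(k+X_k)$ is split among the members of $X_k$ (the definition requires only that $\sum_{i \in X_k} u_i(\bm{s}) = w(k+X_k)$, which gives me the freedom to allocate almost all of a coalition's welfare to the single player who is ``happy'' to be there). Third, enumerate the possible configurations---since the instance is small, the reachable states fall into a few symmetry classes---and in each one exhibit a specific player whose best-response deviation multiplies her utility by a factor at least $c$. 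Finally, conclude that since no state is an $\alpha$-Approximate Equilibrium for any $\alpha \le c$, the instance has the claimed property.

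The hard part will be designing the welfare allocation so that the large improvement factor is genuinely forced in \emph{every} state, not merely in the most natural one. The subtlety is that complementarity cuts both ways: a deviating player may bring welfare to a coalition she joins, but the split of that welfare is something I control via the $u_i$, so I must verify that the split I choose is internally consistent across all states (each coalition's shares must sum to exactly its $w$-value) while still leaving each ``chasing'' player a deviation of ratio $\geq c$. I would handle this by making the complementarity threshold-like---a coalition generates large welfare only once the specific desired partner is present---and by allocating essentially all of that surplus to the player whose desired partner just arrived; this localizes the incentive analysis and makes the case check in the third step routine. A secondary check is to confirm that the cyclic asymmetry prevents any state where all players are simultaneously satisfied, exactly as in the $\sqrt{2}$ argument of Proposition~\ref{prop:nonexistence}, now with the gap widened to $c$.
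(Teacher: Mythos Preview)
Your proposal is correct and follows essentially the same approach as the paper: the paper also extends the three-player, three-strategy cyclic ``triangle'' example of Proposition~\ref{prop:nonexistence}, directly specifying utility functions $u_i$ (with values that are powers of $c$) so that in every state some player can deviate and multiply her utility by a factor of $c$. Your plan is somewhat more careful than the paper's terse construction in that you explicitly flag the welfare-sharing constraint $\sum_{i\in X_k} u_i = w(k+X_k)$, which the paper handles implicitly by simply defining $w$ as the sum of the chosen $u_i$'s.
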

\begin{proof}
The example that we use here is an extension of the triangle example from Proposition~\ref{prop:nonexistence}. Once again consider three players $i,j,k$ and three strategies $1,2,3$. We define the utility function for player $i$ for different scenarios. The functions for players $j$ and $k$ are defined similarly. $u_i(1) = u_i(1+k)= c^2, u_i(1+j) = u_i(1+j+k)=c^3, u_i(2) = c, u_i(2+j) = u_i(2+j+k)=c^3, u_i(3)=u_i(3+k)=0, u_i(3+j+k)=c$. Once again, it is not hard to verify that in any state at least one player can deviate and improve her utility by a factor $c$. Therefore, this instance admits no $\alpha$-Approximate Equilibrium for any $\alpha \leq c$.
\end{proof}

The non-existence of Approximate Nash Equilibrium for Generalized SCGs forces us to consider a more restricted model. We consider two forms of limited complementarities that have also been considered in other settings in literature. First, we look at hypergraph SCGs, a direct extension of the Network Game model mentioned in Section~\ref{sec:prelim}. We show some sufficient conditions for the existence of a Nash Equilibrium. We then move on to a model where the degree of supermodularity is bounded by some constant $r$. We adapt our \onebr algorithm to obtain a $\Theta(r)$-Approximate equilibrium.

\subsection{Hypergraph Social Coordination Games}
Once again, we view our SCG as a network game where the nodes include the players and the anchored nodes (strategies). There exist a set of hyperedges $E$, and each hyperedge has a weight $w_e$. The total welfare of a set $S$ is given by $w(S) = \sum_{e \in S}w_e$, where the summation is over the hyperedges whose member vertices all belong to $S$. The welfare due a hyperedge is distributed among the vertices that comprise the hyperedge. For each player $i \in e$, we have a constant $\gamma^e_{i}$ such that $i$'s utility due to the hyperedge is given by $\gamma^e_{i}w_e$. As is natural, the constants $\gamma^e_i$ satisfy $\sum_{i \in e}\gamma^e_i =1$. Such a model with a hypergraph induced on a set of items (instead of the players) was considered in~\cite{abraham2012combinatorial} and they provided a $r$-Approximation Algorithm for welfare maximization where $r$ is the maximum number of vertices present in a single hyperedge. We now extend our previous Correlated Coordination Condition for this Generalized SCG and show that Nash Equilibrium always exists for instances that obey the CC condition.

\textbf{(Correlated Coordination Condition)} A given instance of the SCG on hypergraphs is said to satisfy this condition if $\exists$ a vector $\bm{\gamma}=(\gamma_1, \cdots, \gamma_N)$ such that $\forall e \in E$ with $w_e >0$, we have $\gamma^e_{i}=\frac{\gamma_i}{\sum_{j \in e}\gamma_j}$ for all the vertices that belong to the hyperedge.

\begin{proposition}
Hypergraph Social Coordination Games which obey the Correlated Coordination (CC) Condition admit a potential function. Therefore, best-response dynamics always converge to a Nash Equilibrium in such games. 
\end{proposition}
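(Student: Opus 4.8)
The plan is to generalize the potential function from Theorem~\ref{theorem_ccexistence} to the hypergraph setting. For the pairwise case, the argument relied on the fact that under the CC condition, player $i$'s utility from coordinating on an edge factors as $\gamma_i \cdot \frac{w(i,j)}{\gamma_i + \gamma_j}$, where the second factor is \emph{symmetric} in the two endpoints. The natural candidate here is
$$\Phi(\bm{s}) = \sum_i \frac{w_i^{s_i}}{\gamma_i} + \sum_{e: \text{all of } e \text{ coordinate}} \frac{w_e}{\sum_{j \in e}\gamma_j},$$
where the intrinsic-preference term is kept (even if the hypergraph formulation as stated omits explicit $w_i^k$, these can be modeled as singleton hyperedges or simply carried along), and the second sum ranges over those hyperedges $e$ all of whose member players choose a common strategy. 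The key structural point is that the denominator $\sum_{j \in e}\gamma_j$ is symmetric across all vertices of $e$, so it plays the role that $\gamma_i + \gamma_j$ played in the pairwise proof.

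First I would fix a player $i$ and consider a deviation from $s_i$ to $s_i'$. I would write out the change in $i$'s utility: only the hyperedges containing $i$ contribute, and among those, only the ones that become fully coordinated at $s_i'$ (versus those that were fully coordinated at $s_i$) enter. Using the CC condition to substitute $\gamma_i^e = \gamma_i / \sum_{j \in e}\gamma_j$, the utility change becomes
$$u_i(s_i', s_{-i}) - u_i(s_i, s_{-i}) = \gamma_i\left( \frac{w_i^{s_i'}}{\gamma_i} - \frac{w_i^{s_i}}{\gamma_i} + \sum_{e \ni i,\; \text{coord. at } s_i'} \frac{w_e}{\sum_{j \in e}\gamma_j} - \sum_{e \ni i,\; \text{coord. at } s_i} \frac{w_e}{\sum_{j \in e}\gamma_j}\right).$$
The crucial observation is that the parenthesized expression is exactly the change $\Phi(s_i', s_{-i}) - \Phi(s_i, s_{-i})$: every hyperedge whose coordination status changes when $i$ moves must contain $i$ (the coordination of an edge not containing $i$ is unaffected by $i$'s move), so the terms from hyperedges not containing $i$ cancel in the potential difference. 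Hence the change in utility equals $\gamma_i$ times the change in potential.

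Since $\gamma_i > 0$, the sign of $u_i(s_i',s_{-i}) - u_i(s_i,s_{-i})$ matches the sign of $\Phi(s_i',s_{-i}) - \Phi(s_i,s_{-i})$, which is precisely the ordinal (inexact) potential property. Best-response dynamics therefore strictly increase $\Phi$, and since the number of distinct strategy profiles is finite, the dynamics must converge to a Nash Equilibrium. I expect the main obstacle to be bookkeeping rather than conceptual: one must argue carefully that a hyperedge's contribution to the potential depends only on whether \emph{all} its members share a strategy, and that $i$'s unilateral move can only toggle the coordination status of hyperedges containing $i$. Once that accounting is pinned down, the factorization through the symmetric denominator carries the argument through exactly as in the pairwise case.
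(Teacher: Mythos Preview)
Your proposal is correct and takes essentially the same approach as the paper: the paper exhibits exactly the potential $\Phi(\bm{s}) = \sum_{e\,:\,\text{all of }e\text{ coordinate}} w_e/\sum_{j\in e}\gamma_j$ and states that the argument follows the proof of Theorem~\ref{theorem_ccexistence}, which is precisely the factorization-through-$\gamma_i$ computation you spell out. The only cosmetic difference is that the paper absorbs the intrinsic-preference terms into the hyperedge sum by treating strategies as anchor nodes with $\gamma=0$ (so an edge $\{i,k\}$ contributes $w_i^k/\gamma_i$), whereas you keep $\sum_i w_i^{s_i}/\gamma_i$ as a separate summand; your parenthetical remark already anticipates this equivalence.
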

\begin{proof}
Given a social coordination game with weights $w_e, \forall e \in E$ and $\bm{\gamma}=(\gamma_1, \cdots, \gamma_N)$, that obeys the CC condition, we claim that the following is an inexact potential function for this game.
$$\Phi(\bm{s}) = \displaystyle \sum_{\substack{e:\\s_i=s_j \forall i \in e, j \in e}}\frac{w_e}{\sum_{j \in e}\gamma_j}.$$

Note that $\gamma_i = 0$ for all the anchored nodes as they do not receive any utility. The proof follows along the lines of the proof of Theorem~\ref{theorem_ccexistence}
\end{proof}

\subsection{Social Coordination Games with bounded supermodularity}
We consider general SCGs where the welfare and utility may exhibit complementarities but we bound the amount of complementarity in each player's utility by a multiplicative factor $r \geq 1$. Specifically, we define a $r$-Supermodular SCG as a game where for all players $i$, we have $u_i(S \cup T) \leq r(u_i(S) + u_i(T))$ for any $S$ and $T$. Informally this implies that even though the welfare due to a group of people coordinating is more than the sum of welfare due to smaller subgroups coordinating, the additional welfare due to complementarity is bounded by some parameter. The Social Coordination Games considered in Sections 1-5 satisfy this criterion with $r=1$ and therefore, $r$-Supermodular SCGs are natural generalizations of linear or submodular Social Coordination Games.

Similar notions of restricted complementarities were studied in~\cite{lehmann2001combinatorial} and ~\cite{feige2013welfare}. Both papers considered the optimization problem of welfare maximization and obtained approximation factors linear in the degree of supermodularity. We now consider the problem of computing an approximate equilibrium for SCGs with bounded supermodularity and show that every $r$-Supermodular SCG admits an approximate equilibrium whose stability factor is linear in the parameter $r$. It is an interesting open question whether this bound is tight or whether we can obtain stability factors sublinear in $r$.

Once again we turn to our \onebr algorithm and adjust its parameter to obtain an $r+\epsilon$-Approximate Equilibrium for some $\epsilon <1$.

\begin{theorem}
For any given instance of a $r$-Supermodular SCG, $\exists$ an $\epsilon < 1$, such that running the \onebr algorithm for $\alpha=r+\epsilon$ returns a $r+\epsilon$-Approximate Nash Equilibrium.
\end{theorem}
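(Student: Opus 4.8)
The plan is to mirror the stability analysis behind Lemma~\ref{1shotbr_stabilityfactor}, replacing the one place where that proof exploited additivity of utilities with a single application of the $r$-supermodularity hypothesis, and then to tune $\epsilon$ so that the two terms in the resulting stability bound coincide. First I would run \onebr with an arbitrary starting strategy $k_0$ (Lemma~\ref{1shotbr_stabilityfactor} holds for any $k_0$) and carry over verbatim the bookkeeping from its proof: the steps are ordered so that a single player leaves $k_0$ at each time $t$, no player ever re-enters $k_0$, and hence for every strategy $k' \neq k_0$ the final coalition satisfies $X_{k'}(\bm{s}) \subseteq X^t_{k'} \cup X^t_{k_0}$. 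The invariant~\ref{eqn_appinvariant} is untouched, since it only records that the deviating player played a best response ($u_i(s_i, s^t_{-i}) \geq u_i(k', s^t_{-i})$) and improved by a factor $\alpha$ ($u_i(s_i, s^t_{-i}) \geq \alpha\, u_i(k_0, s^t_{-i})$); neither statement uses the shape of $u_i$.

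The crux is the bound on how much a migrant $i$ (final strategy $k$, moved at time $t$) could gain by switching to its best alternative $k'$. Writing $Y = X_{k'}(\bm{s})$ and splitting it as $Y_1 = Y \cap X^t_{k'}$ and $Y_2 = Y \cap X^t_{k_0}$, I would apply the defining inequality $u_i(S \cup T) \leq r\,(u_i(S) + u_i(T))$ with $S = \{k'\} \cup \{i\} \cup Y_1$ and $T = \{k'\} \cup \{i\} \cup Y_2$ to obtain $u_i(k', \bm{s_{-i}}) \leq r\,(u_i(\{k'\}\cup\{i\}\cup Y_1) + u_i(\{k'\}\cup\{i\}\cup Y_2))$. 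Monotonicity of the coordination utility then lets me enlarge $Y_1$ to $X^t_{k'}$ and identify the first term with $u_i(k', s^t_{-i})$, and relate the second term to $u_i(k_0, s^t_{-i})$; this is exactly the $r$-supermodular analogue of the additive split $u_i(k',\bm{s_{-i}}) \leq u_i(k', s^t_{-i}) + u_i(k_0, s^t_{-i})$ used in the linear case. Feeding in invariant~\ref{eqn_appinvariant} and then $X^t_k \subseteq X_k(\bm{s})$ (again by monotonicity) yields $u_i(k', \bm{s_{-i}}) \leq r\,(1 + \tfrac{1}{\alpha})\,u_i(\bm{s})$. For the players still in $k_0$ at termination the usual argument gives a factor at most $\alpha$, so \onebr returns a $\max\!\left(\alpha,\ r(1 + \tfrac{1}{\alpha})\right)$-approximate equilibrium, generalising Lemma~\ref{1shotbr_stabilityfactor} (recovered at $r = 1$).

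It then remains to choose $\epsilon$. Setting $\alpha = r + \epsilon$ and forcing the two terms equal, $r(1 + \tfrac{1}{r+\epsilon}) = r+\epsilon$, reduces to the quadratic $\epsilon^2 + r\epsilon - r = 0$, whose positive root is $\epsilon = \tfrac{1}{2}\!\left(\sqrt{r^2 + 4r} - r\right)$. I would verify $\epsilon < 1$, which is equivalent to $\sqrt{r^2+4r} < r+2$, i.e. $0 < 4$, hence holds for every $r \geq 1$ (and at $r=1$ gives $\epsilon = \phi - 1$, consistent with the earlier bound). For this $\epsilon$ both entries of the maximum equal $r + \epsilon$, so \onebr run at $\alpha = r + \epsilon$ returns an $(r+\epsilon)$-approximate equilibrium, as claimed; indeed any $\epsilon \in \left[\tfrac{1}{2}(\sqrt{r^2+4r}-r),\,1\right)$ works, since increasing $\epsilon$ only decreases $r/\alpha$.

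The main obstacle is the middle paragraph, and specifically the treatment of the $Y_2$ term: the set $\{k'\}\cup\{i\}\cup Y_2$ carries strategy label $k'$ while its player coalition $Y_2$ is the portion inherited from the $k_0$-group, so bounding it by $u_i(k_0, s^t_{-i})$ is not purely formal and relies on the coordination structure of the generalized SCG — that a player's utility is governed by the set of partners she coordinates with, together with monotonicity of that utility as partners are added. Making this comparison airtight, rather than invoking the additive accounting available when $r = 1$, is where the argument does real work and where the leading constant could in principle degrade; everything downstream is the routine optimisation over $\epsilon$.
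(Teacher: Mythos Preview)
Your proposal is correct and follows essentially the same route as the paper: split the final $k'$-coalition as a subset of $X^t_{k'}\cup X^t_{k_0}$, apply the $r$-supermodularity inequality once to obtain the bound $r(1+\tfrac{1}{\alpha})$ for migrants, keep the trivial $\alpha$ bound for players still in $k_0$, and equate the two to get $\alpha=\tfrac{1}{2}\bigl(r+\sqrt{r(r+4)}\bigr)<r+1$. Your treatment is in fact a bit more scrupulous than the paper's, which writes the split directly as $u_i(X_{k'}(t))+u_i(X_{k_0}(t))$ without isolating the $Y_2$ issue or the monotonicity step you flag; the paper simply takes those comparisons for granted in the generalized model.
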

In other words, we can always compute a $\Theta(r)$-Approximate Nash Equilibrium for any $r$-Supermodular SCG.
\begin{proof}
Let $k_0$ be the starting strategy for all players. Once again we assume that players deviate from $k_0$ in rounds, let $X_k(t)$ refer to the set of players whose strategy is $k$ after $t$ rounds.  Suppose we allow players to deviate only when it improves their utility by a factor $\alpha$ or more (we will see what this factor is shortly). Let $\bm{s}$ be the final solution. For every player $i$ in $\bm{s}$ whose strategy is $k_0$, she cannot improve her utility by a factor more than $\alpha$ if she deviates (by definition). For any player who deviated to a strategy $k \neq k_0$ at some time $t$, if her best-response is $k'$, we then have
$$u_i(k',s_{-i}) = u_i(X_k') \leq u_i(X_k'(t) \cup X_{k_0}(t)) \leq r(u_i(X_k'(t)+u_i(X_{k_0}(t))).$$

But since $k$ was $i$'s best-response at time $t$, we have $u_i(\bm{s}) \geq u_i(X_k'(t)$ and $u_i(\bm{s}) \geq \alpha u_i(X_{k_0}(t))$. So we can finally conclude that 
$$u_i(k',s_{-i}) \leq r(u_i(X_k'(t)+u_i(X_{k_0}(t))) \leq ru_i(\bm{s})(1+\frac{1}{\alpha}).$$
The stability factors for the two sets of players are $\alpha$ and $r(1+\frac{1}{\alpha})$ respectively. In the best possible case, they are both equal, giving us $\alpha = \frac{1}{2}\displaystyle(r + \sqrt{r(r+4)}) < r+1$ for $r \geq 1$.
\end{proof}

\section*{Acknowledgements}
This work was supported in part by NSF awards CCF-0914782, CCF-1101495, CNS-1017932, and CNS-1218374.

\bibliography{bibliography}
\bibliographystyle{plain}
\newpage

\appendix
\section*{Appendix}
\renewcommand{\thesubsection}{\Alph{subsection}}
\subsection{Proofs from Section 3}
\begin{prop_app}{prelim_poa}
The Price of Anarchy (PoA) for the Social Coordination Game over all instances is at most $m$.
\end{prop_app}
\begin{proof}
\begin{definition}
A payoff-maximization game is $(\lambda, \mu)$-semi-smooth if there exists a mixed strategy $\sigma_i$ for each player $i$ such that for every outcome $s$
\begin{equation*}
\mathbf{E}_\sigma\left[\sum_{i \in N} u_{i}(\sigma_i, s_{-i}) \right] \geq \lambda \cdot u(\bm{s}^{*}) - \mu \cdot u(\bm{s}).
\end{equation*}
\end{definition}
It is easy to show that a game being semi-smooth implies all the same results from \cite{roughgarden2009intrinsic} as smoothness, including a bound of $\frac{1+\mu}{\lambda}$ for the price of total anarchy. We now show that our Social Coordination Game is $(\frac{1}{m},0)$-semi-smooth for all instances. This immediately gives us an upper bound of $m$ for the Price of Anarchy not just for Nash Equilibrium but for other more general solution concepts as well. 

For any given strategy vector $\bm{s}$, for each player $i$, let $\sigma_i$ be the mixed strategy where the player chooses each available strategy with equal probability $\frac{1}{m}$. Then we have,
\begin{align*}
\mathbf{E}\left[\sum_{i \in N} u_{i}(\sigma_{i}, s_{-i}) \right]
= \sum_{1\leq k\leq m}\frac{1}{m}(w_i^k + \sum_{s_j=k} \gamma_{ij}w(i,j))
= \frac{1}{m}(\sum_{1\leq k}w_i^k + \sum_{j \in N}\gamma_{ij}w(i,j)).
\end{align*}
The final inside the parenthesis on the right hand side represents the maximum utility that player $i$ could receive in any given solution and thereby acts as an upper bound for her utility in the optimum solution. Summing the above equation for all players, we have, $\sum_{i \in N}\mathbf{E}\left[u_{i}(\sigma_i, s_{-i}) \right] \geq \frac{1}{m} \cdot u(\bm{s}^{*})$. We therefore conclude that our Social Coordination Game is $(\frac{1}{m},0)$-semi-smooth. 
\end{proof}

\begin{prop_app}{pos_symmetric}
The price of stability is at most $(2-\frac{1}{m})$ in social coordination games with symmetric relationships, i.e. $\gamma=1$. Further, this bound is tight.
\end{prop_app}
\begin{proof}

\begin{definition}
A strategy vector $\bm{s^0}$ is said to satisfy the \textbf{Minimum Intrinsic Preference condition}(MIP) if the sum of utilities derived by players purely from their intrinsic preference for their respective strategy ($w_i^{s^0_i}$) is at least $\frac{1}{m}$ times the utility players derive from the strategies in any other solution.
\end{definition}
For a strategy vector $\bm{s^0}$ that satisfies the MIP condition, we are guaranteed that $A(\bm{s^0}) \geq \frac{1}{m} A(\bm{s})$ for any given strategy vector $\bm{s}$.

\begin{proposition}
\label{macbr}
Let $\bm{s^0}$ be a solution that satisfies the Minimum Intrinsic Preference condition. Then, any best-response dynamics starting from $\bm{s^0}$ results in a Nash equilibrium $\bm{s}$ such that the social welfare of $\bm{s^0}$ is at most $(2-\frac{1}{m})$ times that of $\bm{s}$.
\end{proposition}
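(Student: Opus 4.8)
The plan is to combine the exact potential function for the symmetric game, $\Phi(\bm{s}) = A(\bm{s}) + \frac{1}{2}P(\bm{s})$ (stated just before the proposition), with the Minimum Intrinsic Preference (MIP) property of the start state. Because $\Phi$ is an \emph{exact} potential, every best-response move strictly increases it, so the terminal Nash equilibrium $\bm{s}$ must satisfy $\Phi(\bm{s}) \geq \Phi(\bm{s^0})$. Spelling this out gives $A(\bm{s}) + \frac{1}{2}P(\bm{s}) \geq A(\bm{s^0}) + \frac{1}{2}P(\bm{s^0})$, which I would clear of the factor $\frac{1}{2}$ and rearrange to isolate the start-state coordination term as $P(\bm{s^0}) \leq 2A(\bm{s}) + P(\bm{s}) - 2A(\bm{s^0})$.

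First I would add $A(\bm{s^0})$ to both sides to obtain a direct bound on the social welfare of the start state, $u(\bm{s^0}) = A(\bm{s^0}) + P(\bm{s^0}) \leq 2A(\bm{s}) + P(\bm{s}) - A(\bm{s^0})$. The only place the MIP hypothesis enters is in controlling the negative term: MIP guarantees $A(\bm{s^0}) \geq \frac{1}{m}A(\bm{s})$, hence $-A(\bm{s^0}) \leq -\frac{1}{m}A(\bm{s})$, and substituting yields $u(\bm{s^0}) \leq (2 - \frac{1}{m})A(\bm{s}) + P(\bm{s})$.

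It then remains only to compare this against the target $(2-\frac{1}{m})\,u(\bm{s}) = (2-\frac{1}{m})A(\bm{s}) + (2-\frac{1}{m})P(\bm{s})$. The gap between the target and my bound is exactly $(1-\frac{1}{m})P(\bm{s})$, which is nonnegative since $m \geq 1$ forces $1-\frac{1}{m} \geq 0$ and all coordination weights are nonnegative so $P(\bm{s}) \geq 0$. This closes the statement.

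I do not anticipate a genuine obstacle here; the argument is short and the delicate points are purely bookkeeping---tracking the factor $\frac{1}{2}$ correctly when clearing it, and applying the MIP bound with the \emph{terminal} equilibrium $\bm{s}$ (legitimate, since MIP holds against every strategy vector). The one conceptual point worth emphasizing is that all of the slack comes from the mismatch between weighting $P$ by $\frac{1}{2}$ inside the potential versus by $1$ in the social welfare; this mismatch is precisely what produces the factor $(2-\frac{1}{m})$ and explains why the symmetric case cannot be improved below it.
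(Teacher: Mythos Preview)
Your proposal is correct and follows essentially the same approach as the paper: both combine the potential inequality $\Phi(\bm{s})\geq\Phi(\bm{s^0})$ with the MIP bound $A(\bm{s^0})\geq\frac{1}{m}A(\bm{s})$ and the decomposition $u=A+P$. The only cosmetic difference is that the paper writes $u(\bm{s})=2\Phi(\bm{s})-A(\bm{s})$ and finishes with $A(\bm{s})\leq u(\bm{s})$, whereas you keep $A$ and $P$ separate and finish with $P(\bm{s})\geq 0$; these are the same inequality in disguise.
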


What this implies is that best-response dynamics from certain ``good" starting points is guaranteed to terminate in Nash equilibria without much loss in social welfare. If the social optimum satisfies the MIP condition, then Proposition~\ref{macbr} tells us that there exists a Nash equilibrium whose social welfare is at least $\frac{1}{(2-\frac{1}{m})}$ times that of the optimum. We first prove the proposition and then show that the social optimum does indeed satisfy the MIP condition, which immediately proves Proposition~\ref{pos_symmetric}.

\noindent\emph{(Proof of Proposition~\ref{macbr})}\\
Let $\bm{s}$ be the Nash equilibrium in which the best-response dynamics from a solution $\bm{s^0}$ that obeys the MIP condition terminates. Since the value of the potential function cannot decrease after any best-response sequence, we have $\Phi(\bm{s}) > \Phi(\bm{s^0})$. This implies,
$$ u(\bm{s})= 2\Phi(\bm{s})-A(s)>2\Phi(\bm{s^0})-A(\bm{s}) = u(\bm{s^0})+A(\bm{s^0})-A(\bm{s}).$$Since $\bm{s^0}$ satisfies the MIP condition, we know that $A(\bm{s^0}) \geq \frac{1}{m} A(s)$. This gives us \begin{eqnarray}
& u(\bm{s}) & > u(\bm{s^0}) - (1-\frac{1}{m})A(\bm{s})\\
\implies & (2-\frac{1}{m})u(\bm{s})&> u(\bm{s^0}),
\end{eqnarray}
where the last step comes from the fact that $A(\bm{s}) \leq u(\bm{s})$. This completes the proof of the claim.

It is easy to see that $OPT$ does satisfy the MIP condition. First, we construct a solution $\bm{s_2}$ where all the players choose the same strategy, the one maximizing total utility. Since, there exist only $m$ strategies, we know that the utility players derive from that strategy is at least $\frac{1}{m}$ times that of the sum of utilities all players derive from all strategies, which is in turn greater than $A_T$, i.e., we have $u(\bm{s_2}) \geq \frac{1}{m}A_T + P_T$. Since $u(OPT) \geq u(\bm{s_2})$, we have $A(OPT) + P(OPT) \geq \frac{1}{m}A_T + P_T$. Note that $P(OPT)$ can never be greater than $P_T$ since $P_T$ includes all the edges between players in a given graph. This means that $A(OPT)$ has to be at least $\frac{1}{m}A_T$ for the inequality to hold. Now for any given solution $\bm{s'}$, $A(\bm{s'}) \leq A_T$, which finally gives us the desired result.
$$A(OPT) \geq \frac{1}{m}A_T \geq \frac{1}{m}A(\bm{s'}), $$ for all $\bm{s'}$.

We note here that this gap between the social optimum and the stable solution maximizing welfare occurs due to the fact that edges between players count only once in the potential function but twice towards the social welfare. The worst case does indeed occur when players choose to follow their most preferred strategy selfishly as opposed to being with their neighbors. We now give a family of examples that illustrate this inherent inefficiency that comes with stability and in the process show that the bound of $2-\frac{1}{m}$ is tight in the limiting case. 

\textbf{Example} Consider an instance with $m$ strategies and $m$ players where each player has a different preferred strategy from which she derives a utility of $w_i^{i} = r + \epsilon$. There is one player $p$ such that $w(i,p)=r$ for all other players $i$. The only Nash equilibrium has a perfectly fragmented society where each player chooses her preferred strategy, achieving a total welfare of $mr + m\epsilon$, whereas all players choose the same strategy in the social optimum which has a welfare of $(2m-1)r + \epsilon$. In the limiting case, as $\epsilon \to 0$, the ratio between the welfares approaches $2-\frac{1}{m}$.

\end{proof}
\subsection{Proofs from Section 4}
\begin{prop_app}{prop:1414approx}
The following algorithm returns a $\sqrt{2}$-Approximate Equilibrium for all instances with three strategies and no algorithm can give a better approximation factor over all instances as this ratio of $1.414$ is tight.
\end{prop_app}
\begin{enumerate}
\item Run \emph{Algorithm 1} for $m=2$ (for say strategies $1$ and $2$) and all players so that no player wants to deviate from one of those two strategies to the other. 
\item While there exists a player, who can deviate from either strategy $1$ or strategy $2$ and improve her utility by a factor $\sqrt{2}$, allow her to deviate to strategy $3$. Once that player deviates, run the algorithm for $m=2$ (for strategies $1$ and $2$) so that players choosing these two strategies are stable with respect to each other.
\item Repeat step 2 until no player can to deviate to 3 and increase her utility by a factor $\sqrt{2}$.
\end{enumerate}
\begin{proof}
Let the final solution be $\bm{s}$. We consider players choosing strategy $3$ in rounds. Suppose $i$ is the player choosing $3$ at some time (round) $t$, then let $s_{-i}^t$ be the strategy vector at time $t$ indicating the strategies of all players other than $i$. Let $X_k(\bm{s})$ be the set of players whose strategy is $k$ under $\bm{s}$. We have by definition, 

\begin{align*}
u_i(3,s_{-i}^t) & \geq  \sqrt{2}u_i(2,s_{-i}^t)\\
u_i(3,s_{-i}^t) & \geq  \sqrt{2}u_i(1,s_{-i}^t)
\end{align*}
In other words, player $i$ receives more utility by a factor $\sqrt{2}$ from strategy 3 than she does from either $1$ or $2$ at that time. Now observe that in the final solution, $X_1(\bm{s}) \cup X_2(\bm{s})$ cannot be more than 
the set of players present in strategies $1$ and $2$ at any time $t$ before termination. Therefore, for player $i$, both $u_i(1,\bm{s_{-i}})$ and $u_i(2,(\bm{s_{-i}}))$ are bounded from above by $u_i(1,s_{-i}^t) + u_i(2,s_{-i}^t)$. So we finally have the maximum utility $i$ can get by switching from strategy 3 to either 1 or 2 in $\bm{s}$,
$$u_i(1,s_{-i}^t) + u_i(2,s_{-i}^t) \leq \frac{2u_i(3,s_{-i}^t)}{\sqrt{2}}) \leq \sqrt{2}u_i(\bm{s})).$$

The last inequality comes from the fact that since the set of players in strategy $3$ only increases in size, $i$'s final utility is at least as much as the utility she received when she deviated to 3. By definition, the algorithm terminates when players in strategies $1$ and $2$ cannot improve their utility by more than a factor $\sqrt{2}$ by deviating to $3$, so the stability factor is trivially applicable for such players. Finally, players in strategy $1$ 
do not wish to deviate to $2$ and vice-versa by definition. This completes our proof.
\end{proof}

\begin{lem_app}{1shotbr_swanchor}
The \emph{1-shot $\alpha$-BR} algorithm, beginning with any starting strategy $k_0$, returns a solution whose social welfare is at least $\displaystyle\frac{A_T}{\alpha}$
\end{lem_app}
\begin{proof}
We use the same notation as before. Note that any player who performs her best-response gets utility at least $best(i)=\max_k{w^k_i}$, since she could always deviate to (or stay in) her preferred strategy. Her final utility is also at least $best(i)$ as player utilities do not decrease when more people choose the same strategy. Any player whose final strategy is $k_0$ cannot deviate and improve her utility by a factor more than $\alpha$, by definition. Therefore, her utility has to be at least $\frac{best(i)}{\alpha}$. Suppose the final strategy vector is $\bm{s}$, we then have $u(\bm{s}) = \sum_{i} u_i(\bm{s}) \geq  \sum_{i} \frac{best(i)}{\alpha} \geq \frac{A_T}{\alpha}$. \end{proof}

\begin{lem_app}{lemma_asymmetricsw}
The \emph{one-shot $\alpha$-BR} algorithm with $\bm{s_0}$ as the starting state (where all players choose a strategy $k_0$), results in a social welfare of at least $A(\bm{s_0}) + \frac{\alpha}{\gamma+1}P(s_0)$, where $\gamma$ is the Maximum Relationship imbalance.
\end{lem_app}
\begin{proof}
Consider a deviation from some intermediate state $\bm{s'}$ by player $i$, which improves her utility by a factor $\alpha$. Then, the original social welfare decreases by (not counting the new utility yet) at most $w_i^{s'_{i}} + \sum_{s'_j=s'_i}(\gamma_{ij}+\gamma_{ji})w(i,j)$. But since, by definition of $\gamma$, we have $\gamma_{ji} \leq \gamma.\gamma_{ij}$, we can now bound the decrease in utility by $w_i^{s'_{i}} + \sum_{s'_j=s'_i}(\gamma_{ij}+\gamma.\gamma_{ij})w(i,j) = w_i^{s'_{i}} + (\gamma+1)\sum_{s'_j=s'_i}\gamma_{ij}w(i,j)$. The increase in social welfare is at least the new utility of player $i$ which is a factor $\alpha$ times her old utility. Therefore this increase is at least, $\displaystyle \alpha(w_i^{s'_{i}} + \sum_{s'_j=s'_i}\gamma_{ij}w(i,j)).$

Therefore, in any deviation, for every unit of utility lost from a player's intrinsic preference, $\alpha$ units of utility are gained and for $(\gamma+1)$ units of utility from player coordination lost, $\alpha$ units are gained. Also observe that once a player deviates to a strategy, she cannot change after that, so players' utilities never decrease. It is not hard to verify that the maximum social welfare is lost when all the welfare lost comes from the 
player coordinations before a player switched from $k_0$. Since the maximum utility due to coordinations in $\bm{s_0}$ is $P(\bm{s_0})$, the resulting utility gain from the deviations is at least $\frac{\alpha P(\bm{s_0})}{\gamma +1}$. In the worse case, the utility due to intrinsic preference that players in $\bm{s_0}$ is retained in $\bm{s}$, giving us the desired social welfare of $A(\bm{s_0}) + \frac{\alpha}{\gamma+1}P(\bm{s_0})$. 
\end{proof}

\begin{thm_app}{theorem_approx}
The following algorithm returns an $\alpha$-Approximate Equilibrium for $\alpha \in [\phi, 2]$ whose social welfare is at least a fraction $\displaystyle \frac{\alpha - 1}{1 + \frac{(\gamma+1)}{\alpha}(\alpha - (1 + \frac{1}{m}))}$ of the optimum when $(\gamma+1) \leq m\alpha$, and $\max(\frac{\alpha}{\gamma+1}, \frac{\alpha-1}{(m-1) + (\alpha-1)})$ times OPT otherwise.
\end{thm_app}

\begin{quote}
\textbf{Algorithm:} ``For a given $\alpha$, run \emph{One-shot $\alpha$-BR} and \emph{One-shot $\frac{1}{\alpha-1}$-BR} with $k^*$ as the starting strategy. Let the returned solutions be $\bm{s_1}$ and $\bm{s_2}$. Return the solution among these two with greater social welfare."
\end{quote}

\begin{proof}
We begin by establishing that both the procedures in the algorithm result in a $\alpha$-Approximate Equilibrium. Applying Lemma~\ref{1shotbr_stabilityfactor} to the solution returned by the \emph{1-shot $\frac{1}{\alpha-1}$-BR} procedure, we get that its stability factor is $\max(\frac{1}{\alpha-1}, \frac{1}{\frac{1}{\alpha-1} }+1)$, that is $\max(\frac{1}{\alpha-1}, \alpha)$. $\bm{s_1}$ is, therefore, an $\alpha$-approximate Nash equilibrium as $\alpha \geq \frac{1}{\alpha-1}$ for $\alpha \in [\phi,2]$. Similarly, applying the lemma to $\bm{s_2}$, we get that its stability factor is $\max(\alpha, \frac{1}{\alpha}+1)$, which is also $\alpha$ for the given range. Since both $\bm{s_1}$ and $\bm{s_2}$ are $\alpha$-Approximate Equilibrium, one can make the same conclusion for $\bm{s}$, since it is the best of the two solutions.

First, we claim that the total social welfare of the solution where all players choose the strategy $k^*$ is at least $\frac{A_T}{m} + P_T$. Recall that $A_T$ is the utility when each player chooses her favorite strategy. Since the total intrinsic utility derived when players choose $k^*$ is at least as much as the intrinsic utility from any other strategy and since there are only a total of $m$ strategies, we conclude that $A(\bm{s_0})=\frac{A_T}{m}$, where $\bm{s_0}$ is the strategy under which all players choose $k^*$.  $P(\bm{s_0})=P_T$ since all players are present in the same strategy. Now, applying Lemma~\ref{1shotbr_swanchor} to our algorithm's $\alpha$-BR procedure (where the starting state is $k^*$), we get that the final social welfare of the solution, $u(\bm{s})$, is at least $\frac{A_T}{m} + \frac{\alpha}{\gamma+1}P_T$. Applying Lemma~\ref{1shotbr_swanchor} to our \emph{One-shot $\frac{1}{\alpha-1}$-BR} algorithm, we get that our computed solution has a social welfare of at least $(\alpha-1)A_T$. Since our hybrid algorithm returns the best of these two solutions, we have

$$u(s) \geq \max((\alpha-1)A_T, \frac{A_T}{m} + \frac{\alpha}{\gamma+1}P_T).$$ Let us define $A_T = y(A_T + P_T)$, for some $y \leq 1$. Then, we get,
\begin{equation}
\label{eqn_socialwelfareasymmetric}
u(\bm{s}) \geq \max\left((\alpha-1)y,  \frac{\alpha}{\gamma+1} +  y(\frac{1}{m} - \frac{\alpha}{\gamma +1})\right)OPT
\end{equation}
First, let's assume that $\frac{\alpha}{\gamma+1} \geq \frac{1}{m}$. Then, observe that the first term in the parenthesis is increasing in $y$ and the second term is decreasing. So, the maximum of the two terms is minimized when they are equal, i.e., $(\alpha-1)y = \frac{\alpha}{\gamma+1} +  y(\frac{1}{m} - \frac{\alpha}{\gamma +1})$ or $y = \frac{1}{1 + \frac{(\gamma+1)}{\alpha}(\alpha - (1 + \frac{1}{m}))}$. So we have $u(\bm{s})$ is always greater than $y (\alpha-1)OPT = \frac{OPT}{1 + \frac{(\gamma+1)}{\alpha}(\alpha - (1 + \frac{1}{m}))}$.

Now, for the case when $\frac{\alpha}{\gamma+1} < \frac{1}{m}$, our results show that $u(\bm{s}) \geq \max(\frac{\alpha}{\gamma+1}, \frac{\alpha-1}{(m-1) + (\alpha-1)})$. The first term simply follows from equation~\ref{eqn_socialwelfareasymmetric} from which we note that when $\frac{\alpha}{\gamma+1} < \frac{1}{m}$, $u(\bm{s}) \geq \frac{\alpha}{\gamma+1}.$ In order to show the second term, we prove a general result that is applicable for all instances, namely that the social welfare is bounded from below by $\approx \frac{1}{m}$.
%

\begin{lemma}
\label{asymmetric_swlowerbound}
The \emph{one-shot $\frac{1}{\alpha-1}$-BR} algorithm beginning from any starting state results in a social welfare which is at least a fraction $\frac{\alpha -1}{(m-1) + (\alpha -1)}$ times OPT.
\end{lemma}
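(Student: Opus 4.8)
The plan is to prove the equivalent statement $OPT \le \frac{(m-1)+(\alpha-1)}{\alpha-1}\,u(\bm{s})$, where $\bm{s}$ is the solution returned by \emph{One-shot $\frac{1}{\alpha-1}$-BR} and $k_0$ is its starting strategy. Writing $\beta=\frac{1}{\alpha-1}$, Lemma~\ref{1shotbr_stabilityfactor} already tells us $\bm{s}$ is an $\alpha$-approximate equilibrium, but I would extract the \emph{stronger} guarantee that its proof provides for the players who never left $k_0$: for such a player $i$ and every strategy $k$ we have $u_i(k,s_{-i}) \le \beta\,u_i(\bm{s}) = \frac{1}{\alpha-1}u_i(\bm{s})$, since at termination she declined every $\beta$-improving move. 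This strong bound, rather than the uniform $\alpha$-bound, is what drives the exact constant.

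The first key step is to over-estimate $OPT$ by a sum of unilateral deviations that recovers \emph{all} of the coordination welfare simultaneously. Concretely,
\[
OPT \le A_T + P_T \le \sum_{i}\sum_{k=1}^m u_i(k,s_{-i}),
\]
because $\sum_i\sum_k u_i(k,s_{-i}) = \sum_i\sum_k w_i^k + P_T$: every relationship $w(i,j)$ is counted exactly once (in player $i$'s term at $k=s_j$), so the double sum recovers the full $P_T$ instead of only the coordination realized inside the groups of $\bm{s}$. This is precisely the device that sidesteps the loss incurred by the single-deviation (semi-smoothness) estimate of Proposition~\ref{prelim_poa}, which is too weak here.

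The second step is to bound the inner sum $\sum_{k} u_i(k,s_{-i})$ by $\bigl(1+\frac{m-1}{\alpha-1}\bigr)u_i(\bm{s})$ for each $i$; summing over $i$ then gives exactly the target. For a player who stayed at $k_0$ this is immediate from the strong bound: the $m-1$ off-strategy terms each cost at most $\frac{1}{\alpha-1}u_i(\bm{s})$ and the on-strategy term costs $u_i(\bm{s})$. The monotonicity built into the algorithm (the group $k_0$ only shrinks while every other group only grows) also furnishes the strong bound for the one deviation back to $k_0$ made by a player who left: at her departure time $t$ she had $u_i(s_i,s^t_{-i}) \ge \beta\,u_i(k_0,s^t_{-i})$, so $u_i(k_0,s_{-i}) \le u_i(k_0,s^t_{-i}) \le \frac{1}{\beta}u_i(\bm{s}) = (\alpha-1)u_i(\bm{s}) \le \frac{1}{\alpha-1}u_i(\bm{s})$ since $\alpha\le 2$.

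The main obstacle is the remaining cross-deviations of a player who left $k_0$: for the $m-2$ strategies $k'\notin\{k_0,s_i\}$ I only have the weaker $\alpha$-approximate-equilibrium bound $u_i(k',s_{-i})\le \alpha\,u_i(\bm{s})$, and because $\alpha \ge \frac{1}{\alpha-1}$ once $\alpha\ge\phi$, a naive per-player accounting overshoots the target whenever $\alpha>\phi$. The resolution I would pursue is to spend the surplus these players created: a player leaves $k_0$ only on a strictly $\frac{1}{\alpha-1}$-improving best response, which inflates her own final utility and, since coordination is two-sided, also raises $u(\bm{s})$ globally relative to the all-at-$k_0$ starting configuration. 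I would therefore argue that the extremal instance is the one in which \emph{no} player ever leaves $k_0$ — there the strong bound applies uniformly and $\sum_i\sum_k u_i(k,s_{-i}) \le (1+\frac{m-1}{\alpha-1})u(\bm{s})$ holds against the all-start state — and that each subsequent deviation only slackens the inequality. Converting this ``monotonicity in the number of deviations'' intuition into a clean amortized bound is the delicate part of the argument.
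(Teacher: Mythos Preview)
Your overall strategy --- upper-bound $OPT$ by $\sum_i\sum_k u_i(k,s_{-i})$ and then bound the inner sum per player --- is exactly the paper's. For players who never left $k_0$ your argument is identical to the paper's. The difference, and the gap in your proposal, is the treatment of players who \emph{did} leave $k_0$: the obstacle you describe is illusory, and the amortized ``monotonicity in the number of deviations'' argument you sketch is unnecessary.

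Here is the point you are missing, and it is already implicit in your Step~1. For each fixed player $i$, the quantity $\sum_{k=1}^m u_i(k,s'_{-i})$ does \emph{not} depend on the profile $s'$: it always equals $\sum_k w_i^k + \sum_{j\neq i}\gamma_{ij}w(i,j)$, because every neighbour $j$ is picked up exactly once, at $k=s'_j$, regardless of what $s'_j$ actually is. You used this globally to get $P_T$; use it locally. For a player $i$ who deviated at time $t$, evaluate the sum at $s^t$ instead of at the final $s$: since $s_i$ was her \emph{best response} at time $t$, $u_i(s_i,s^t_{-i})\ge u_i(k,s^t_{-i})$ for every $k$, hence
\[
m\,u_i(\bm{s}) \;\ge\; m\,u_i(s_i,s^t_{-i}) \;\ge\; \sum_{k=1}^m u_i(k,s^t_{-i}) \;\ge\; u_i(OPT).
\]
So such a player satisfies $u_i(\bm{s})\ge \tfrac{1}{m}\,u_i(OPT)$, and $\tfrac{1}{m}\ge \tfrac{\alpha-1}{(m-1)+(\alpha-1)}$ precisely when $\alpha\le 2$. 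This is the paper's one-line resolution; no cross-deviation bounds at the final profile, no surplus-spending, and no extremal-instance argument are needed.
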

\begin{proof}Consider the outcome of \emph{One-Shot $\frac{1}{\alpha-1}$}-BR algorithm. Any player who deviated from their initial strategy received at least a fraction $\frac{1}{m}$ times their utility in OPT as they performed best-response and their utility cannot decrease after that round. Players still present in the initial state $k_0$ in $\bm{s}$ receive at least $\frac{\alpha-1}{(m-1)+(\alpha-1)}$ times their utility in $OPT$. To show this, observe that for any such $i$, $u_i(\bm{s}) \geq (\alpha-1)u_i(s'_i,s_{-i})$ for $s'_i \neq s_i$ and $u_i(\bm{s}) = u_i(s_i,s_{-i})$. Summing up these inequalities and the last equality, we have $\displaystyle ((m-1)+(\alpha-1))u_i(\bm{s}) \geq (\alpha-1)u_i(OPT)$, which gives our desired fraction. Now since every player is receiving at least the fraction $\frac{\alpha-1}{(m-1)+(\alpha-1)}$ of her utility in $OPT$, we have that the social of our solution is also the same fraction of the optimum solution welfare. This gives a lower bound on the social welfare of the solution returned by our algorithm for all cases and completes the proof of the theorem and sub-lemma.
\end{proof}
\end{proof}

\begin{prop_app}{claim_seexistence}
There are instances of the symmetric Social Coordination Game with $m=3$ groups where Strong equilibrium does not exist. However, whenever Strong equilibrium exists, its social welfare is no less than half of OPT when player relationships are symmetric.
\end{prop_app}
\begin{proof}
The second part of the claim also applies for the symmetric SCG and technique we use is quite common in the Strong Price of Anarchy literature. Let $\bm{s}$ be any Strong Nash equilibrium(SNE) and $\bm{s^*}$ denote the optimal solution. Consider a group deviation from $\bm{s}$ by the set of players whose strategy is $k$ in OPT ($X_k(\bm{s^*})$). Suppose these players deviate to strategy $k$ from $\bm{s}$ forming a new strategy vector $\bm{s'}$. Since $\bm{s}$ is a SNE, in any group deviation there must be at least one player whose utility cannot increase. Let this player be $i$. Then, $i$ must have received at least as much utility in $\bm{s}$ as she does after deviating to $k$ along with the players in $X_k(\bm{s^*})$. This means that $u_i(\bm{s}) \geq u_i(\bm{s'}) \geq u_i(\bm{s^*})$, where the last inequality comes from the fact that in $\bm{s'}$, $X_k(\bm{s^*}) \subseteq X_k(\bm{s'})$.

Now consider a deviation by all players in $X_k(\bm{s^*})$ to strategy $k$ other than player $i$. Once again, $\exists$ a player $j$ such that $j$ receives more utility from strategy $s_j$ than from the set of players in $X_k(\bm{s^*}) - \{ i \}$. So we get the inequality $u_j(\bm{s}) \geq u_j(\bm{s^*}) - \frac{w(i,j)}{2}$, since $\gamma_{ij}=\gamma_{ji}=1$. If we again look at a deviation to strategy $k$ by all players other than $i$ and $j$, there must be one player whose utility after this deviation is not more than the utility before deviating in $\bm{s}$, giving us a similar inequality as above. We therefore, repeat this process considering one less player each time, and sum over all the inequalities obtained. It is not hard to see that in the left hand side, we get $\sum_{i \in X_k(\bm{s^*})} u_i(\bm{s})$ and the right hand side $\sum_{i;s^*_i=k}w_i^{k} + 0.5\sum_{s^*_i=s^*_j=k}w(i,j)$, which is at least $\sum_{i \in X_k(\bm{s^*})} \frac{u_i(\bm{s^*})}{2}$. This is because each edge among the players of $X_k(\bm{s^*})$ gets summed up only once (when one of the players leave). We now repeat process for all strategies $k$ and set of players $X_k(\bm{s})$, and sum over all the sets

Now summing these inequalities over all players and all strategies, we get
$$\sum_{i} u_i(\bm{s}) \geq \sum_{i} \frac{u_i(\bm{s^*)}}{2},$$ which gives us our desired bound on the Strong Price of Anarchy.
\end{proof}

\begin{prop_app}{prop_se2app}
For $\alpha \geq 1$, the \emph{One-shot $\alpha$-BR} algorithm returns a $(1+\alpha)$-Approximate Strong Equilibrium.
\end{prop_app}
\begin{proof}

We already know by Lemma~\ref{1shotbr_stabilityfactor} that any deviation by a single player cannot increase her utility by a factor greater than $\alpha$. It is not immediately clear why this factor holds for group deviations. As mentioned previously, let $V_1$ refer to the players who deviated from the initial state $k^*$ and $V_2$ be the players whose strategies under $\bm{s}$ are $k^*$. We first claim that in any deviation by a set of players $S$ that includes at least one player from $V_1$, the first player in $V_1 \cap S$ to have originally deviated from $k^*$ cannot improve her utility by more than a factor $(1+\frac{1}{\alpha})$.

Consider any group deviation by players $S \subseteq V$ which contains at least one player from $V_1$. Let the new strategy vector be $\bm{s_2}$. Let $i$ be the player in $S \cap V_1$ who deviated from $k^*$ before all the other players in $S$, say at time $t$. Since all the other players in $S$ had $k^*$ as their strategy at time $t$, they must have belonged to $X^t_{k^*}$. Suppose $i$'s strategy in $\bm{s_2}$ is $k$, then the utility that $i$ gets after deviating is at most $u_i(k,s^t_{-i}) + u_i(k^*,s^t_{-i})$. But this is not greater than $u_i(s_i,s^t_{-i}) + \frac{1}{\alpha}u_i(s_i,s^t_{-i})$, where the first term appears because $s_i$ is $i$'s best response and the second is because the new utility after deviating is at least $\alpha$ times the old utility which was $u_i(k^*,s^t_{-i})$. Therefore, we have $u_{i}(\bm{s_2}) \leq (1+\frac{1}{\alpha})u_{i}(\bm{s})$.

Now suppose $S \subseteq V_2$, i.e. $S \cap V_1 = \emptyset$. We know
\begin{equation}
\label{equation_v2utility}
u_i(\bm{s_2}) \leq u_i(k,s_{-i}) + u_i(k^*,s_{-i})
\end{equation}
for some $i \in S$, whose strategy under $\bm{s_2}$ is  $k$. This is true because the all the new players present in strategy $k$ under $\bm{s_2}$ must have all been players whose strategy in $\bm{s}$ was $k^*$. But $\forall i \in V_2$, we know that the utility these players originally received when their strategy was $k^*$ could not have improved by more than a factor $\alpha$ after deviation, if they had deviated individually. Applying this observation to equation~\ref{equation_v2utility}, we have $u_i(\bm{s_2}) \leq u_i(k,s_{-i}) + u_i(k^*,s_{-i}) \leq \alpha u_i(\bm{s}) + u_i(\bm{s})$. Therefore, in any group deviation from $\bm{s}$, there is at least one player whose utility increases by a factor no more than $(1+\alpha)$ for the solution returned by the \emph{One-shot $\alpha$-BR} algorithm. 

Specifically for $\alpha=1$, we get a $2$-Approximate Strong Equilibrium for every instance. Using the Sub-Lemmas of Theorem~\ref{theorem_approx}, we immediately get that this solution gives us a guarantee of at least half the social welfare of Theorem~\ref{theorem_approx}. 
\end{proof}

\subsection{Proofs from Section 5}

\subsubsection*{Lexicographic Ordering}

We borrow the following definition from~\cite{harks2009strong}.

\begin{definition}
Let $\pi(\bm{s_1})$ and $\pi(\bm{s_2})$ be the potential vector for two states $\bm{s_1}, \bm{s_2}$ and denote by $\tilde{\pi}(\bm{s_1}), \tilde{\pi}(\bm{s_2})$ be the sorted potential vectors respectively,  derived by sorting the entries of the actual vector in a non-increasing order, that is $\tilde{\pi}_1(\bm{s_1}) \geq  \cdots \geq \tilde{\pi}_m(\bm{s_1})$ and similarly for $\tilde{\pi}(\bm{s_2})$. Then $\pi(\bm{s_1})$ is strictly sorted lexicographically smaller than $\pi(\bm{s_2})$ ($\pi(\bm{s_1}) \prec \pi(\bm{s_2}))$ if $\exists$ an index $k$ such that $\tilde{\pi}_l(\bm{s_1}) = \tilde{\pi}_l(\bm{s_2})$ $\forall l < k$ and $\tilde{\pi}_k(\bm{s_1}) < \tilde{\pi}_k(\bm{s_2})$. Similarly, we say $\pi(\bm{s_1})$ is sorted lexicographically smaller than $\pi(\bm{s_2})$ ($\pi(\bm{s_1}) \preceq \pi(\bm{s_2}))$ when either $\pi(\bm{s_1}) \prec \pi(\bm{s_2})$ or $\pi(\bm{s_1}) = \pi(\bm{s_2})$.
\end{definition}

\end{document}